\newif\ifprocs
\renewcommand{\paragraph}[1]{\subparagraph{#1}}
\newtheorem*{theorem*}{Theorem}
\newtheorem{question}{Question}
\newtheorem{fact}[theorem]{Fact}
\newtheorem{theorem}{Theorem}[section]
\newtheorem*{theorem*}{Theorem}
\newtheorem{conjecture}{Conjecture}
\newtheorem{lemma}[theorem]{Lemma}
\newtheorem{claim}[theorem]{Claim}
\newtheorem{corollary}[theorem]{Corollary}
\newtheorem{proposition}[theorem]{Proposition}
\newcommand{\ED}{\operatorname{ED}}
\DeclareMathOperator*{\EX}{{\mathbb E}}
\DeclareMathOperator{\poly}{poly}
\DeclareMathOperator{\polylog}{polylog}
\DeclareMathOperator{\pprev}{prev}
\DeclareMathOperator{\nnext}{next}
\DeclareMathOperator{\supp}{supp}
\DeclareMathOperator{\len}{len}
\DeclareMathOperator{\cost}{cost}
\DeclareMathOperator*{\argmin}{arg\,min}
\newcommand\deq{\stackrel{\mathrm{dist}}{=}}
\newcommand{\med}{x_\textrm{med}}
\newcommand{\m}{\textrm{med}}
\newcommand{\N}{\mathbb{N}}
\newcommand{\I}{\mathcal{I}}
\newcommand{\J}{\mathcal{J}}
\newcommand{\mE}{\mathcal{E}}
\newcommand{\tI}{\mathcal{\tilde{I}}}
\newcommand{\obj}{\texttt{Obj}}
\newcommand{\opt}{\texttt{OPT}}
\providecommand{\eqdef}{\coloneqq}
\providecommand{\set}[1]{{\{#1\}}}
\providecommand{\BIGset}[1]{{\Big\{#1\Big\}}}
\def\compactify{\itemsep=0pt \topsep=0pt \partopsep=0pt \parsep=0pt}
\newcommand{\colnote}[3]{\textcolor{#1}{$\ll$\textsf{#2}$\gg$\marginpar{\tiny\bf #3}}}
\newcommand{\rnote}[1]{\colnote{red}{#1--Robi}{RK}}
\title{Approximate Trace Reconstruction via Median String (in Average-Case)}
\author{Diptarka Chakraborty}{National University of Singapore }{diptarka@comp.nus.edu.sg}{}{[Work partially supported by NUS ODPRT Grant, WBS No. R-252-000-A94-133.]}
\author{Debarati Das}{Basic Algorithm Research Copenhagen (BARC), University of Copenhagen}{debaratix710@gmail.com}{}{}
\author{Robert Krauthgamer}{Weizmann Institute of Science}{robert.krauthgamer@weizmann.ac.il}{}{[Work partially supported by ONR Award N00014-18-1-2364, the Israel Science Foundation grant \#1086/18, and a Minerva Foundation grant,
  and by the Israeli Council for Higher Education (CHE) via the Weizmann Data Science Research Center.]}
\authorrunning{D.~Chakraborty and D.~Das and R.~Krauthgamer} 
\author{%
Diptarka Chakraborty%
\thanks{National University of Singapore.
  Work partially supported by NUS ODPRT Grant, WBS No. R-252-000-A94-133.
    Email: \texttt{diptarka@comp.nus.edu.sg}
  }
\and
Debarati Das%
\thanks{Basic Algorithm Research Copenhagen (BARC), University of Copenhagen
        Email: \texttt{debaratix710@gmail.com}
}
\and
Robert Krauthgamer%
  \thanks{Weizmann Institute of Science.
    Work partially supported by ONR Award N00014-18-1-2364, the Israel Science Foundation grant \#1086/18, and a Minerva Foundation grant,
    and by the Israeli Council for Higher Education (CHE) via the Weizmann Data Science Research Center. 
    Email: \texttt{robert.krauthgamer@weizmann.ac.il}
  }
}
\keywords{Trace Reconstruction, Approximation Algorithms, Edit Distance, String Median} 
\begin{document}

\maketitle

\thispagestyle{empty}
\setcounter{page}{0}

\begin{abstract}
We consider an \emph{approximate} version of the trace reconstruction problem,
where the goal is to recover an unknown string $s\in\{0,1\}^n$ 
from $m$ traces 
(each trace is generated independently by passing $s$ through a probabilistic 
insertion-deletion channel with rate $p$). 
We present a deterministic near-linear time algorithm for the average-case model, 
where $s$ is random, that uses only \emph{three} traces.
It runs in near-linear time $\tilde O(n)$
and with high probability reports a string 
within edit distance $O(\epsilon p n)$ from $s$ for $\epsilon=\tilde O(p)$,
which significantly improves over the straightforward bound of $O(pn)$.

Technically, our algorithm computes a $(1+\epsilon)$-approximate median
of the three input traces. 
To prove its correctness, our probabilistic analysis shows 
that an approximate median is indeed close to the unknown $s$.
To achieve a near-linear time bound, we have to bypass the well-known 
dynamic programming algorithm that computes an optimal median in time $O(n^3)$.
\end{abstract}

\newpage

\section{Introduction}
\label{sec:intro}

\paragraph{Trace Reconstruction.}
One of the most common problems in statistics is to estimate an unknown parameter from a set of noisy observations (or samples). The main objectives are (1) to use as few samples as possible, (2) to minimize the estimation error, and (3) to design an efficient estimation algorithm. 
One such parameter-estimation problem is \emph{trace reconstruction}, 
where the unknown quantity is a string $s \in \Sigma^n$,
and the observations are independent \emph{traces}, 
where a trace is a string that results from $s$ passing through some noise channel. The goal is to reconstruct $s$ using a few traces. (Unless otherwise specified, in this paper we consider $\Sigma=\{0,1\}$.) Various noise channels have been considered so far. 
The most basic one only performs substitutions. 
A more challenging channel performs deletions. Even more challenging is the \emph{insertion-deletion} channel, 
which scans the string $s$ 
and keeps the next character with probability $1-p$, 
deletes it with probability $p/2$, 
or inserts a uniformly randomly chosen symbol (without processing the next character) with probability $p/2$,
for some noise-rate parameter $p\in [0,1)$. 
We denote this insertion-deletion channel by $R_p(s)$,
see Section~\ref{sec:prob-model} for a formal definition.%
\footnote{In the literature, an insertion-deletion channel with different probabilities for insertion and for deletion has been studied. For simplicity in exposition, we consider a single error probability throughout this paper, however our results can easily be generalized to different insertion and deletion probabilities. Another possible generalization is to allow substitutions along with insertions and deletions. Again, for simplicity, we do not consider substitutions, but with slightly more careful analysis our results could be extended.
}

The literature studies mostly two variants of trace reconstruction.
In the \emph{worst-case} variant, the unknown string $s$ is an arbitrary string from $\Sigma^n$,
while in the \emph{average-case} variant, $s$ is assumed to be drawn uniformly at random from $\Sigma^n$. 
The trace reconstruction problem finds numerous applications in computational biology, DNA storage systems, coding theory, etc. Starting from early 1970s~\cite{Kal73}, various other versions of this problem have been studied, 
including combinatorial channels~\cite{levenshtein2001efficient, levenshtein2001efficient2}, 
smoothed complexity~\cite{CDLSS21}, 
coded trace reconstruction~\cite{cheraghchi2020coded}, 
and population recovery~\cite{ban2019beyond, narayanan2020population}. 

We focus on the average-case variant, 
where it is known that $\exp(O(\log^{1/3} n))$ samples suffice to reconstruct a (random) unknown string $s$ over the insertion-deletion channel~\cite{HPP20}.
On the other hand, a recent result~\cite{C20b} showed that $\tilde{\Omega}(\log^{5/2}n)$ samples are necessary, improving upon the previous best lower bound of $\tilde{\Omega}(\log^{9/4}n)$~\cite{HL20}. 
We emphasize that all these upper and lower bounds are for \emph{exact} trace reconstruction, i.e., for recovering the unknown string $x$ perfectly (with no errors). 
A natural question proposed by Mitzenmacher~\cite{mitzenmacher2009survey} is whether such a lower bound on the sample complexity can be bypassed by allowing approximation,
i.e., by finding a string $z$ that is "close" to the unknown string $s$. 
One of the most fundamental measures of closeness between a pair of strings $z$ and $z'$, is their \emph{edit distance}, denoted by $\ED(z,z')$ and defined as the minimum number of insertion, deletion, and substitution operations needed to transform $z$ into $z'$. 
Observe that a trace generated from $s$ via an insertion-deletion channel $G_p$ has expected edit distance about $pn$ from the unknown string $s$
(see Section~\ref{sec:unique-edit}). We ask how many traces (or samples) are required to construct a string $z$ at a much smaller edit distance from the unknown $s$. 
(Since the insertion-deletion channel has no substitutions, we also do not consider substitutions in our analysis of the edit distance, however the results probably hold also when both allow also substitutions.)

A practical application of average-case trace reconstruction is in the portable DNA-based data storage system. In the DNA storage system~\cite{GBCDLSB13, RMRAJY17}, a file is preprocessed by encoding it into a DNA sequence. This encoded sequence is randomized using a pseudo-random sequence, and thus the final encoding sequence could be treated as a (pseudo-)random string. The stored (encoded) data is retrieved using next-generation sequencing (like single-molecule real-time sequencing (SMRT)~\cite{RCS13} that involves $12-18\%$, which generates several noisy copies (traces) of the stored data via some insertion-deletion channel. The final step is to decode back the stored data with as few traces as possible. Currently, researchers use multiple sequence alignment algorithms to reconstruct the trace~\cite{yazdi2017portable, organick2018random}. Unfortunately, such heuristic algorithms are notoriously difficult to analyze rigorously to show a theoretical guarantee. However, the preprocessing step also involves error-correcting code to encode the strings. Thus it suffices to reconstruct the original string up to some small error (depending on the error-correcting codes used). This specific application gives one motivation to study approximate trace reconstruction.

Our main contribution is to show that it is sufficient to use only three traces to reconstruct the unknown string up to a small edit error. 
\begin{theorem}
\label{thm:main}
There is a constant $c_0>0$ and a deterministic algorithm that, given as input
a noise parameter $p \in (0,c_0]$,
and three traces from the insertion-deletion channel $R_p(s)$
for a uniformly random (but unknown) string $s\in \{0,1\}^n$, 
outputs in time $\tilde{O}(n)$ a string $z$ that satisfies 
$\Pr [\ED(s,z) \le O(p^2 \log(1/p) n) ] \geq 1-n^{-1}$.
\end{theorem}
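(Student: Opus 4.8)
The plan is to establish Theorem~\ref{thm:main} in two independent parts: a \emph{structural} claim that a $(1+\epsilon)$-approximate median of the three traces is close in edit distance to $s$ (for $\epsilon=\tilde O(p)$), and an \emph{algorithmic} claim that such an approximate median can be computed in time $\tilde O(n)$. Combining them gives a deterministic algorithm whose output $z$ satisfies $\ED(s,z)\le O(\epsilon p n)=O(p^2\log(1/p)n)$ with high probability over the randomness of $s$ and the channel.

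For the structural part, first I would set up notation: let $s\in\{0,1\}^n$ be the (random) source and let $t_1,t_2,t_3=R_p(s)$ be the three independent traces. Each $t_i$ is obtained from $s$ by a sequence of independent single-character edits, so $\EX[\ED(s,t_i)]\approx pn$, and a Chernoff/Azuma argument shows $\ED(s,t_i)\le (1+o(1))pn$ for all $i$ with high probability; hence $\opt\eqdef\min_w\sum_i\ED(w,t_i)\le\sum_i\ED(s,t_i)\le(3+o(1))pn$, so any $(1+\epsilon)$-approximate median $z$ has $\sum_i\ED(z,t_i)\le(3+O(\epsilon))pn$. By the triangle inequality this only yields $\ED(s,z)\le O(pn)$, which is not good enough — the whole point is to beat $O(pn)$. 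The key idea must be that a string minimizing the sum of edit distances to the three traces cannot ``pay'' for an edit of $s$ unless that edit is corroborated by the traces: an error introduced in a single trace $t_i$ appears in only one of the three, so aligning $z$ to it costs roughly as much as it saves. Concretely, I would analyze the edit operations used to produce the traces as (mostly) isolated events separated by long runs of correctly-copied characters (this uses the average-case assumption: random $s$ has no long repeats, so alignments are essentially forced), classify positions of $s$ by how many of the three traces deleted/inserted near them, and argue that the median $z$ can deviate from $s$ only at positions where at least two traces agree on a change, an event of probability $O(p^2\log(1/p))$ per position. Summing over the $n$ positions and using concentration gives $\ED(s,z)\le O(p^2\log(1/p)n)$ with probability $\ge 1-n^{-1}$. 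This probabilistic/combinatorial analysis of where a sum-of-edit-distances minimizer can profitably differ from $s$ is the main obstacle, since it requires controlling how local alignment ambiguities (short repeated substrings, adjacent errors) could let $z$ cheat the bound; I expect to handle this by a case analysis on the pattern of errors in a window and a union bound over the few bad patterns.

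For the algorithmic part, the obstacle is that the exact median of three strings is computed by a dynamic program over the product of three edit-distance instances in time $O(n^3)$, which is far too slow. Instead I would compute $z$ by a hierarchical / divide-and-conquer scheme: pairwise-align two of the traces to obtain anchor positions where all three agree (again using that random strings have no long common substrings except the genuine correspondences), which with high probability partitions the three traces into $\Omega(n/\polylog n)$ aligned blocks each of size $\polylog n$; within each block run the exact $O(k^3)$ cubic median dynamic program on blocks of size $k=\polylog n$, for a total of $\tilde O(n)$; finally concatenate the block medians. One must show (i) the anchors can be found in $\tilde O(n)$ (e.g.\ via a greedy scan matching long runs, or by computing longest common extensions), (ii) the blocks are short enough with high probability (no block of length $\omega(\polylog n)$), and (iii) concatenating per-block optimal medians yields a global $(1+\epsilon)$-approximate median — this last point follows because the true median also essentially respects the anchors (deviating from them would cost more than $\epsilon pn$), so restricting to anchor-respecting solutions loses only a $(1+\epsilon)$ factor. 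The interaction between these two parts — the anchors used by the algorithm must be exactly the structural features used in the probabilistic analysis — is what makes the argument go through, so I would prove a single ``block decomposition'' lemma about $R_p$ on random $s$ and use it in both parts.

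Putting it together: the block decomposition lemma gives, with probability $\ge 1-n^{-1}$, short aligned blocks; on this event the algorithm runs in $\tilde O(n)$ and outputs a $(1+\epsilon)$-approximate median $z$; the structural analysis, conditioned on the same event, gives $\ED(s,z)\le O(p^2\log(1/p)n)$; a final union bound over the few low-probability events (long blocks, anomalously many errors, long repeats in $s$) keeps the overall failure probability at $n^{-1}$. The algorithm is deterministic because all randomness is in the input $s$ and the channel; the algorithm itself only does deterministic alignment and dynamic programming.
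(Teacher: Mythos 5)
Your two-part plan (a structural claim that any $(1+\epsilon)$-approximate median of the three traces is $O(\epsilon p n)$-close to $s$, plus a near-linear algorithm that computes such an approximate median block-by-block) is exactly the architecture the paper uses, and your algorithmic part in particular matches the paper's Section~5 almost exactly: fixed-size anchors inside $\Theta(\polylog n)$ blocks, local alignment to find the corresponding windows in the other two traces, per-block cubic median DP, and concatenation. The one implementation difference is that the paper does \emph{not} pairwise-align two traces to find anchors; instead it partitions $s_1$ into arbitrary $O(\log^2 n)$-size chunks, takes the middle $\log^2 n$ characters of each chunk as the anchor, and finds its best match inside a small $O(\log^2 n)$-length window of $s_2$ and $s_3$ via a $\polylog$-time approximate pattern matching routine; this sidesteps the $\tilde O(n)$-alignment-subroutine you'd need and gives the $\tilde O(n)$ running time immediately.

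Where your sketch is substantially thinner than what's actually required is the structural part. The intuition you give (``$z$ can deviate from $s$ only where at least two traces agree on a change'') is the right moral, but it is not a statement one can directly union-bound over positions, because $z$ is an unconstrained string minimizing a sum of edit distances, not a per-coordinate perturbation of $s$. The paper's actual proof of this claim (Theorem~\ref{thm:median-trace-close}, Sections~3 and 4, Appendix~A) is the bulk of the work and proceeds quite differently: it conditions on the \emph{planted} alignment $A^p$, proves via carefully crafted union bounds over block families $\mE_{S,\bar S}$ that every near-optimal alignment between $s$ and a trace must agree with the planted one on all but a small fraction of blocks (Lemmas~\ref{lem:editbound-small-alphabet} and~\ref{lem:unique-alignment}), and then composes alignments through $\med$ and through $\bar x$ (the ``Process~2'' reformulation) to deduce that $\med$ and $s$ coincide on most blocks. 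Your proposal acknowledges ``controlling how local alignment ambiguities could let $z$ cheat'' as the main obstacle, which is correct, but the case-analysis-plus-union-bound you gesture at is precisely the hard part that occupies Lemmas~\ref{lem:tildeI}--\ref{lem:unique-alignment} and Section~4 and is not supplied. A smaller point: the $\log(1/p)$ factor in the final bound does not arise from a per-position event of probability $O(p^2\log(1/p))$; the per-position ``two traces agree'' probability is $O(p^2)$, and the extra $\log(1/p)$ comes from the constraint $\epsilon\ge\Omega(p\log(1/p))$ needed to make the block lengths $\Theta(\epsilon/p)$ long enough for the random-string union bounds to close. So your accounting happens to reach the right bound but for a slightly wrong reason, which would matter if you tried to push $\epsilon$ smaller (as the paper's Lemma~\ref{lem:editbound-large-alphabet} does for large alphabets).
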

The probability in this theorem is over the random choice of $s$ and the randomness of the insertion-deletion channel $R_p$. We note that the term $\log(1/p)$ in the estimation error $\ED(s,z)$ can be shaved by increasing the alphabet size to $\poly(1/\epsilon)$. 
An edit error of $O(p^2n)$ is optimal for three traces,
because in expectation $O(p^2 n)$ characters of $s$ are deleted in two of the three traces, and look as if they are inserted to $s$ in one of the three traces 
(which occurs in expectation for even more characters).

Our theorem demonstrates that the number of required traces 
exhibits a sharp contrast between exact and approximate trace reconstruction. 
In fact, approximate reconstruction not only beats 
the $\Omega(\log^{5/2} n)$ lower bound for exact reconstruction, 
but surprisingly uses only \emph{three} traces! 
We conjecture that the estimation error $\ED(s,z)$ 
can be reduced further using more than three traces. 
We believe that our technique can be useful here,
but this is left open for future work. 
\begin{conjecture}
\label{conj:main}
The estimation error $\ED(s,z)$ in Theorem~\ref{thm:main} 
can be reduced to $O(\epsilon pn)$ for arbitrarily small $\epsilon > 0$, 
using $\poly(1/\epsilon)$ traces. 
\end{conjecture}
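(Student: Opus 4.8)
\textbf{Proof proposal for Conjecture~\ref{conj:main}.}
The plan is to take many traces $t_1,\dots,t_m$ with $m=\poly(1/\epsilon)$ and extend the three-trace analysis underlying Theorem~\ref{thm:main} into a voting/consensus scheme that drives the per-position error down linearly in $m$. Recall the intuition behind the $O(p^2\log(1/p)n)$ bound: a character of $s$ contributes to the final edit error essentially when it is ``lost'' (deleted, or its local neighborhood is corrupted by insertions) in at least two of the three traces, an event of probability $\Theta(p^2)$ per character, up to a $\log(1/p)$ factor coming from runs of equal characters where the alignment is ambiguous. With $m$ traces, the natural generalization is to align all traces to a common coordinate system and, at each reconstructed position, output the majority symbol among the surviving copies; a character of $s$ is then mis-recovered only if it is lost in at least $m/2$ traces, which by a Chernoff bound happens with probability $p^{\Omega(m)}$ once $p$ is below a constant. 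Choosing $m=\Theta(\log(1/\epsilon)/\log(1/p))$ would already push the expected number of ``majority-lost'' characters below $\epsilon n$; the issue is that losing a character is not the only source of error, since spurious insertions shared across traces and alignment ambiguity inside long runs also contribute, and controlling those is what forces $m$ up to $\poly(1/\epsilon)$ rather than $O(\log(1/\epsilon))$.

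The key steps, in order, would be: (1) Establish a robust ``anchor'' structure as in the proof of Theorem~\ref{thm:main} --- with high probability a random $s$ has frequent short substrings that are unique within a window of length $\poly\log n$, so each trace can be broken at these anchors, and the anchors survive in almost every trace since each survives any fixed trace with probability $1-O(p\cdot\text{polylog})$. Intersecting anchors that survive in a $(1-\epsilon/10)$-fraction of the $m$ traces partitions $s$ into blocks of length $\tilde O(1)$, reducing the problem to reconstructing each block from the $m$ induced sub-traces. (2) For a single block, show that the $m$ sub-traces are, except with probability $\epsilon/\poly$, each within edit distance $O(p\cdot\text{blocklength})$ of the true block, and that the fraction of sub-traces that are ``bad'' for that block (too corrupted, or missing an endpoint anchor) is at most $\epsilon/10$ with high probability by a Chernoff bound over the $m$ independent channel runs. (3) Reconstruct the block by a short edit-distance-based consensus: compute, in time $\poly(\text{blocklength})\cdot m=\tilde O(1)\cdot m$ per block, a symbol-by-symbol plurality after locally aligning the good sub-traces to each other; prove that a position of the block is output incorrectly only if it is deleted in $\geq(1/2-\epsilon/10)$-fraction of the good sub-traces or sits in a run whose length is misjudged by the plurality, and bound the total expected number of such positions across all blocks by $\epsilon p n$ --- here the linear-in-$p$ factor is retained because a correctly recovered run still only costs edit distance proportional to how much a single channel perturbs it, namely $O(p)$ per run on average, and the plurality cancels the $O(p^2)$-order disagreements. (4) Sum the per-block error bounds, apply a union bound over the $\tilde O(n)$ blocks and over the anchor-structure event, and conclude $\ED(s,z)\le O(\epsilon p n)$ with probability $1-n^{-1}$, with total running time $\tilde O(n)\cdot\poly(1/\epsilon)$.

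The hard part will be step (3): unlike the three-trace case, where one can afford a clean case analysis of ``which of the three traces kept the character,'' with $m$ traces one must show that the \emph{plurality} alignment inside a run of length $\ell$ recovers $\ell$ correctly except with probability decaying in $m$, and crucially that the residual error is $O(\epsilon p\ell)$ rather than $O(\epsilon\ell)$ --- i.e.\ that the $p$ factor is not lost when we take many traces. The subtlety is that a run of length $\ell$ in $s$ becomes a run of length $\ell+X_i$ in trace $i$ where $X_i$ is a (roughly) mean-zero fluctuation of magnitude $\Theta(\sqrt{p\ell})$; the median of $\{\ell+X_i\}_{i\le m}$ concentrates around $\ell$ to within $O(\sqrt{p\ell/m})$, so averaged over all runs the edit-distance cost is $O(n\sqrt{p/m})$, and forcing this below $\epsilon p n$ is exactly what dictates $m=\Theta(1/(\epsilon^2 p))$, consistent with a $\poly(1/\epsilon)$ bound (for $p$ a constant, or with mild dependence on $p$ otherwise). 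I expect the main technical obstacle to be handling correlations between adjacent runs under the deletion channel and making the block decomposition and the plurality step compatible --- in particular, ensuring that anchor mis-detection in an $\epsilon/10$-fraction of traces does not cascade into alignment errors that inflate the per-block cost beyond $O(\epsilon p \cdot \text{blocklength})$. This likely requires choosing the anchor density and the block length as functions of both $\epsilon$ and $p$, and a careful second-moment argument to decouple the contributions of different blocks.
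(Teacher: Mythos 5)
You should first be aware that the paper contains no proof of this statement: Conjecture~\ref{conj:main} is explicitly posed as an open problem, with the only supporting observation being that it holds trivially for $\epsilon<1/n$ via known exact-reconstruction bounds. The route the authors envision is also different from yours: they conjecture that the $(1+\epsilon)$-approximate \emph{median} of $\poly(1/\epsilon)$ traces converges to $s$ (strengthening Theorem~\ref{thm:median-trace-close}), and would then reuse their anchor/block-median algorithm, accepting running time $n(\log n)^{O(m)}$; you instead propose a per-position plurality/voting scheme after anchoring. So your proposal cannot be judged against a proof in the paper; it has to stand alone, and as written it is a research plan with acknowledged open steps, not a proof.

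The concrete gaps are these. (a) The consensus step (3) is unsubstantiated in both correctness and running time: computing a symbol-by-symbol plurality "after locally aligning the good sub-traces to each other" is essentially the multiple-alignment/median problem on $m$ strings, for which the only tool in this setting costs $(\text{blocklength})^{O(m)}$, not $\poly(\text{blocklength})\cdot m$; if instead you align all sub-traces pairwise to one reference sub-trace, that reference's own $\Theta(p)$-rate corruptions bias every vote at the affected positions, and the claimed $p^{\Omega(m)}$ per-character failure probability collapses back to $\Theta(p)$ --- breaking this circularity is precisely the missing idea. (b) The run-length argument in your "hard part" is not merely unfinished but has a bias problem: the fluctuations $X_i$ are not mean-zero (a deletion shortens a run with probability $1$, while a uniformly random inserted symbol extends it only with probability $1/2$ and can split it), so the median of $\{\ell+X_i\}$ concentrates around a value at distance $\Theta(p\ell)$ from $\ell$, which is exactly the order of error you need to beat; moreover, for a random binary $s$ the typical run has constant length, so the $\sqrt{p\ell/m}$ heuristic and the resulting $m=\Theta(1/(\epsilon^2 p))$ calculation do not apply to the bulk of the runs. (c) The bookkeeping does not reach the stated target: budgets such as "anchors lost in at most an $\epsilon/10$ fraction of traces" or "blocks bad with probability $\epsilon/\poly$" each contribute $O(\epsilon n)$ to the edit distance, not $O(\epsilon p n)$, so every per-block and per-anchor failure probability must be driven down by an additional factor of roughly $p/\polylog n$, which your stated Chernoff bounds do not provide. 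You candidly flag (b) and the inter-run correlation issue yourself; together with (a) and (c) this means the proposal, while a reasonable attack plan, does not establish the conjecture.
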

This conjecture holds for $\epsilon<1/n$,
as follows from known bounds for exact reconstruction~\cite{HPP20},
and perhaps suggests that a number of traces that is sub-polynomial in $1/\epsilon$
it suffices for all $\epsilon$. 

\paragraph{Median String.}
As mentioned earlier,
a common heuristic to solve the trace reconstruction problem is multiple sequence alignment,
which can be formulated equivalently (see~\cite{gusfield1997} and the references therein) as the problem of finding a median under edit distance. 
For general context, the median problem is a classical aggregation task in data analysis;
its input is a set $S$ of points in a metric space relevant to the intended application, 
and the goal is to find a point (not necessarily from $S$) 
with the minimum sum of distances to points in $S$, i.e., 
\begin{equation} \label{eq:median}
  \min_y \sum_{x\in S} d(y,x).    
\end{equation}
Such a point is called a \emph{median} (or \emph{geometric median} in a Euclidean space). 
For many applications, it suffices to find an \emph{approximate median}, 
i.e., a point in the metric with approximately minimal objective value~\eqref{eq:median} . 
The problem of finding an (approximate) median has been studied extensively both in theory and in applied domains, over various metric spaces, including Euclidean~\cite{cohen2016geometric} (see references therein for an overview), Hamming (folklore), the edit metric~\cite{Sankoff75, kruskal1983, NR03}, rankings~\cite{DKNS01,ACN08,MS07}, Jaccard distance~\cite{CKPV10}, Ulam~\cite{chakraborty2021approximating}, 
and many more~\cite{fletcher2008robust, minsker2015geometric, cardot2017online}.

The median problem over the \emph{edit-distance metric} is known as the \emph{median string} problem~\cite{kohonen1985median},
and finds numerous applications in computational biology~\cite{gusfield1997, pevzner2000computational}, DNA storage system~\cite{GBCDLSB13, RMRAJY17}, speech recognition~\cite{kohonen1985median}, and classification~\cite{martinez2000use}. 
This problem is known to be NP-hard~\cite{HC00, NR03} (even W[1]-hard~\cite{NR03}), 
and can be solved by standard dynamic programming~\cite{Sankoff75, kruskal1983}
in time $O(2^m n^m)$ when the input has $m=|S|$ strings of length $n$ each. 
From the perspective of approximation algorithms, 
a multiplicative $2$-approximation to the median is straightforward 
(this works in every metric space by simply reporting the best among the input strings, i.e., $y^*\in S$ that minimizes the objective). 
However, no polynomial-time algorithm is known to break below $2$-approximation (i.e., achieve factor $2-\delta$ for fixed $\delta > 0$) for the median string problem,
despite several heuristic algorithms and results for special cases~\cite{casacuberta1997greedy, kruzslicz1999improved, fischer2000string, pedreira2007spatial, abreu2014new, hayashida2016integer, Mirabal19, chakraborty2021approximating}.

Although the median string (or equivalently multiple sequence alignment)
is a common heuristic for trace reconstruction~\cite{yazdi2017portable, organick2018random},
to the best of our knowledge there is no definite connection between these two problems. 
We show that both the problems are roughly the same in the average-case model. It is not difficult to show that any string close to the unknown string is an approximate median. To see this, we can show that for a set $S$ of $m$ traces of an (unknown) random string $s$, their optimal median objective value is at least $(1-O(\epsilon))pnm$ with high probability. On the other hand, the median objective value with respect to $s$ itself is at most $(1+\epsilon)pnm$ with high probability. (See the proof of Claim~\ref{clm:opt-value}.) 
Hence, the unknown string $s$ is an $(1+O(\epsilon))$-approximate median of $S$,
and by the triangle inequality, every string close (in edit distance) to $s$ is also an approximate median of $S$. 
One of the major contributions of this paper is the converse direction,
showing that given a set of traces of an unknown string, 
any approximate median of the traces is close (in edit distance) to the unknown string.
This is true even for three traces.
\begin{theorem}
\label{thm:median-trace-close}
For a large enough $n \in \N$ and a noise parameter $p \in (0,0.001)$, 
let the string $s\in\{0,1\}^n$ be chosen uniformly at random, 
and let $s_1,s_2,s_3$ be three traces generated by the insertion-deletion channel $R_p(s)$. 
If $\med$ is a $(1+\epsilon)$-approximate median of $\{s_1,s_2,s_3\}$
for $\epsilon \in [110 p \log (1/p),1/6]$, 
then $\Pr [\ED(s,\med) \le O(\epsilon)\cdot \opt ] \geq 1-n^{-3}$,  
where $\opt$ denotes the optimal median objective value of $\{s_1,s_2,s_3\}$. 
\end{theorem}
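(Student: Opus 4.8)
The plan is to deduce the theorem from a ``for every $y$'' lower bound on the median objective: I will show that with probability at least $1-n^{-3}$ over the choice of $s$ and the three channels, \emph{every} string $y$ satisfies
\[
  \sum_{i=1}^{3}\ED(y,s_i)\ \ge\ \opt + \ED(s,y) - O(\epsilon)\cdot\opt .
\]
Granting this, the theorem is immediate: substitute $y=\med$, bound the left side by $(1+\epsilon)\opt$ because $\med$ is a $(1+\epsilon)$-approximate median, and rearrange to get $\ED(s,\med)\le\epsilon\opt+O(\epsilon)\opt=O(\epsilon)\cdot\opt$. (We use Claim~\ref{clm:opt-value}, which in particular gives $\opt=\Theta(pn)$, so that --- since $\epsilon\ge 110 p\log(1/p)$ --- every additive error of size $O(p^2\log(1/p)\,n)$ appearing below is automatically $O(\epsilon)\cdot\opt$; we also use the trivial $\sum_i\ED(s,s_i)\ge\opt$.) I stress that the displayed inequality asserts that the median objective grows at \emph{unit} rate as $y$ leaves $s$; this is false for an arbitrary triple of points (in a general metric the set of near-optimal medians can be a large region), so the proof must exploit that $s_1,s_2,s_3$ are \emph{random} perturbations of $s$.

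The engine is a decomposition of $s$ into blocks separated by \emph{anchors}, together with matching decompositions of the traces and of an arbitrary $y$. First I would record, using the machinery of Section~\ref{sec:unique-edit}, the following properties, which hold with probability $\ge 1-n^{-3}$ and concern only the tuple $(s,s_1,s_2,s_3)$: \textbf{(i)}~the positions of $s$ untouched by all three channels contain runs of length $\Theta(\log n)$ that tile $[n]$ into blocks $B_1,B_2,\dots$, so that the corresponding substrings $B^{(i)}_j$ of $s_i$ agree with $B_j$ on the anchors; moreover each anchor, being a random string of length $\Theta(\log n)$, has an essentially unique low-cost occurrence inside \emph{any} string; \textbf{(ii)}~for every block $j$ and every pair $i\neq i'$,
\[
  \ED\bigl(B^{(i)}_j,B^{(i')}_j\bigr)\ \ge\ \ED\bigl(B_j,B^{(i)}_j\bigr)+\ED\bigl(B_j,B^{(i')}_j\bigr)-o^{(i,i')}_j,
  \qquad \sum_{j}\sum_{i<i'}o^{(i,i')}_j=O\!\left(p^2\log(1/p)\,n\right),
\]
i.e.\ the edits of distinct traces land in distinct positions and do not cancel. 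Property~(i) is a rigidity statement: since the anchors are unique low-cost markers, the optimal alignments of a given $y$ to $s$ and to each $s_i$ must all pass through the same occurrences of the anchors inside $y$; consequently $y$ splits consistently into pieces $y_1,y_2,\dots$ with
\[
  \ED(s,y)=\sum_j \delta_j,\qquad \ED(s,s_i)=\sum_j e^{(i)}_j,\qquad \sum_i\ED(y,s_i)\ \ge\ \sum_j\sum_i \ED\bigl(y_j,B^{(i)}_j\bigr),
\]
where $\delta_j:=\ED(B_j,y_j)$ and $e^{(i)}_j:=\ED(B_j,B^{(i)}_j)$.

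It then remains to prove, for each block $j$, the local inequality
\[
  \sum_{i=1}^{3}\ED\bigl(y_j,B^{(i)}_j\bigr)\ \ge\ \delta_j+\sum_{i=1}^{3}e^{(i)}_j-\sum_{i<i'}o^{(i,i')}_j ,
\]
and then sum it over $j$: the right side adds up to $\ED(s,y)+\sum_i\ED(s,s_i)-O(p^2\log(1/p)\,n)\ \ge\ \ED(s,y)+\opt-O(\epsilon)\cdot\opt$, which is precisely what we need. The local inequality is a miniature of the global one: $B_j$ is a near-median of $B^{(1)}_j,B^{(2)}_j,B^{(3)}_j$, with objective value $\sum_i e^{(i)}_j$ up to $\sum_{i<i'}o^{(i,i')}_j$, and we claim its objective again grows at unit rate away from $B_j$. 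I would prove it by a short case analysis on the number of traces that are ``dirty'' in block $j$ (have $e^{(i)}_j>0$): for $0$ or $1$ dirty traces the triangle inequalities $\ED(y_j,B^{(i)}_j)\ge|\delta_j-e^{(i)}_j|$ already suffice; for $2$ or $3$ dirty traces one feeds in the non-collision bound~(ii) for a dirty pair. A genuine subtlety is that anchors of length $\Theta(\log n)$ can leave \emph{long} blocks, carrying edits from several traces, which weakens the naive case analysis; this I would handle by applying the same decomposition a second time inside each block (legitimate, since the local inequality has exactly the self-similar form of the global one), or equivalently by tuning the anchor/block scales.

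The step I expect to be the main obstacle is the rigidity claim~(i): showing that the optimal alignments of an \emph{arbitrary} $y$ against $s$ and against all three traces decompose consistently at the anchors, losing only $O(\epsilon)\cdot\opt$ in total to boundary effects and to the ambiguity created by long single-symbol runs (this last point is exactly where the $\log(1/p)$ factor, and hence the hypothesis $\epsilon\ge 110 p\log(1/p)$, is spent). What makes this delicate is that $y=\med$ may depend on $s_1,s_2,s_3$, hence on $s$, so one cannot argue by first fixing $y$ and then invoking the randomness of $s$; instead one spends all the randomness up front to establish (i) and (ii) as deterministic structural facts about $(s,s_1,s_2,s_3)$, after which the block decomposition and the local inequality are purely combinatorial and hold for every $y$ at once.
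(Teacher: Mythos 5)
Your proposed route is genuinely different from the paper's: the paper never establishes a uniform ``for every $y$'' lower bound on $\obj(y)$, and it does not decompose at anchors in Section~\ref{sec:unique-median}. Instead, it takes $\med$ as given, uses the approximate-median hypothesis twice (once via Claim~\ref{clm:med-str-dist} to bound $\ED(x_i,\med)$ and once inside the proof of Lemma~\ref{lem:large-similarity} to rule out a profitable local modification of $\med$), and compares the composed alignments $M_{1,\m,k}$ and $A_{1,\bar{x},k}^p$ to the planted alignment $A_{1,k}^q$ via the robustness Lemma~\ref{lem:unique-alignment}. So if your approach worked it would be a cleaner and strictly stronger statement. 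However, I see two concrete gaps that I don't think your sketch closes.

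\emph{First, the local inequality does not follow from triangle plus non-collision when all three traces are dirty in a block.} Take a 3-dirty block with $e^{(1)}_j=e^{(2)}_j=e^{(3)}_j=1$ and all $o^{(i,i')}_j=0$. Summing the three pairwise triangle inequalities $\ED(y_j,B^{(i)}_j)+\ED(y_j,B^{(i')}_j)\ge \ED(B^{(i)}_j,B^{(i')}_j)\ge 2$ gives only $\sum_i\ED(y_j,B^{(i)}_j)\ge 3$, whereas your local inequality requires $\ge \delta_j+3$. There is no clean trace to contribute the $\delta_j$ term, and trying $\ED(y_j,B^{(i)}_j)\ge|\delta_j-1|$ instead gives $3|\delta_j-1|$, which is below $\delta_j+3$ for $\delta_j=2$. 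The inequality itself may well be true for generic non-colliding perturbations of a random block (I couldn't find a counterexample), but proving it is essentially the crux, and ``feed in the non-collision bound for a dirty pair'' does not establish it. With blocks of size $\Theta(\log n)$ and constant $p$, blocks carry $\Theta(p\log n)$ edits per trace, so the all-dirty regime is the common one, not a corner case; the recursion you propose to avoid it is exactly where the difficulty lives.

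\emph{Second, the anchor-rigidity claim as stated is not correct and has a parameter tension.} An anchor, even a $\Theta(\log n)$-bit random string, does not have a unique low-cost occurrence ``inside any string''---$y$ is adversarial and can repeat a block many times, or omit it entirely. What one can hope to show is that for a $y$ with $\obj(y)\le (1+\epsilon)\opt$, the optimal alignments to the $s_i$ decompose consistently at the anchors up to a small total error; but this is precisely the kind of statement the paper works hard for in Lemma~\ref{lem:unique-alignment}, via a union bound over basic events rather than over alignments, and your sketch does not supply a substitute. There is also a scale clash: to have anchors that are near-unique markers you want length $\Omega(\log n)$, but runs untouched by all three channels of length $L$ occur at density about $e^{-3pL}$, so $L=\Theta(\log n)$ forces inter-anchor gaps of size $n^{\Theta(p)}$, not $O(\log n)$; runs short enough to tile with $O(\log n)$ gaps have only $\Theta(\log\log n)$ bits and collide by birthday. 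The paper sidesteps this by working at two scales---$\log^2 n$-size anchors only for the near-linear algorithm in Section~\ref{sec:trace-median-algo}, and $\Theta(\epsilon/p)$-size windows for the probabilistic robustness in Section~\ref{sec:unique-edit}---and those windows see at most one planted edit, which is what makes the case analysis tractable. Your one-scale decomposition conflates the two roles. These two gaps are substantive; I'd encourage you either to prove the 3-dirty local inequality directly (it would be a nice self-contained lemma), or to retarget the argument to windows with at most one planted edit, at which point you will likely find yourself reproving something close to Lemma~\ref{lem:unique-alignment}.
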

An immediate consequence (see Corollary~\ref{cor:m-traces-median}) is that for every $3\le m < n^{O(1)}$ traces, every $(1+\epsilon)$-approximate median $\med$ satisfies $\ED(s,\med) \le O(\epsilon) \frac{\opt}{m}$.

Thus if we could solve any of the two problems (even approximately), we get an approximate solution to the other problem. E.g., the current best (exact) trace reconstruction algorithm for the average-case~\cite{HPP20} immediately gives us an $O(n^{1+o(1)})$ time algorithm to find an $(1+O(\epsilon))$-approximate median of a set of traces. 
We leverage this interplay between the two problems to design an efficient algorithm for approximate trace reconstruction. 
Since one can compute the (exact) median of three strings $s_1,s_2,s_3$ in time $O(|s_1|\cdot|s_2|\cdot|s_3|)$~\cite{Sankoff75, kruskal1983}, 
the above theorem immediately provides us the unknown string up to some small edit error in time $O(n^3)$. We further reduce the running time to near-linear by cleverly partitioning each of the traces into $\polylog n$-size blocks and then applying the median algorithm on these blocks. Finally, we concatenate all the block-medians to get an "approximate" unknown string, leading to Theorem~\ref{thm:main}. One may further note that Theorem~\ref{thm:main} also provides a $(1+O(\epsilon))$-approximate median for any set of traces in the average-case (again due to Theorem~\ref{thm:median-trace-close}).

Taking the smallest possible $\epsilon$ in Theorem~\ref{thm:median-trace-close}, 
we get that for three traces generated from $s$, with high probability $\ED(s,\med) \le \tilde{O}(p^2 n)$.
In comparison, it is not hard to see that with high probability $\opt$ is bounded by roughly $3pn$.
We conjecture that the number of traces increases,
the median string converges to the unknown string $s$. 
In particular, $\ED(s,\med)\leq \epsilon n$ when using $\poly(1/\epsilon)$ traces (instead of just three), with high probability. 
We hope that our technique can be extended to prove the above conjecture, but we leave it open for future work. 

The main implication of this conjecture is an $\tilde{O}(n)$ time approximate trace reconstruction algorithm, for any fixed $\epsilon>0$, as follows. 
It is straightforward to extend our approximate median finding algorithm (in Section~\ref{sec:trace-median-algo}) to more input strings. 
(For brevity, we present only for three input strings.) 
For $m$ strings, the running time would be $n (\log n)^{O(m)}$, 
and for $m=\poly(1/\epsilon)$ strings this running time is $n \polylog n$. 
As a consequence, we will be able to reconstruct in $\tilde{O}(n)$ time
a string $z$ such that $\ED(s,z) \le \epsilon n$, 
which in particular implies Conjecture~\ref{conj:main}.

\subsection{Related Work}
A systematic study on the trace reconstruction problem has been started since~\cite{levenshtein2001efficient, levenshtein2001efficient2, BKKM04}. However, some of its variants appeared even in the early '70s~\cite{Kal73}. One of the main objectives here is to reduce the number of traces required, aka the sample complexity. Both the deletion only and the insertion-deletion channels have been considered so far. In the general worst-case version, the problem considers the unknown string $s$ to be any arbitrary string from $\{0,1\}^n$. The very first result by Batu et al.~\cite{BKKM04} asserts that for small deletion probability (noise parameter) $p\le \frac{1}{n^{1/2+\epsilon}}$, to reconstruct $s$ considering $O(n\log n)$ samples suffice. A very recent work~\cite{CDLSS21b} improved the sample complexity to $\mathrm{poly}(n)$ while allowing a deletion probability $p\le \frac{1}{n^{1/3+\epsilon}}$.
For any constant deletion probability bounded away from 1, the first subexponential (more specifically, $2^{\tilde{O}(\sqrt{n})}$) sample complexity was shown by~\cite{HMPW08}, which was later improved to $2^{O(n^{1/3})}$~\cite{NP17, de2017optimal}, and then finally to $2^{O(n^{1/5})}$~\cite{C20}.

Another natural variant that has also been widely studied is the average-case, where the unknown string $s$ is randomly chosen from $\{0,1\}^n$. It turns out that this version is significantly simpler than the worst-case in terms of the sample complexity. For sufficiently small noise parameter ($p=o(1)$ as a function of $n$), efficient trace reconstruction algorithms are known~\cite{BKKM04, kannan2005more, viswanathan2008improved}. For any constant noise parameter bounded away from 1 in case of insertion-deletion channel, the current best sample complexity is $\exp(O(\log^{1/3}n))$~\cite{HPP20} improving up on $\exp(O(\log^{1/2}n))$~\cite{PZ17}. Both of these results are built on the worst-case trace reconstruction by~\cite{NP17, de2017optimal}. Furthermore, the trace reconstruction algorithm of~\cite{HPP20} runs in $n^{1+o(1)}$ time.

In the case of the lower bound, information-theoretically, it is easy to see that $\Omega(\log n)$ samples must be needed when the deletion probability is at least some constant. In the worst-case model, the best known lower bound on the sample complexity is $\tilde{\Omega}(n^{3/2})$~\cite{C20b}. For the average-case, McGregor, Price, and Vorotnikova~\cite{MPV14} showed that $\Omega(\log^2 n)$ samples are necessary to reconstruct the unknown (random) string $s$. This bound was further improved to $\tilde{\Omega}(\log^{9/4}n)$ by Holden and Lyons~\cite{HL20}, and very recently to $\tilde{\Omega}(\log^{5/2}n)$ by Chase~\cite{C20b}.

The results described above show an exponential gap between the upper bound and lower bound of the sample complexity. The natural question is, instead of reconstructing the unknown string exactly, if we allow some error in the reconstructed string, then can we reduce the sample complexity? Recently, Davies et al.~\cite{DRRS20} presented an algorithm that for a specific class of strings (considering various run-lengths or density assumptions), can compute an approximate trace with $\epsilon n$ additive error under the edit distance while using only $\mathrm{polylog}(n)$ samples. The authors also established that to approximate within the edit distance $n^{1/3-\delta}$, the number of required samples is $n^{1 + 3\delta/2}/\mathrm{polylog}(n)$, for $0<\delta<1/3$, in the worst case. Independently, Grigorescu et al.~\cite{GSZ20} showed assuming deletion probability $p=1/2$, there exist two strings within edit distance 4 such that any \emph{mean-based algorithm} requires $\exp(\Omega(\log^2 n))$ samples to distinguish them.

\subsection{Technical Overview}
The key contribution of this paper is a linear-time approximate trace reconstruction algorithm that uses only three traces to reconstruct an unknown (random) string up to some small edit error (Theorem~\ref{thm:main}). To get our result, we establish a relation between the (approximate) trace reconstruction problem and the (approximate) median string problem. 
Consider a uniformly random (unknown) string $s \in \Sigma^n$. 
We show that for any three traces of $s$ generated by the probabilistic insertion-deletion channel $R_p$, 
an arbitrary $(1+\epsilon)$-approximate median of the three trace must be, 
with high probability, 
$O(\epsilon)\opt$-close in edit distance to $s$  (Theorem~\ref{thm:median-trace-close}). 
Once we establish this connection, 
it suffices to solve the median problem (even approximately).
The median of three traces can be solved optimally in $O(n^3)$ time using 
a standard dynamic programming algorithm~\cite{Sankoff75, kruskal1983}. 
It is not difficult to show that the optimal median objective value $\opt$ 
is at least $3(1 -O(\epsilon))p n$ (Claim~\ref{clm:opt-value}),
and thus the computed median is at edit distance at most $O(\epsilon p n)$ 
from the unknown string $s$. 
This result already beats the known lower bound for \emph{exact} trace reconstruction in terms of sample complexity. 
However, the running time is cubic in $n$, whereas current average-case trace reconstruction algorithms run in time $n^{1+o(1)}$~\cite{HPP20}.

Next we briefly describe the algorithm that improves the running time to $\tilde{O}(n)$. 
Instead of finding a median of the entire traces, 
we compute the median block-by-block and then concatenate the resulting blocks. 
A natural idea is that each such block is just the median of three substrings  
taken from the three traces, 
but the challenge is to identify which substring to take from each trace,
particularly because $s$ is not known. 
To mitigate this issue, we take the first trace $s_1$ and partition it into disjoint blocks of length $\Theta(\log^2 n)$ each. 
For each such block, we consider its middle $\log^2 n$-size sub-block as an \emph{anchor}.
We then locate for each anchor its corresponding substrings in the other two traces $s_2$ and $s_3$, 
using any approximate pattern matching algorithm under the edit metric (e.g.~\cite{LV89, GP90}) to find the \emph{best match} of the anchor inside $s_2,s_3$.
Each anchor has a \emph{true match} in $s_2$ and in $s_3$, 
i.e., the portion that the anchor generated under the noise channel. 
Since the anchors in $s_1$ are "well-separated" (by at least $\omega(\log n)$), 
their true matches are also far apart both in $s_2,s_3$. 
Further, exploiting the fact that $s$ is a random string, we can argue that each anchor's best match and true match overlap almost completely 
(i.e., except for a small portion), see Section~\ref{sec:trace-median-algo}). 
We thus treat these best match blocks as anchors in $s_2$ and $s_3$ and partition them into blocks. 
From this point, the algorithm is straightforward. Just consider the first block of each of $s_1,s_2,s_3$ and compute their median. Then consider the second block from each trace and compute their median, and so on. Finally, concatenate all these block medians, and output the resulting string. 

The crucial claim is that the best match and true match of an anchor in $s_1$ 
are the same except for a small portion, is crucial from two aspects. 
First, it ensures that any $r$-th block of $s_2,s_3$ contains the true match of the $r$-th anchor of $s_1$. 
Consequently, computing a median of these blocks reconstructs the corresponding portion of the unknown string $s$ up to edit distance $O(\epsilon) p \log^2 n$ with high probability. 
Thus for "most of the blocks", we can reconstruct up to such edit distance bound. 
We can make the length of the non-anchor portions negligible compared to the anchors (simply because a relatively small "buffer" around each anchor suffices), 
and thus we may ignore them and still ensure that the output string is $O(\epsilon p n)$-close (in edit distance) to the unknown string $s$. 
(See the proof of Lemma~\ref{lem:correctness} for the details.) 
The second use of that crucial claim is that it helps in searching for the best match of each anchor "locally" (within a $O(\log^2 n)$-size window) in each $s_j$, $j \in \{2,3\}$. As a result, we bound the running time of the pattern matching step by $\tilde{O}(n)$. The median computations are also on $\Theta(\log^2 n)$-size blocks, and thus takes a total $\tilde{O}(n)$ time.

It remains to explain the key contribution, which is the connection between the (approximate) trace reconstruction and the (approximate) median string problem. 
Its first ingredient is that there is an "almost unique" alignment between the unknown string $s$ and a trace of it generated by the insertion-deletion channel $R_p$. 
We provide below an overview of this analysis 
(see Section~\ref{sec:unique-edit} for details).
We start by considering the random string $s$ and a string $y$ generated by passing $s$ through the noise channel $R_p$. 
For sake of analysis, 
we can replace $R_p$ with an \emph{equivalent} probabilistic model $G_p$, 
that first computes a random alignment $A^p$ between $s$ and $y$,
and only then fills in random characters in $s$ and in the insertion-positions in $y$.
This model is more convenient because it separates the two sources of randomness,
for example we can condition on one ($A^p$) 
when analyzing typical behavior of the other (characters of $s$).

\begin{figure}[tp]
    \centering
    \includegraphics[scale=0.45]{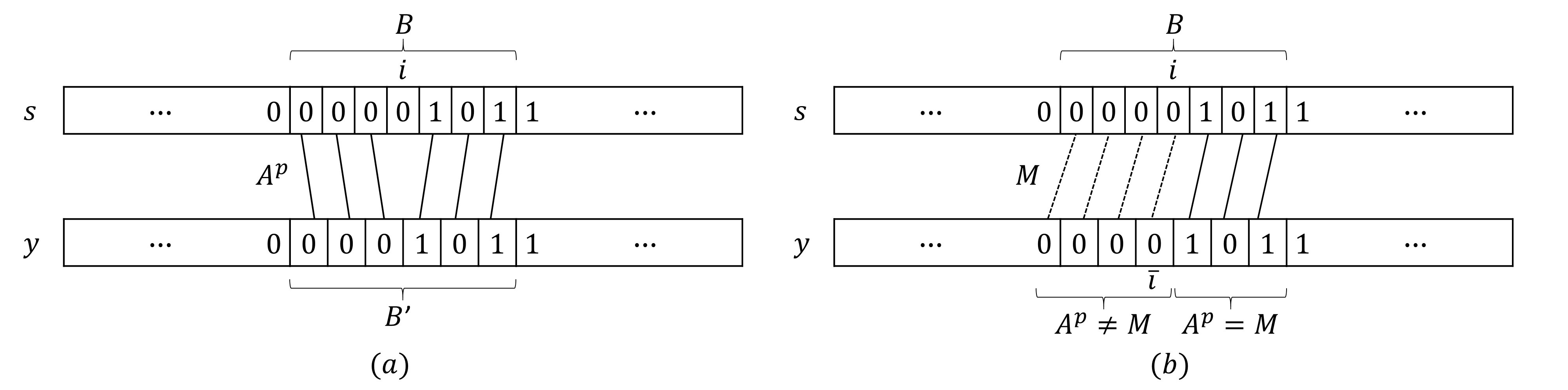}
    \caption{(a) An example of well separated edit operation: $A^p$ deletes $s[i]$ and aligns rest of the characters in block $B$ with block $B'$. (b) $M$ aligns $s[i]$ and $y[\bar{i}]$. For each index $j$ appearing left of $i$ in $B$, $A^p[j]\neq M(j)$. }
    \label{fig:fig1}
\end{figure}

In expectation, the channel $G_p$ generates a trace $y$ by performing 
about $pn$ random edit operations in $s$ (planting insertions/deletions),
hence the planted alignment $A^p$ has expected cost about $pn$. 
But can these edit operations cancel each other?
Can they otherwise interact, leading to the optimal edit distance being smaller? 
For example, suppose $s[i]=0$. 
If $G_p$ first inserts a $0$ before $s[i]$ and then deletes $s[i]$, 
then clearly these two operations cancel each other. 
We show that such events are unlikely. 
Following this intuition, we establish our first claim, that with high probability 
and the edit distance between $s,y$ is large,
specifically $\ED(s,y)\ge (1-6\epsilon)pn$ for $\epsilon\ge 15p\log (1/p)$, 
see Lemma~\ref{lem:editbound-small-alphabet};
thus, the planted alignment $A^p$ is near-optimal. 
Towards proving this, 
we first show that a vast majority of the planted edit operations are well-separated,
i.e., have $\Theta(1/p)$ positions between them. 
In this case, for one operation to cancel another one, the characters appearing between them in $s$ must all be equal, 
which happens with a small probability because $s$ is random. 

Formally, for almost all indices $i$ where $G_p$ performs some edit operation, 
the block around it $B=\{i-\frac{c}{r},i-\frac{c}{r}+1,\dots,i+\frac{c}{r}\}$ in $s$
(for a small constant $c>0$),
satisfies that $i$ is the only index in $B$ that $G_p$ edits (see Lemma~\ref{lem:tildeI}).
Next we show that in every \emph{optimal} alignment between $s$ and $y$, 
almost all these blocks contribute a cost of $1$. 
As otherwise, there is locally an alignment $M$ that aligns each index in $B$ 
to some character in $y$, 
while $G_p$ makes exactly one edit operation, say deletes $s[i]$. 
See for example Figure~\ref{fig:fig1}, 
where $M$ aligns $s[i]$ and $y[\bar{i}]$ 
whereas $A^p$ deletes $s[i]$.
In this case, $M$ and $A^p$ must disagree on at least $c/r$ indices 
(all indices either to the right or to the left of $i$ in $B$). 
In Figure~\ref{fig:fig1}, all $j\in[i-\frac{c}{r},i]$ satisfy $M[j]\neq A^p[j]$. 
The crux is that any pair of symbols in $s,y$ are chosen independently at random unless $A^p$ aligns their positions. 
Thus probability that in each of the $\frac{c}{r}$ pairs aligned by $M$, the two matched symbols will be equal is $(1/|\Sigma|)^{\frac{c}{r}}$. 
In the formal proof, we address several technical issues, 
like having not just one but many blocks, 
and possible correlations due to overlaps between different pairs,
which are overcome by a carefully crafted union bound. 

We further need to prove that the planted alignment is robust,
in the sense that, with high probability, 
every near-optimal alignment between $s$ and $y$ 
must "agree" with the planted alignment $A^p$ on all but a small fraction 
of the edit operations. 
Formally, we again consider a partition of $s$ into blocks containing exactly one planted edit operation,
and show that for almost all such blocks $B$, 
if $A^p$ maps $B$ to a substring $B'$ in $y$, 
that near-optimal alignment also maps $B$ to $B'$ 
(see Lemma~\ref{lem:unique-alignment}). 
To see this, suppose there is a near-optimal alignment that maps $B$ to $\bar{B}\neq B'$. Then following an argument similar to the above, we can show there are many indices in the block $B$ such that $A^p$ and $M$ disagree on them. 
Thus, in each such block, $M$ tries to match many pairs of symbols that are chosen independently at random, 
and therefore the probability that $M$ matches $B$ and $\bar{B}$ with a cost at most $1$ is small. 
Compared to Lemma~\ref{lem:tildeI}, an extra complication here 
is that now we allow $M$ to match $B$ and $\bar{B}$ with cost at most $1$ (and not only $0$), and in particular $\bar{B}$ can have three different lengths: $|B|,|B|-1,|B|+1$. 
Hence the analysis must argue separately for all these cases,  requiring a few additional ideas/observations.

After showing that a near-optimal alignment between a random string $s$ and $R_p(s)$ is almost unique (or robust), we use that fact to argue about the distance between $s$ and any approximate median of the traces. Let $s_1,s_2,s_3$ be three independent traces of $s$, generated by $R_p$. 
We can view $s_1$ as a uniformly random string, and $s_2,s_3$ are generated from $s_1$ by a insertion-deletion channel $R_q$ with a higher noise rate $q \approx 2p$. (See Section~\ref{sec:prob-model} for the details.) Hence, any near-optimal alignment between $s_1,s_2$ and $s_1,s_3$ "agree" with the planted alignment $A^q$ induced by $R_q$ (denoted by $A_{1,2}^q$ and $A_{1,3}^q$ respectively). 
Next, we consider the alignment $A_{1,2}$ from $s_1$ to $s_2$ via $s$, that we get by composing the planted alignment from $s_1$ to $s$ induced by $R_p$ (actually the inverse of the alignment from $s$ to $s_1$) with the planted alignment from $s$ to $s_2$ induced by $R_p$. Similarly, consider the alignment $A_{1,3}$ from $s_1$ to $s_3$ via $s$. Then we take any $(1+\epsilon)$-approximate median $\med$ of $\{s_1,s_2,s_3\}$. Consider an optimal alignment between $s_1,\med$, and $\med,s_2$, and $\med,s_3$. Use these three alignments to define an alignment $M_{1,2}$ from $s_1$ to $s_2$ via $\med$, and an alignment $M_{1,3}$ from $s_1$ to $s_3$ via $\med$. 
It is not hard to argue that both $A_{1,2}$ and $M_{1,2}$ are near-optimal alignments between $s_1,s_2$. Thus, both of them agree with the planted alignment $A_{1,2}^q$. Similarly, both $A_{1,3}$ and $M_{1,3}$ agree with the planted alignment $A_{1,3}^q$. Observe, $s_2,s_3$ are not independently generated from $s_1$ by $R_q$. The overlap between $A_{1,2}^q$ and $A_{1,3}^q$ essentially provides an alignment from $s_1$ to $s$. Again, using the robustness property of the planted alignment, this overlap between $A_{1,2}^q$ and $A_{1,3}^q$ agrees with the planted alignment from $s_1$ to $s$ by $A^p$ (actually the inverse of the alignment from $s$ to $s_1$). On the other hand, since $M_{1,2}$ agrees with $A_{1,2}^q$ and $M_{1,3}$ agrees with $A_{1,3}^q$, there is also a huge agreement between the overlap of $M_{1,2}, M_{1,3}$ and the overlap of $A_{1,2}^q,A_{1,3}^q$. The overlap between $M_{1,2}, M_{1,3}$ is essentially the optimal alignment from $s_1$ to $\med$ (that we have considered before). This in turn implies that there is a huge agreement between the optimal alignment from $s_1$ to $\med$ and the planted alignment from $s_1$ to $s$ by $R_p$. Hence, we can deduce that $\med$ and $s$ are the same in most of the portions, and thus has small edit distance. 
We provide the detailed analysis in Section~\ref{sec:unique-median}.

\subsection{Preliminaries}
\label{sec:prelims}

\paragraph{Alignments.}

For two strings $x,y$ of length $n$, 
an {\em alignment} is a function $A:[n]\to[n]\cup \{\bot\}$
that is monotonically increasing on the \emph{support} of $A$, 
defined as $\supp(A)\eqdef A^{-1}([n])$, 
and also satisfies $x[i]=y[A(i)]$ for all $i\in \supp(A)$.
An alignment is essentially a common subsequence of $x,y$, 
but provides the relevant location information. 
Define the \emph{length} (or support size) of the alignment as $\len(A)\eqdef |\supp(A)|$, 
i.e., the number of positions in $x$ (equivalently in $y$) that are matched by $A$.
Define the \emph{cost of $A$} to be the number of positions in $x$ and in $y$ that are not matched by $A$, 
i.e., $\cost(A)\eqdef 2 (n-\len(A))$. 
Let $ED(x,y)$ denotes the minimum cost of an alignment between $x,y$.

Given a substring $x'=x[i_1,i_2]$ of $x$, 
let $\ell_1\eqdef \min\set{k\in [i_1,i_2]\mid A(k)\neq \bot}$ 
and $\ell_2\eqdef \max\set{k\in [i_1,i_2]\mid A(k)\neq \bot}$,
be the first and last positions in the substring $x'$ that are matched by alignment $A$.
If the above is not well-defined, i.e., $A(k)=\bot$ for all $k\in [i_1,i_2]$, 
then by convention  $\ell_1=\ell_2=0$. 
Let $A(x')\eqdef y[A(\ell_1),A(\ell_2)]$ be the \emph{mapping} of $x'$ under $A$.
If $\ell_1=\ell_2=0$, then by convention $y'$ is an empty string.
Let $\mathcal{U}_{x'}\eqdef \{i_1\le k\le i_2; k\notin \supp(A)\}$ be the positions in $x'$ that are not aligned by $A$, 
and similarly let $\mathcal{U}_{y'}$ be the positions in $y'$ not aligned by $A$. 
These quantities are related because the number of matched positions in $x'$ is the same as in $y'$, 
giving us $|x'|-|\mathcal{U}_{x'}| = |y'|-|\mathcal{U}_{y'}|$. 
Define the \emph{cost of alignment $A$ on substring $x'$} to be 
$$\cost_A(x')\eqdef |\mathcal{U}_{x'}|+|\mathcal{U}_{y'}|.$$
%
By abusing the notation, sometimes we will also use $\cost_A([i_1,i_2])$ in place of $\cost_A(x')$. These definitions easily extend to strings of non-equal length, and even of infinite length. 

\begin{lemma}
\label{lem:subcost}
Given two strings $x,y$ and an alignment $A$, let $x_1,\dots,x_p$ be disjoint substrings of $x$. Then 
\begin{enumerate}
    \item $A(x_1),\dots,A(x_p)$ are disjoint substrings of $y$; and
    \item $\cost_A(x)\ge \sum_{i\in[p]}\cost_A(x_i)$
\end{enumerate}
\end{lemma}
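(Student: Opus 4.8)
The plan is to prove the two assertions in order, after first relabeling the substrings so that they appear in left-to-right order along $x$, say $x_i = x[a_i,b_i]$ with $b_i < a_{i+1}$ for every $i$. Throughout I will use that a valid alignment $A$ is a one-to-one matching, hence it is \emph{strictly} increasing on $\supp(A)$, and that for any substring $x' = x[i_1,i_2]$ the mapping $A(x')$ occupies exactly the interval $[A(\ell_1),A(\ell_2)]$ of $y$, where $\ell_1,\ell_2$ are the first and last indices of $[i_1,i_2]$ lying in $\supp(A)$ (with the convention that $A(x')$ is empty when this intersection is empty).

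For part~1, fix two consecutive substrings $x_i$ and $x_{i+1}$. If either one has empty intersection with $\supp(A)$, its mapping is the empty string and is trivially disjoint from everything; otherwise let $\ell_2^{(i)}$ be the last aligned index of $x_i$ and $\ell_1^{(i+1)}$ the first aligned index of $x_{i+1}$, so that $\ell_2^{(i)} \le b_i < a_{i+1} \le \ell_1^{(i+1)}$, and strict monotonicity of $A$ gives $A(\ell_2^{(i)}) < A(\ell_1^{(i+1)})$. Since $A(x_i)$ ends at position $A(\ell_2^{(i)})$ of $y$ and $A(x_{i+1})$ begins at position $A(\ell_1^{(i+1)})$, the two mapped substrings are disjoint; chaining this over all consecutive pairs yields that $A(x_1),\dots,A(x_p)$ are pairwise disjoint.

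For part~2, I would split $\cost_A$ into its two contributions and bound each separately. On the $x$-side, the sets $\mathcal{U}_{x_i} = [a_i,b_i]\setminus\supp(A)$ are pairwise disjoint (the intervals are) and each is contained in $\mathcal{U}_x = [1,n]\setminus\supp(A)$, so $\sum_i |\mathcal{U}_{x_i}| \le |\mathcal{U}_x|$. On the $y$-side, part~1 already shows the intervals $[A(\ell_1^{(i)}),A(\ell_2^{(i)})]$ underlying $A(x_1),\dots,A(x_p)$ are pairwise disjoint, and monotonicity of $A$ shows each of them is contained in the interval $[A(\ell_1),A(\ell_2)]$ underlying $A(x)$ (the first aligned index of $x$ is at most that of $x_i$, and symmetrically for the last); hence the sets $\mathcal{U}_{A(x_i)}$ are pairwise disjoint subsets of $\mathcal{U}_{A(x)}$, giving $\sum_i |\mathcal{U}_{A(x_i)}| \le |\mathcal{U}_{A(x)}|$. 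Adding the two bounds gives $\cost_A(x) = |\mathcal{U}_x| + |\mathcal{U}_{A(x)}| \ge \sum_i\big(|\mathcal{U}_{x_i}| + |\mathcal{U}_{A(x_i)}|\big) = \sum_i \cost_A(x_i)$.

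The only delicate point — the main obstacle, such as it is — is the bookkeeping of the degenerate cases in which some $x_i$ (or $x$ itself) is disjoint from $\supp(A)$, so that the convention $\ell_1=\ell_2=0$ and an empty mapping come into play: there the $y$-side contribution of that substring is $0$ while its $x$-side contribution is its full length, and one must check that the containment and disjointness claims above still hold, which they do trivially. Apart from this, the argument is entirely a matter of the monotonicity of $A$ together with the disjointness of the relevant index intervals in $x$ and in $y$.
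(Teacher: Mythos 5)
Your proof is correct and takes essentially the same approach as the paper's: part~1 from monotonicity of $A$, and part~2 from the pairwise disjointness of the $\mathcal{U}_{x_i}$ and of the $\mathcal{U}_{A(x_i)}$. You are somewhat more careful than the published proof — you explicitly justify the containments $\bigcup_i\mathcal{U}_{x_i}\subseteq\mathcal{U}_x$ and $\bigcup_i\mathcal{U}_{A(x_i)}\subseteq\mathcal{U}_{A(x)}$ (which the paper leaves implicit) and you flag the degenerate case of a substring disjoint from $\supp(A)$ — but the underlying argument is the same.
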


\begin{proof}
The first claim directly follows from the fact that $x_1,\dots, x_p$ are disjoint and $A$ is monotonically increasing.

Since $x_1,\dots, x_p$ are disjoint,
also $\mathcal{U}_{A(x_1)},\dots,\mathcal{U}_{A(x_p)}$ are disjoint.
Similarly, since $A(x_1),\dots,A(x_p)$ are disjoint,
also $\mathcal{U}_{x_1},\dots,\mathcal{U}_{x_p}$ are disjoint. 
Therefore, $\cost_A(x)\ge \sum_{i\in[p]} (|\mathcal{U}_{x_1}|+|\mathcal{U}_{A(x_1)}|)$. Hence we can claim $\cost_A(x)\ge \sum_{i\in[p]}\cost_A(x_i)$.
\end{proof}

For an alignment $A:[n]\to[n]\cup \{\bot\}$, we define the inverse alignment $A^{-1} : [n] \to [n] \cup \{\bot\}$ as follows: For each $j \in [n]$, if $A(i)=j$ for some $i \in [n]$, set $A^{-1}(j)=i$; otherwise, set $A^{-1}(j)=\bot$.

We use the notation $\circ$ for composition of two functions. Composition of two alignments (or inverse of alignments) is defined in a natural way.


\paragraph{Approximate Median.}
Given a set $S \subseteq \Sigma^*$ and a string $y \in \Sigma^*$, we refer the quantity $\sum_{x \in S}\ED(y,x)$ by the \emph{median objective value} of $S$ with respect to $y$, denoted by $\obj(S,y)$. 

Given a set $S \subseteq \Sigma^*$, a \emph{median} of $S$ is a string $y^* \in \Sigma^*$ (not necessarily from $S$) such that $\obj(S,y^*)$ is minimized, i.e., $y^* = \argmin_{y \in \Sigma^*} \obj(S,y)$. We refer $\obj(S,y^*)$ by $\opt(S)$. Whenever it will be clear from the context, for brevity we will drop $S$ from both $\obj(S,y)$ and $\opt(S)$. We call a string $\tilde{y}$ a \emph{$c$-approximate median}, for some $c>0$, of $S$ iff $\obj(S,\tilde{y})\le c \cdot \opt(S)$.

\section{Probabilistic Generative Model}
\label{sec:prob-model}
 
Let us first introduce a probabilistic generative model. 
For simplicity, our model is defined using infinite-length strings, 
but our algorithmic analysis will consider only a finite prefix of each string. 
Fixing a finite alphabet $\Sigma$, we denote by $\Sigma^{\N}$ 
the set of all infinite-length strings over $\Sigma$. 
We write $x\odot y$ to denote the concatenation of two finite-length strings $x$ and $y$. 

We actually describe two probabilistic models that are equivalent.
The first model $R_p$ is just the insertion-deletion channel 
mentioned in Section~\ref{sec:intro}. 
These models are given an arbitrary string $x$ (base string) 
to generate a random string $y$ (a trace),
but in our intended application $x$ is usually a random string. 
The second model $G_p$ consists of two stages, 
first "planting" an  alignment between two strings, 
and only then placing random symbols (accordingly).
This is more convenient in the analysis, 
because we often want to condition on the planted alignment 
and rely on the randomness in choosing symbols.

\paragraph{Model $R_p(x)$.}
Given an infinite-length string $x \in \Sigma^{\N}$ and a parameter $p\in [0,1]$.
Consider the following random procedure:
 \begin{enumerate} \compactify
 \item Initialize $i=1$. (We use $i$ to point to the current index positions of the input string.) Also, initialize an empty string $Out$.
 \item Do the following independently at random:
 \begin{enumerate} \compactify
 \item With probability $1-p$, set $Out\gets Out \odot x[i]$ and increment $i$. (Match $x[i]$.)
 \item With probability $p/2$, increment $i$. (Delete $x[i]$.)
 \item With probability $p/2$, choose independently uniformly at random a character $a\in\Sigma$ and set $Out \gets Out \odot a$. (Insert a random character.)
 \end{enumerate}
 \end{enumerate}
We call this procedure $R_p$, and denote the randomized output string $Out$ by $R_p(x)$. 

\paragraph{Model $G_p(x)$.}
This model first provides a randomized mapping (alignment) 
$A^p:\N\to \N \cup \{\bot\}$, 
and then uses this alignment (and $x$) to generate the output string. 
First, given a parameter $p \in [0,1]$, 
consider the following random procedure to get a mapping $A^p$:
 \begin{enumerate} \compactify
 \item Initialize $i=1$ and $j=1$. (Indices of current positions in the input and output strings, respectively.)
 \item Do the following independently at random:
 \begin{enumerate} \compactify 
 \item With probability $1-p$, set $A^p(i)\gets j$ and increment both $i$ and $j$. (Match $x[i]$.)
 \item With probability $p/2$, set $A^p(i)\gets \bot$ and increment $i$. (Delete $x[i]$.)
 \item With probability $p/2$, increment $j$. (Insertion.)
 \end{enumerate} 
 \end{enumerate}
 Next, use this  $A^p$ 
 and a given string $x \in \Sigma^{\N}$ 
 to generate a string $y\in \Sigma^{\N}$ as follows. 
 For each $j \in \N$,
 \begin{itemize} \compactify 
 \item If there is $i\in\N$ with $A^p(i)=j$ 
 then set $y[j]\gets x[i]$. (Match $x[i]$.)
 \item Otherwise, choose independently uniformly at random a character $a\in\Sigma$ and set $y[j]\gets a$. (Insert a random character.)
 \end{itemize}
We denote by $G_p(x)$ the randomized string $y$ generated as above. By construction, $A^p$ is an alignment between $x$ and $G_p(x)$. 

\paragraph{Basic Properties.}
We claim next that $R_p$ and $G_p$ are equivalent, which is useful because we find it more convenient to analyze $G_p$.
We use $X_1 \deq X_2$ to denote that two random variable $X_i \sim D_i$, $i \in \{1,2\}$ have equal distribution, i.e., $D_1 = D_2$.
The next two propositions are immediate. 
 
\begin{proposition}
\label{prop:identical-model}
For every string $x \in \Sigma^{\N}$ and $p \in [0,1]$, 
we have $R_p(x) \deq G_p(x)$.
\end{proposition}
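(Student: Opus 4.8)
The plan is to show that the two random processes $R_p(x)$ and $G_p(x)$ can be coupled so that they produce identical output strings with identical probability. The key observation is that both processes make, at each step, an i.i.d.\ three-way choice with the same probabilities $1-p$, $p/2$, $p/2$; the only difference is that $R_p$ interleaves the choice of inserted characters with the structural choices, whereas $G_p$ first commits to the entire structural sequence (the alignment $A^p$) and only afterwards fills in the inserted characters. Since the inserted characters are in both cases chosen independently and uniformly from $\Sigma$, independently of everything else, this reordering does not change the joint distribution.

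Concretely, I would first observe that a run of either process is determined by (i) an infinite sequence of labels $(\ell_1,\ell_2,\dots)\in\{\mathrm{match},\mathrm{del},\mathrm{ins}\}^{\N}$, each drawn independently with probabilities $1-p,p/2,p/2$, together with (ii) for each index $t$ with $\ell_t=\mathrm{ins}$, an independent uniform character $a_t\in\Sigma$. In $R_p$, at step $t$ the label $\ell_t$ determines whether $x$'s current character is appended and the pointer $i$ advanced (match), the pointer $i$ advanced with nothing appended (delete), or $a_t$ appended (insert). In $G_p$, the same label sequence $(\ell_t)$ is exactly what the first stage uses to build $A^p$: a match sets $A^p(i)\gets j$ and advances both pointers, a delete sets $A^p(i)\gets\bot$ and advances $i$, an insert advances $j$ only; and then the second stage fills position $j$ of the output either with the matched $x[i]$ (when $j\in\mathrm{image}(A^p)$) or with the fresh uniform character $a_t$ corresponding to that insertion step. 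I would then check by a straightforward induction on $t$ that, under this common source of randomness, the output string produced by $R_p$ after processing the first $t$ labels equals the output string produced by $G_p$ when its second stage fills the positions generated by the first $t$ labels: each label contributes the same symbol (or no symbol) to the output in both processes, and the correspondence between "the $k$-th inserted character in $R_p$" and "the $k$-th non-image position of $A^p$ filled in $G_p$" is exactly the identity on the insertion steps. Hence the two output strings agree pointwise, so $R_p(x)\deq G_p(x)$.

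I do not expect a genuine obstacle here — the statement is essentially a bookkeeping identity — but the one place requiring mild care is the interface between the structural randomness and the character randomness: one must be explicit that in $G_p$ the second-stage characters are drawn independently of $A^p$ and uniformly, matching the insertion characters of $R_p$, and that the indexing of insertion steps lines up (the $k$-th insertion in the label sequence corresponds to the $k$-th position of $y$ outside $\mathrm{image}(A^p)$, in order). Making the coupling explicit via the shared $(\ell_t,a_t)$ sequence handles this cleanly and keeps the induction routine.
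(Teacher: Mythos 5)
Your coupling argument is correct, and it is exactly the natural way to make the claim rigorous: the paper itself offers no proof, stating only that the proposition is "immediate." Sharing the i.i.d.\ label sequence $(\ell_t)$ and the insertion characters $(a_t)$ between the two processes, and observing that $G_p$ merely defers the commitment of insertion characters to a second stage while $R_p$ interleaves them, is precisely the bookkeeping the paper has in mind; your induction on $t$ and your explicit matching of the $k$-th insertion label with the $k$-th non-image position of $A^p$ supply the details the authors chose to omit.
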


\begin{proposition}[Transitivity]
\label{prop:dist-transitivity}
For every $x \in \Sigma^{\N}$ and $p \in [0,1]$,
let 
\begin{equation} \label{eq:qofp}
  q(p)\eqdef\frac{p(4-3p)}{2-p^2}.
\end{equation}
Then $G_p(G_p(x)) \deq G_{q(p)}(x)$.
\end{proposition}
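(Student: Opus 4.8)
The plan is to prove Proposition~\ref{prop:dist-transitivity} by comparing the single-step behavior of the composed channel $G_p \circ G_p$ with that of the direct channel $G_{q(p)}$, using the equivalence $R_p \deq G_p$ from Proposition~\ref{prop:identical-model} so that I may reason about $R_p$, whose per-character description is simplest. Since both $R_p$ and $G_p$ process the base string one character at a time with fresh independent randomness, it suffices to analyze what happens to a single input character $x[i]$ as it passes first through one copy of $R_p$ and then through a second independent copy of $R_p$, and to show that the induced distribution on ``how many output characters are produced, and whether they are copies of $x[i]$ or freshly random'' matches that of a single $R_{q(p)}$ step. Concretely, I would track three quantities for one input symbol: (i) the probability it is ultimately deleted (no surviving copy), (ii) the probability exactly one copy survives that equals $x[i]$, and (iii) the distribution of the number of ``pure insertion'' symbols (uniformly random, independent of $x$) that get interleaved around it.

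First I would compute the deletion probability. A symbol $x[i]$ survives the first channel with probability $1-p$ (it is matched), and then survives the second channel again with probability $1-p$; it is deleted if the first channel deletes it (prob $p/2$), or the first channel keeps it but the second deletes it (prob $(1-p)\cdot p/2$). So the overall deletion probability is $p/2 + (1-p)p/2$. However — and this is the subtle point — in the composed model, a symbol that was \emph{inserted} by the first channel is a uniformly random character, and when it then passes through the second channel it may itself be deleted or survive; so I must be careful that the ``insertion'' events of the composition are not simply additive. The cleaner route is to compute, for the composition, the per-step probabilities of the three canonical $R$-events (match, delete, insert) by conditioning on the first channel's action and normalizing appropriately, i.e., to show that the composed process is itself an i.i.d.\ per-character process of exactly the $R_q$ form, and then just read off $q$. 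I would set up the computation as a small Markov chain / renewal argument: between consecutive ``match or delete'' actions of the first channel (which correspond to consuming one input symbol), the first channel emits a $\mathrm{Geom}(1-p)$-distributed number of inserted symbols, each of which the second channel independently keeps (prob $1-p$), deletes (prob $p/2$), or precedes by its own geometric run of insertions. Summing the resulting geometric series in $p$ yields the probability that a given input symbol contributes a matched output symbol versus is ``lost,'' and the algebra collapses to $q(p) = \frac{p(4-3p)}{2-p^2}$; the factor $2-p^2$ in the denominator is exactly the normalization from the geometric sums, which is why $q$ is a ratio rather than a polynomial.

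The key steps in order are therefore: (1) invoke Proposition~\ref{prop:identical-model} to replace $G_p$ by $R_p$ on both levels; (2) observe the per-input-character independence of $R_p$ and reduce to analyzing one input symbol; (3) set up the renewal/geometric-series computation that collapses the two-stage emission pattern around one input symbol into a single i.i.d.\ step of the $R$-type, identifying its match/delete/insert probabilities; (4) verify algebraically that the resulting effective noise rate equals $q(p)$, equivalently that the match probability equals $1-q(p) = \frac{2-p^2 - p(4-3p)}{2-p^2} = \frac{(1-p)^2(2+?)}{\cdots}$ — i.e.\ check $1-q(p)$ factors correctly so that delete and insert probabilities are each $q(p)/2$ by the symmetry of the construction; (5) conclude $R_p(R_p(x)) \deq R_{q(p)}(x)$ and hence, by Proposition~\ref{prop:identical-model} again, $G_p(G_p(x)) \deq G_{q(p)}(x)$.

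The main obstacle I expect is step (3): making rigorous the claim that the composition is \emph{exactly} an $R$-type channel (i.i.d.\ per input character with the three canonical moves) rather than some more complicated process — in particular, arguing that the number and randomness of interleaved insertions in the composition match the geometric-single-random-character structure of a genuine $R_q$ step, and carefully justifying that each insertion of the composed channel outputs a uniformly random character independent of everything else (inserted characters from the first stage are uniform, and those surviving the second stage stay uniform; characters inserted by the second stage are uniform directly). Handling the accounting at the boundaries between consecutive input symbols without double-counting the insertion runs is where a careful argument is needed; once that structure is pinned down, the identification of $q(p)$ is a routine geometric-series computation that I would not grind through here.
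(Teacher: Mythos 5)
Your plan is the natural one, and it is evidently the calculation the paper has in mind (the proposition is stated as ``immediate'' without proof): compute the per-character survival probability $\frac{1-p}{1-p/2}$ in $R_p$, observe that it squares under composition, and choose $q$ so that $\bigl(\frac{1-p}{1-p/2}\bigr)^2=\frac{1-q}{1-q/2}$, which indeed gives $q(p)=\frac{p(4-3p)}{2-p^2}$. You correctly flag step (3), proving that $R_p\circ R_p$ is \emph{exactly} an i.i.d.\ per-character $R$-type channel, as the crux, and this is precisely where the argument breaks down --- not because it is hard to make rigorous, but because the claim is false. Matching the single scalar ``survival probability'' does not pin down the process. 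In $R_q$, the number of freshly inserted output symbols preceding input symbol $i$ is geometric and \emph{independent} of whether $x[i]$ survives; in $R_p\circ R_p$ these two quantities are correlated (an $x[i]$ that survives stage~1 picks up an extra geometric run of stage-2 insertions just before its output, while an $x[i]$ deleted in stage~1 does not). The renewal/generating-function computation you defer to step~(4) would reveal that the two-stage generating functions do not collapse to the single-geometric $R_q$ form for any $q$.

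A concrete check: writing $A^p_1,A^p_2$ for the planted alignments of the two stages and $A=A^p_2\circ A^p_1$,
\[
 \Pr[A(1)=1]=\sum_{k\ge1}\Pr[A^p_1(1)=k]\,\Pr[A^p_2(k)=1]
 =\sum_{k\ge1}\bigl((p/2)^{k-1}(1-p)\bigr)^2
 =\frac{(1-p)^2}{1-p^2/4},
\]
whereas in $G_q$ one has $\Pr[A^q(1)=1]=1-q=\frac{2(1-p)^2}{2-p^2}$; these differ by $(1-p)^2\cdot\frac{2p^2}{(2-p^2)(4-p^2)}>0$ for every $p\in(0,1)$. The mismatch is not an artifact of tracking alignments rather than strings: with $\Sigma=\{0,1\}$ and $x$ the string $0111\cdots$ (a $0$ followed by all $1$'s), $\Pr[y[1]=0]$ works out to $\frac{(1-p)^2}{1-p^2/4}+\frac{(1-p)p/4}{(1-p/2)^2}+\frac{p/4}{1-p/2}$ under $G_p(G_p(x))$ but $(1-q)+\frac{q/4}{1-q/2}$ under $G_q(x)$, and these disagree already at order $p^2$ (numerically $\approx 0.8633$ vs.\ $\approx 0.8653$ at $p=0.1$). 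So the exact distributional identity $G_p(G_p(x))\deq G_{q(p)}(x)$ does not hold, neither your proposal nor the paper's intended argument can be completed as stated, and the proposition would either need to be weakened to an approximate statement (the two channels do agree to leading order in $p$) or the downstream analysis in Section~5 must work directly with the genuine two-stage alignment rather than a single $G_q$ proxy.
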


\paragraph{Additional Properties (Random Base String).} 
Let $U$ be the uniform distribution over strings $x\in\Sigma^{\N}$, i.e., each character $x[i]$ is chosen uniformly at random and independently from $\Sigma$. 
We now state two important observations regarding the process $G_p$. 
The first one is a direct corollary of Proposition~\ref{prop:dist-transitivity}. %
The second observation follows because the probability of insertion is the same as that of deletion at any index in the random process $G_p$. 

\begin{corollary}[Transitivity]
\label{cor:dist-pair-transitivity}
Let $p \in [0,1]$ and let $q(p)$ be as in~\eqref{eq:qofp}. 
Draw a random string $X\sim U$, and use it to draw $Y\sim G_p(G_p(X))$ and $Z\sim G_{q(p)}(X)$. Then $(X,Y) \deq (X,Z)$.
\end{corollary}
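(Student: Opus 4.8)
\textbf{Proof plan for Corollary~\ref{cor:dist-pair-transitivity}.}

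The plan is to reduce the joint-distribution claim to the single-string claim of Proposition~\ref{prop:dist-transitivity} by exploiting the particular way the model $G_p$ couples a base string with its trace through the alignment $A^p$. First I would record the following structural observation about $G_p$: for any fixed base string $x$, the pair $(x, G_p(x))$ is obtained by first sampling the alignment $A^p$ (which depends only on $p$, not on $x$), and then filling in $y[j]$ either as $x[i]$ when $A^p(i)=j$ or as a fresh uniform symbol otherwise. Thus if $X\sim U$ is itself uniformly random, then conditioned on $A^p$, every coordinate of $X$ that is \emph{not} in $\supp(A^p)$ is uniform and independent of everything, and every coordinate of $Y=G_p(X)$ that is an inserted position is likewise uniform and independent; the only "shared randomness" between $X$ and $Y$ flows through the matched coordinates, where $Y[j]=X[A^{-1}(j)]$ is a uniform symbol that is perfectly correlated across the two strings. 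The point is that the law of $(X,Y)$ is \emph{completely determined} by (i) the law of the alignment between them, and (ii) the fact that $X\sim U$; no further information about the generative process matters.

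Next I would apply this observation twice. On one hand, drawing $Y\sim G_p(G_p(X))$ means composing two independent planted alignments $A^p_1$ (from $X$ to the intermediate string) and $A^p_2$ (from the intermediate string to $Y$); the resulting alignment from $X$ to $Y$ is $A^p_2\circ A^p_1$, and by Proposition~\ref{prop:dist-transitivity} (which asserts $G_p(G_p(x))\deq G_{q(p)}(x)$ for \emph{every fixed} $x$, hence in particular couples the alignments so that $A^p_2\circ A^p_1$ has the same law as a single planted alignment $A^{q(p)}$) this composed alignment is distributed exactly as the planted alignment of $G_{q(p)}$. On the other hand, drawing $Z\sim G_{q(p)}(X)$ gives the pair $(X,Z)$ with alignment $A^{q(p)}$. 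Since in both cases $X\sim U$ and the alignment has the same distribution, the structural observation above forces $(X,Y)\deq(X,Z)$: one can build an explicit coupling by first sampling a common alignment $A$ from the shared distribution, then sampling $X\sim U$, and then filling $Y$ (resp.\ $Z$) deterministically as $X$ on $\supp(A^{-1})$ together with fresh uniform symbols on the inserted positions — this produces both joint laws simultaneously.

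The one point that needs a little care — and I expect it to be the main (though modest) obstacle — is justifying that Proposition~\ref{prop:dist-transitivity} really does give equality of the \emph{alignment} distributions and not merely of the output-string distributions, so that the substitution of $A^p_2\circ A^p_1$ by $A^{q(p)}$ is legitimate at the level of the coupled pair. Concretely, one must check that when two independent $G_p$-alignments are composed, deleted-then-reinserted coordinates and other cancellation patterns are handled correctly, so that the induced map $\N\to\N\cup\{\bot\}$ from $X$ to $Y$ is genuinely an alignment distributed as $A^{q(p)}$ — this is exactly the computation sketched in the (commented-out) proof of Proposition~\ref{prop:dist-transitivity}, tracking the per-coordinate "intact" and "deleted" probabilities $\frac{1-p}{1-p/2}$ and $\frac{p/2}{1-p/2}$ and their two-fold compositions. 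Once that bookkeeping is in place, the corollary follows immediately from the observation that the joint law of base string and trace is a function only of (the law of) their connecting alignment together with the uniform law of the base string.
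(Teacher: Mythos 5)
Your proposal is correct in outline and shows a genuine understanding of the $G_p$ model, but it routes through a strictly stronger claim than Proposition~\ref{prop:dist-transitivity} actually states, and there is a much shorter path that sidesteps the issue entirely.

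You reduce the joint-distribution equality to the claim that the composed planted alignment $A^p_2\circ A^p_1$ has the same law as a single planted alignment $A^{q(p)}$. That claim is (I believe) true, and your structural observation---that when $X\sim U$, the law of $(X,G_p(X))$ depends only on the law of the planted alignment together with the uniform law of $X$---is correct and is a nice way to see what is going on. But Proposition~\ref{prop:dist-transitivity} as stated only asserts equality of the \emph{output string} distributions $G_p(G_p(x))\deq G_{q(p)}(x)$ for fixed $x$; it says nothing about alignment laws. You flag this yourself as "the one point that needs a little care," and you are right that the renewal-process bookkeeping (geometric insertion runs composing correctly, deleted-then-reinserted coordinates cancelling, etc.) would have to be worked out in full to close the gap. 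As written, that part of the argument is a genuine loose end.

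The intended argument is considerably shorter and uses Proposition~\ref{prop:dist-transitivity} exactly as stated, as a black box. The processes in the corollary both draw $X\sim U$, so $X$ has the same marginal law in both setups. Conditioned on $X=x$, the law of $Y$ is by definition that of $G_p(G_p(x))$, and the law of $Z$ is that of $G_{q(p)}(x)$; by Proposition~\ref{prop:dist-transitivity} these two conditional laws coincide for every fixed $x$. Equal marginals on $X$ together with equal conditional laws of the second coordinate given $X$ give $(X,Y)\deq(X,Z)$ immediately. This is why the paper calls the corollary "direct": no analysis of composed alignments is needed. Your coupling construction (sample a common alignment, then $X$, then fill in) is a correct and instructive way to exhibit the equality once the alignment-law claim is in hand, but it is not the most economical route, and it takes on a verification burden that the simple conditioning argument does not.
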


\begin{proposition}[Symmetry]
\label{prop:dist-symmetry}
Let $p \in [0,1]$. Draw a random string $X\sim U$ and use it to draw another string $Y\sim G_p(X)$. Then $(X,Y) \deq (Y,X)$.
\end{proposition}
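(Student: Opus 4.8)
By Proposition~\ref{prop:identical-model} it suffices to work with the model $G_p$, whose two-stage structure makes the argument transparent: the plan is to condition on the planted alignment $A^p$ and then exploit that, in the process generating $A^p$, a deletion and an insertion occur with the same probability $p/2$.

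Concretely, the generation of $A^p$ is driven by an i.i.d.\ sequence $\tau=(\tau_1,\tau_2,\dots)$ of step-types over $\{M,D,I\}$, with $\Pr[\tau_t=M]=1-p$ and $\Pr[\tau_t=D]=\Pr[\tau_t=I]=p/2$; reading $\tau$ left to right and advancing the input pointer on $M,D$ steps and the output pointer on $M,I$ steps recovers $A^p$ (each pointer tends to infinity almost surely, since each step-type has positive probability). Since $X\sim U$ is drawn independently of $\tau$, conditioned on $\tau$ the pair $(X,Y)$ with $Y\sim G_p(X)$ is obtained as follows: attach to each $M$-step one uniformly random symbol placed in both $X$ and $Y$ at the corresponding matched positions, to each $D$-step one uniformly random symbol placed only in $X$, and to each $I$-step one uniformly random symbol placed only in $Y$, all these symbols mutually independent. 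Call the resulting conditional law $\nu_\tau$.

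Now let $\sigma$ be the involution on $\{M,D,I\}^{\N}$ that swaps the letters $D$ and $I$ in every coordinate. Because $\Pr[\tau_t=D]=\Pr[\tau_t=I]$, the law of $\tau$ is invariant under $\sigma$. Moreover, realizing $\nu_t$ for all type sequences $t$ simultaneously with one common family of i.i.d.\ uniform ``slot symbols'', a direct check of the pointer bookkeeping shows that the string built for $X$ from $\sigma(t)$ is exactly the string built for $Y$ from $t$, and symmetrically for the other coordinate; equivalently, $\nu_{\sigma(t)}$ is $\nu_t$ with the two coordinates swapped. Combining this with the $\sigma$-invariance of the law of $\tau$ and averaging over $\tau$ gives that the law of $(X,Y)$ equals the law of $(Y,X)$, i.e.\ $(X,Y)\deq(Y,X)$.

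The only point requiring real care is the position bookkeeping in that coupling --- verifying that under $\sigma$ the ``output stream'' of one instance coincides symbol-by-symbol, and position-by-position on every finite prefix, with the ``input stream'' of the other --- together with the routine measure-theoretic setup on $\Sigma^{\N}$ (e.g.\ via Kolmogorov extension, so that $\nu_\tau$ and the averaging are well defined). Everything else is a one-line change of variables.
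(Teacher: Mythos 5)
Your proof is correct and makes rigorous exactly the observation the paper relies on: the paper offers no written proof, only the remark that ``the probability of insertion is the same as that of deletion at any index in the random process $G_p$,'' which is precisely your measure-preserving involution $\sigma$ swapping $D$- and $I$-steps. Your pointer-bookkeeping check (that $i'_t=j_t$ and $j'_t=i_t$ under the shared-slot coupling) is the right way to turn that one-line symmetry remark into a complete argument; the only nit is the aside that ``each step-type has positive probability,'' which fails at $p=0$ and $p=1$, though the needed conclusion (both pointers diverge a.s.) still holds there.
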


\section{Robustness of the Insertion-Deletion Channel}
\label{sec:unique-edit}

In this section we analyze the finite-length version of 
our probabilistic model $G_p$ (from Section~\ref{sec:prob-model}),
which gets a random string $x\in\Sigma^n$ 
and generates from it a trace $y$.
We provide a high-probability estimates 
for the cost of the planted alignment $A^p$ between $x$ and $y$ (Lemma~\ref{lem:number-edit-operation}), 
and for the optimal alignment (i.e., edit distance) between the two strings
(Lemmas~\ref{lem:editbound-small-alphabet}
and~\ref{lem:editbound-large-alphabet}).
It follows that with high probability 
the planted alignment $A^p$ is near-optimal. 
We then further prove that the planted alignment is robust,
in the sense that, with high probability, 
every near-optimal alignment between the two strings 
must "agree" with the planted alignment $A^p$ on all but a small fraction 
of the edit operations 
(Lemmas~\ref{lem:unique-alignment} and~\ref{lem:unique-large-alignment}). 

Assume henceforth that $x\in\Sigma^n$ is a random string $x$,
and given a parameter $p>0$, 
generate from it a string $y$ by the random process $G_p$ described 
in Section~\ref{sec:prob-model},
denoting by $A^p_{x,y}$ the random mapping used in this process.
A small difference here is that now $x$ has finite length,
but it can also be viewed as an $n$-length prefix of an infinite string.
Similarly, now $y$ has finite length and is obtained by applying $G_p$ on $x[1,n]$,
and it can be viewed also as a finite prefix of an infinite string. 

Let $\mathcal{I}^{A^p_{x,y}}$ be the set of indices $i\in[n]$, 
for which process $G_p$ performs at least one insert/delete operation
after (not including) $x[i-1]$ and up to (including) $x[i]$
(i.e., inserting at least one character between $x[i-1],x[i]$, or deleting $x[i]$, or both). 
This information can clearly be described using $A^p_{x,y}$ alone 
(independently of $x$ and of the symbols chosen for insertions to $y$ in the second stage of process $G_p$); 
we omit the formal definition. 
When clear from the context, we shorten $\mathcal{I}^{A^p_{x,y}}$ to $\mathcal{I}$.

\begin{lemma}
\label{lem:number-edit-operation}
For every $i\in[n]$, the probability that $i\in\mathcal{I}$ is
\begin{equation} \label{eq:rofp}
  r=r(p)\eqdef \sum_{k=1}^{\infty} (2-p) (p/2)^k=p.
\end{equation}
For all  $\epsilon \in [r,1]$, we have
$\Pr [ |\mathcal{I}| \notin (1\pm\epsilon) r n ] 
    \leq 2e^{-{\epsilon^2 r n}/{3}}$. 
\end{lemma}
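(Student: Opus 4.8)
\textbf{Proof proposal for Lemma~\ref{lem:number-edit-operation}.}

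The plan is to compute the exact probability that $i\in\mathcal{I}$ by summing over the number of insert/delete operations that occur between $x[i-1]$ and $x[i]$, and then to obtain concentration of $|\mathcal{I}|$ via a Chernoff bound after resolving the (mild) dependencies among the events $\{i\in\mathcal{I}\}$. For the first part, I would read off the event from the description of the random process $G_p$: after matching $x[i-1]$ (or before any processing, when $i=1$), the process performs some number $k\ge 0$ of "insertion" steps (each with probability $p/2$), then either matches $x[i]$ (probability $1-p$) or deletes $x[i]$ (probability $p/2$). The index $i$ lies in $\mathcal{I}$ precisely when at least one insert/delete operation occurs in this stretch, i.e., when it is \emph{not} the case that zero insertions happen and $x[i]$ is matched. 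So $\Pr[i\notin\mathcal{I}]$ is the probability of a run of $k\ge 0$ insertions followed immediately by a match of $x[i]$ with $k=0$, which is just $1-p$; hence $\Pr[i\in\mathcal{I}] = p$. To match the stated form $r=\sum_{k=1}^{\infty}(2-p)(p/2)^k$, I would instead compute $\Pr[i\in\mathcal{I}]$ directly by conditioning on the number $k\ge 1$ of insert/delete operations landing in this block: the relevant sequences are "($k$ insertions then a match)" or "($k-1$ insertions then a delete)", contributing $(p/2)^k(1-p) + (p/2)^k$ (the second term after reindexing), which sums to $\sum_{k\ge 1}(p/2)^k(2-p)$. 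Evaluating the geometric series gives $(2-p)\cdot\frac{p/2}{1-p/2} = p$, confirming $r=p$.

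For the concentration statement, the subtlety is that the indicator random variables $\mathbf{1}[i\in\mathcal{I}]$ are not independent: the process is sequential and the "state" (current insertion run) carries over. The clean way around this is to note that between consecutive indices the process \emph{renews}: once $x[i]$ is matched, what happens afterward is independent of the past. Concretely, let $D_i$ be the number of characters deleted at position $i$ (zero or one) plus the number of characters inserted strictly between $x[i-1]$ and $x[i]$; then $i\in\mathcal{I}\iff D_i\ge 1$, and — this is the key observation — after we condition on $x[i]$ being matched (equivalently, on the process having "moved past" index $i$ with a match), the future evolution is a fresh copy of the process. More robustly, I would observe that the events $\{i\in\mathcal{I}\}$ for the indices $i$ that actually get matched are independent, but since here we want all of $[n]$, the cleanest route is: the sequence of "block types" (number of insertions before the decision at position $i$, and whether that decision is match or delete) are i.i.d.\ across $i\in[n]$ — each is governed by an independent fresh run of the geometric-insertion-then-decision mechanism — so $\mathbf{1}[1\in\mathcal{I}],\dots,\mathbf{1}[n\in\mathcal{I}]$ are in fact i.i.d.\ Bernoulli($p$). (One must be slightly careful that the insertion run "before position $i$" and "before position $i+1$" don't share anything; they don't, because the match of $x[i]$ cleanly separates them, and a deletion of $x[i]$ also terminates block $i$ and starts block $i+1$ afresh.) Given this, $|\mathcal{I}|$ is a sum of $n$ i.i.d.\ Bernoulli($p$) variables with mean $rn=pn$, and the standard multiplicative Chernoff bound yields $\Pr[|\mathcal{I}|\notin(1\pm\epsilon)rn]\le e^{-\epsilon^2 rn/3} + e^{-\epsilon^2 rn/2} \le 2e^{-\epsilon^2 rn/3}$ for $\epsilon\in(0,1]$; the condition $\epsilon\in[r,1]$ is only used to keep the tail meaningful (it is weaker than $\epsilon\le 1$, which is what the Chernoff bound needs).

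The main obstacle I anticipate is not any hard calculation but rather pinning down the independence/renewal claim rigorously from the operational description of $G_p$: one has to argue that the portion of the process governing block $i$ (the insertions attributed to the gap before $x[i]$ together with the match-or-delete decision on $x[i]$) is a well-defined, self-contained chunk of i.i.d.\ coin flips, and that these chunks for $i=1,\dots,n$ are disjoint and hence independent. The potential pitfall is the boundary between blocks — in particular making sure an insertion is unambiguously assigned to exactly one block and that a delete of $x[i]$ still leaves block $i+1$ starting from a clean state. Once that bookkeeping is done, the expectation computation and the Chernoff bound are routine, and the formula $r(p)=p$ falls out immediately.
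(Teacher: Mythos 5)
Your proposal is correct and follows essentially the same approach as the paper: compute $\Pr[i\in\mathcal{I}]$ by summing the geometric contributions $(p/2)^k(2-p)$ over $k\ge 1$ edit operations (or, more briefly, via the complement $1-(1-p)=p$), then apply a multiplicative Chernoff bound to the sum of indicators. The only difference is that you take care to justify the independence of the indicators $\mathbf{1}[i\in\mathcal{I}]$ via the renewal structure of $G_p$ (each block of coin flips is a self-contained, disjoint chunk), whereas the paper simply asserts "these events are independent" without elaboration; your extra paragraph closes that small gap rather than changing the route.
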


\begin{proof} 
For every $i\in[n]$, the probability that $G_p$ performs $k\ge1$ edit operations after $x[i-1]$ and up to $x[i]$ is $(p/2)^k+(p/2)^k(1-p)=(p/2)^k(2-p)$,
where the first summand represents $k-1$ insertions and one deletion, and the second summand represents $k$ insertions and no deletion. 
Thus $r=\Pr[i\in\mathcal{I}] = \sum_{k=1}^{\infty} (2-p) (p/2)^k = p$.

For every $i\in[n]$, the probability it appears in $\mathcal{I}$ is $r$. 
Then $\EX[|\mathcal{I}|]=r\cdot n$. 
These events are independent, hence by Chernoff's bound, the probability that $|\mathcal{I}|\notin (1\pm\epsilon)r n$ is at most $2e^{-{\epsilon^2 r n}/{3}}$.
\end{proof}

As shown in~\eqref{eq:rofp}, $r\eqdef \sum_{k=1}^{\infty} (2-p) (p/2)^k=p$. Hence from now on we replace $r$ by $p$. 
Given $\epsilon\in[p,1]$ and $i\in [n]$, 
we consider the event $\mathcal{S}_\epsilon(i)$, which informally means 
that process $G_p$ makes a single "well-spaced" edit operation at position $i$,
i.e., there is an edit operation at position $i$
and no other edit operations within $(\tfrac{2\epsilon}{p}$ positions away from $i$.
To define it formally, we separate it into two cases, an insertion and a deletion.
Observe that these events depend on $A^p$ alone. 
Let $\mathcal{S}^{del}_\epsilon(i)$ be the event that 
\begin{enumerate} \compactify
  \item $A^p(i+1)=A^p(i-1)+1$ (thus $A^p(i)=\bot$); and
  \item for all $j\in[2,\frac{2\epsilon}{p}]$,  
  we have $A^p(i+j)=A^p(i+j-1)+1$
  and $A^p(i-j)=A^p(i-j+1)-1$
  (in particular, they are not $\bot$). 
\end{enumerate}
Similarly, let $\mathcal{S}^{ins}_\epsilon(i)$ be the event that 
\begin{enumerate} \compactify 
  \item $A^p(i)=A^p(i-1)+2$ (thus no index is mapped to $A^p(i-1)+1$); 
  \item for all $j\in[1,\frac{2\epsilon}{p}]$,  
  we have $A^p(i+j)=A^p(i+j-1)+1$; and
  \item for all $j\in[2,\frac{2\epsilon}{p}]$,  
  and $A^p(i-j)=A^p(i-j+1)-1$.
\end{enumerate}
Now define 
the set of indices for which any of these two events happens 
\[
  \mathcal{\tilde{I}}_{\epsilon}^{A^p_{x,y}}
  \eqdef
  \set{ i\in[n] \mid 
  \text{event $\mathcal{S}_\epsilon(i)\eqdef \mathcal{S}^{del}_\epsilon(i) \cup \mathcal{S}^{ins}_\epsilon(i)$ occurs} }.
\]
When clear from the context, we shorten $\mathcal{\tilde{I}}_{\epsilon}^{A^p_{x,y}}$ to $\tilde{\mathcal{I}}$. 

\begin{lemma}
\label{lem:tildeI} 
For every $\epsilon \ge p$, 
we have 
$\Pr[ |\mathcal{\tilde{I}}| \le (1-5\epsilon)p n ]
  \leq e^{-{\epsilon^2 p^2 n}/{2}}$.
\end{lemma}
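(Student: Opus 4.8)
The plan is to lower-bound $|\tilde{\mathcal{I}}|$ by counting, among the indices $i$ where $G_p$ makes exactly one edit operation up to $x[i]$, how many are "well-spaced," and then apply a concentration inequality. First I would partition $[n]$ into consecutive blocks of length $L \eqdef 4\epsilon/p$ (so roughly $n/L = pn/(4\epsilon)$ blocks), and argue about each block separately to recover independence. The key observation is that the event $\mathcal{S}_\epsilon(i)$ depends only on the behavior of $A^p$ on a window of radius $2\epsilon/p$ around $i$; so if I look only at the \emph{center} $O(1)$-fraction of each block, the events for centers of different blocks depend on disjoint portions of the random process and are therefore independent.

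The main computation is to estimate, for a single block, the probability that its center index $i$ (or some canonically chosen index near the center) lies in $\tilde{\mathcal{I}}$. There are two contributions: the probability that a "single well-spaced deletion" occurs, and the probability of a "single well-spaced insertion." For a deletion at $i$: we need $A^p(i)=\bot$ (probability $p/2$ per the model) and, on the $2\cdot 2\epsilon/p$ surrounding positions, $G_p$ performs only matches — each match has probability $1-p$, and there are $O(\epsilon/p)$ of them, so the surrounding "clean" event has probability $(1-p)^{O(\epsilon/p)} = e^{-O(\epsilon)} \ge 1 - O(\epsilon)$. Multiplying, a single well-spaced deletion at a given center occurs with probability at least $\tfrac{p}{2}(1-O(\epsilon))$, and similarly for insertions, giving $\Pr[i \in \tilde{\mathcal{I}}] \ge p(1-O(\epsilon))$ with the constant in the $O(\cdot)$ small enough that, after also accounting for the fact that we only look at indices in the center sub-block of each block (another $(1-O(1))$ factor, or more carefully, we just need \emph{some} center index per block to qualify), the expected number of qualifying block-centers is at least $(1-5\epsilon)pn$ — here I would need to be slightly careful to track the constant $5$, choosing the block length and the "center window" fraction so the losses telescope into the stated slack $5\epsilon$. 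Since the per-block indicator random variables are independent Bernoulli's summing to $\tilde{\mathcal{I}}$ (or a subset of it), Chernoff's bound gives $\Pr[|\tilde{\mathcal{I}}| \le (1-5\epsilon)pn] \le e^{-\Theta(\epsilon^2 p^2 n)}$, which matches $e^{-\epsilon^2 p^2 n/2}$ after fixing constants.

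The hard part will be bookkeeping the constants so that the slack is exactly $5\epsilon$ and the deviation exponent is exactly $\epsilon^2 p^2 n/2$: one must pick the block length, decide precisely which index of each block serves as its "anchor," bound $(1-p)^{O(\epsilon/p)}$ from below by something like $1-c\epsilon$ with an explicit $c$, and verify that the Chernoff parameter works out — in particular that the relevant Bernoulli mean is $\ge (1-5\epsilon)p$ times $n$ divided by, but actually equal to, the number of independent trials, so that the standard multiplicative Chernoff bound $e^{-\delta^2 \mu/3}$ (or the additive form) yields the claimed $e^{-\epsilon^2 p^2 n /2}$. A cleaner alternative that avoids blocking is to note that $\tilde{\mathcal{I}} \supseteq \mathcal{I} \setminus (\text{bad indices near some other edit})$, bound $\mathbb{E}[|\mathcal{I} \setminus \tilde{\mathcal{I}}|] \le O(\epsilon)\cdot pn$ directly via linearity (an index in $\mathcal{I}$ is "bad" only if there is another edit within $2\epsilon/p$, which has probability $O(\epsilon)$), combine with Lemma~\ref{lem:number-edit-operation}'s lower tail $|\mathcal{I}| \ge (1-\epsilon)pn$, and then use a bounded-differences / McDiarmid inequality on the underlying independent per-step choices of $G_p$ to get concentration of $|\tilde{\mathcal{I}}|$ around its mean; I expect the authors take the blocking route since it makes the independence transparent, but either way the only real subtlety is constant-chasing, not a conceptual obstacle.
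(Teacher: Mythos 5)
Your primary plan (block into length-$L = 4\epsilon/p$ pieces and only count well-spaced edits at block centers) has a quantitative problem that is not just ``constant chasing'': it cannot produce a lower bound of the right order. If you look at roughly one anchor index per block so that the radius-$2\epsilon/p$ dependency windows are disjoint, you have only $n/L \approx pn/(4\epsilon)$ trials, each succeeding with probability $\approx p$. The expected number of qualifying anchors is therefore $\approx p^2 n/(4\epsilon)$, which is $\ll pn$ as soon as $\epsilon \gg p$ (and the lemma is asserted for all $\epsilon \geq p$). No amount of constant-tuning fixes this: the anchors form a sparse subset of $\tilde{\mathcal I}$, so Chernoff on them can only certify $|\tilde{\mathcal I}| \gtrsim p^2n/\epsilon$, not $(1-5\epsilon)pn$. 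Conversely, if you widen the ``center window'' of each block to a constant fraction of $L$ so the count is large enough, the dependency windows of adjacent blocks' center regions overlap (each has width $\approx 6\epsilon/p > L$) and the claimed independence is lost. You must count essentially \emph{every} index $i$, which forces you to cope with overlapping windows rather than avoid them.

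Your ``cleaner alternative'' is in fact the route the paper takes, and you should have led with it. The paper does not go through $|\mathcal I|$ and $|\mathcal I\setminus\tilde{\mathcal I}|$ separately: it simply sets $X_i$ to be the indicator of $\mathcal S_\epsilon(i)$ for every $i\in[n]$, computes $\Pr[X_i=1] = p(1-p)^{4\epsilon/p}/(1-p/2) \ge p(1-4\epsilon)$ so that $\EX[|\tilde{\mathcal I}|] \ge (1-4\epsilon)pn$, and then handles the short-range dependence via a Doob martingale and Azuma. The key observation that makes the bounded-differences constant work out is the one you would need too: within any window of length $4\epsilon/p$, at most two of the $X_j$ can be $1$ (well-spaced edits are, by definition, $\geq 2\epsilon/p$ apart), so revealing $X_i$ changes the conditional expectation of $X = \sum_j X_j$ by at most $2$, yielding $\Pr[X \le (1-5\epsilon)pn] \le e^{-\epsilon^2p^2n/2}$. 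So your second route is correct in spirit, but it is only sketched, the first route should be discarded, and the specific per-window bound $\sum_{i\le j<i+4\epsilon/p}X_j\le 2$ that controls the martingale increment is the nontrivial fact you would need to make explicit.
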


\begin{proof}
For an index $i\in[n]$, 
define the random variable $X_i\in\set{0,1}$ to be an indicator for the union event  
$\mathcal{S}^{del}_\epsilon(i) \cup \mathcal{S}^{ins}_\epsilon(i)$.
Observe that each of the two events, 
$\mathcal{S}^{del}_\epsilon(i)$ and $\mathcal{S}^{ins}_\epsilon(i)$, 
occurs with probability
$\frac{p}{2}(1-p)^{4\epsilon/p}/(1-p/2)$,
and these events are disjoint.
Hence, $\Pr[X_i=1]= p(1-p)^{4\epsilon/p}/(1-p/2)\ge p(1-p)^{4\epsilon/p} \ge p(1-4\epsilon)$ 

Next we prove a deviation bound for the random variable 
$X\eqdef \sum_{i\in[n]} X_i = |\mathcal{\tilde{I}}|$,
which has expectation is $\EX[X] \geq (1-4\epsilon)pn$. 
Observe that $\set{X|X_1,\ldots,X_i}_{i=1}^n$ is a Doob Martingale,
and let us apply the method of bounded differences.
Revealing $X_i$ (after $X_1,\ldots,X_{i-1}$ are already known)
might affect the value of $X_j$'s for $j< i+\frac{4\epsilon}{p}$, 
but by definition their sum is bounded 
$\sum_{i\le j <i+\frac{4\epsilon}{p}} X_j \leq 2$,
while the other $X_j$'s are independent of $X_i$ 
hence the expectation of $\sum_{j\geq i+\frac{2\epsilon}{p}} X_j$ by revealing it. 
Together, we see that 
$ | \EX[X|X_1,\dots,X_i] - \EX[X|X_1,\dots,X_{i-1}] | \le 2$,
and therefore by Azuma's inequality,  
$
  \Pr[X \leq (1-5\epsilon)pn]
  \le  \Pr[X \leq \EX[X] -\epsilon pn]
  < e^{-2\epsilon^2 p^2 n^2/(4n)}
  = e^{-\epsilon^2 p^2 n/2}
  $.
\end{proof}

\paragraph{Edit Distance (Optimal Alignment) between $x,y$.}

For each $i\in \mathcal{\tilde{I}}$, define a window $W_\epsilon^i=[i-\frac{\epsilon}{p},i+\frac{\epsilon}{p}]$.

\begin{lemma}
\label{lem:editbound-small-alphabet}
For every $\epsilon \in [15p\log \frac{1}{p},\frac16]$, 
we have
$\Pr[ \ED(x,y) < (1-6\epsilon)pn ] \leq 2e^{-\epsilon^2p^2 n/2}$.
\end{lemma}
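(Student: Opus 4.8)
The plan is to show that the planted alignment $A^p$ is near‑optimal, by proving that \emph{every} alignment of $x$ and $y$ must pay cost essentially $1$ near all but a negligible fraction of the well‑spaced single edits counted by Lemma~\ref{lem:tildeI}. First I would condition on the planted alignment $A^p$ (equivalently, on the set $\tilde{\mathcal{I}}$ and on which positions of $y$ are insertions), retaining only the randomness in the symbols of $x$ and of the inserted characters. By Lemma~\ref{lem:tildeI}, the event $|\tilde{\mathcal{I}}|\ge(1-5\epsilon)pn$ fails with probability at most $e^{-\epsilon^2p^2n/2}$, so assume it holds. The windows $\{W_\epsilon^i\}_{i\in\tilde{\mathcal{I}}}$ are pairwise disjoint, since the definition of $\mathcal{S}_\epsilon(i)$ forbids any edit within $\tfrac{2\epsilon}{p}$ positions of $i$, so consecutive elements of $\tilde{\mathcal{I}}$ are more than $\tfrac{2\epsilon}{p}$ apart while each $W_\epsilon^i$ has radius $\tfrac{\epsilon}{p}$. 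Now fix any alignment $M$ between $x,y$ with $\cost(M)<(1-6\epsilon)pn$. By Lemma~\ref{lem:subcost}, $\cost(M)\ge\sum_{i\in\tilde{\mathcal{I}}}\cost_M(W_\epsilon^i)$, and since each summand is a nonnegative integer, fewer than $(1-6\epsilon)pn$ indices $i\in\tilde{\mathcal{I}}$ have $\cost_M(W_\epsilon^i)\ge1$; hence the set $G_M$ of \emph{good windows} of $M$, those with $\cost_M(W_\epsilon^i)=0$, has size $|G_M|>(1-5\epsilon)pn-(1-6\epsilon)pn=\epsilon pn$.

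The heart of the argument is that good windows are unlikely, which I would prove after abstracting away $M$. If $\cost_M(W_\epsilon^i)=0$ then $M$ maps $x[W_\epsilon^i]$ bijectively onto a contiguous, equally long substring of $y$ by a single shift, i.e.\ $x[W_\epsilon^i]=y[\,i-\tfrac{\epsilon}{p}+d_i,\ i+\tfrac{\epsilon}{p}+d_i\,]$ as strings for some integer $d_i$. Since $A^p$ performs exactly one edit inside $W_\epsilon^i$ while this shift performs none, on at least one of the two halves of $W_\epsilon^i$ — an interval $J_i$ of length $\tfrac{\epsilon}{p}$ — the displacement of $M$ differs from that of $A^p$ by a fixed \emph{nonzero} amount. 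Rewriting each identity $x[j]=y[j+d_i]$ for $j\in J_i$ through $A^p$ turns it into an identity $x[j]=x[j']$ with $j'\ne j$ (the map $j\mapsto j'$ being injective), or $x[j]=(\text{a freshly inserted symbol})$; because $j\mapsto j'$ is fixed‑point free and locally shifts by the nonzero amount, the induced identity graph on the positions of $x$ (and the fresh symbols) is, after discarding a constant fraction of the identities to break cycles and to avoid reaching into other windows, a forest. This yields $\Omega(\epsilon/p)$ mutually independent events, each of probability $1/|\Sigma|=\tfrac12$, per good window, and these contributions are independent across good windows because the windows are disjoint. Consequently, for a fixed $G'\subseteq\tilde{\mathcal{I}}$ of size $k$ and a fixed displacement vector $(d_i)_{i\in G'}$, the probability that $x[W_\epsilon^i]=y[\,i-\tfrac{\epsilon}{p}+d_i,\ i+\tfrac{\epsilon}{p}+d_i\,]$ for \emph{every} $i\in G'$ is at most $2^{-\Omega(k\epsilon/p)}$.

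What remains — and this is the main obstacle — is a sufficiently tight union bound over the pair $(G',(d_i))$. The choices of $G'$ number at most $\binom{|\tilde{\mathcal{I}}|}{k}\le\binom{n}{k}$. For the displacements, the naive count $(O(pn))^k$ is hopeless; instead I would exploit two consequences of $\cost(M)<(1-6\epsilon)pn$. Ordering the good windows $i_1<i_2<\cdots$, the substring of $x$ spanning $W_\epsilon^{i_t}\cup W_\epsilon^{i_{t+1}}$ is mapped by $M$ onto a substring of $y$ whose length exceeds its own by exactly $d_{i_{t+1}}-d_{i_t}$, so (summing over even, resp.\ odd, $t$ and applying Lemma~\ref{lem:subcost}) $\sum_t|d_{i_{t+1}}-d_{i_t}|=O(\cost(M))=O(pn)$, and $|d_{i_1}|=O(pn)$ by comparing prefixes of $x$ and $y$. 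Hence the admissible displacement vectors number at most $O(pn)\cdot\binom{O(pn)+k}{k}\,2^{O(k)}$, and for $k>\epsilon pn$ the binomial factor is $\le(O(pn)/k)^{O(k)}=(O(1/\epsilon))^{O(k)}$, so the dependence on $n$ cancels against $\binom{n}{k}\le(en/k)^k=(O(1/(\epsilon p)))^k$. Multiplying, the contribution of a given $k$ is at most $O(pn)\cdot\bigl(O(\tfrac{1}{\epsilon^2 p})\cdot 2^{-\Omega(\epsilon/p)}\bigr)^k$; the hypothesis $\epsilon\ge 15p\log(1/p)$ forces $2^{-\Omega(\epsilon/p)}\le p^{\Omega(1)}$, which (for $p$ below a small absolute constant, where the hypothesis $p\in(0,c_0]$ enters) is far below $\epsilon^2 p$, so the base is at most $e^{-\epsilon p}$ and summing the geometric series over $k>\epsilon pn$ gives at most $e^{-\epsilon^2p^2n/2}$. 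Adding the failure probability of Lemma~\ref{lem:tildeI} yields $\Pr[\ED(x,y)<(1-6\epsilon)pn]\le 2e^{-\epsilon^2p^2n/2}$. The delicate points are precisely the "carefully crafted union bound" — making the per‑window coincidence events genuinely independent despite overlaps of the forests (the conservative charging above), and reducing the displacement count using the total‑variation bound so that the $n$‑dependence disappears — after which the constant $15$ in $\epsilon\ge 15p\log(1/p)$ is exactly what makes the exponents balance.
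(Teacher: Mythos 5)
Your proposal follows the same high-level strategy as the paper: condition on the planted alignment $A^p$, invoke Lemma~\ref{lem:tildeI} to get $|\tilde{\mathcal{I}}|\ge(1-5\epsilon)pn$, observe that a low-cost alignment $M$ must have at least $\epsilon pn$ windows $W_\epsilon^i$ ($i\in\tilde{\mathcal{I}}$) on which $\cost_M=0$, and then union bound over a family of "basic events" that does not depend on $M$. Where you diverge is the parametrization of that union bound, and the divergence is an unnecessary complication. You parametrize the basic events by $(G',(d_i))$ and declare the naive displacement count $(O(pn))^k$ "hopeless," so you reach for a total-variation-of-displacements argument. But since $M$ is monotone, the displacement vector on $G'$ is equivalent to an order-preserving map $G'\to[n]$, i.e., a set $\bar{S}\subset[n]$ of size $k$, so the naive count is actually $\binom{n}{k}\approx(1/(\epsilon p))^k$ — the same order as your count of $G'$, and already small enough against $|\Sigma|^{-\Omega(k\epsilon/p)}$ under the hypothesis $\epsilon\ge 15p\log(1/p)$. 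This is precisely the paper's union bound: it indexes the basic events $\mE_{S,\bar{S}}$ by two subsets $S,\bar{S}\subset[n]$ of size $\ell=\epsilon pn$ and pays only $\binom{n}{\ell}^2$, never needing the bound on $\sum_t|d_{i_{t+1}}-d_{i_t}|$.

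Two smaller points worth tightening. First, your independence argument across windows needs not only that the $W_\epsilon^i$ are disjoint in $x$, but also that the matched blocks in $y$ are pairwise disjoint; this is automatic for displacements arising from a monotone $M$, and the paper simply builds it into condition (ii) of $\mE_{S,\bar{S}}$ so the union is over tuples satisfying it. Second, your "discard a constant fraction to break cycles and obtain a forest" is correct in spirit but vaguer than it needs to be: the identity graph on positions of $x$ (plus fresh inserted symbols) has maximum degree $2$ — each position can be the left side of at most one requirement (windows in $x$ disjoint) and the pullback under $A^p$ of at most one position of $y$ in an image block (image blocks disjoint) — so a greedy matching already yields a $1/3$ fraction of pairwise-disjoint constraints, each independently of probability $1/|\Sigma|$. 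That is the clean route the paper takes, and it gives the $|\Sigma|^{-\epsilon\ell/(3p)}$ per-event bound directly without any cycle-breaking.
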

At a high level,
our proof avoids a direct union bound over all low-cost potential alignments, because there are too many of them.
Instead, we introduce a smaller set of basic events 
that "covers" all these potential alignments,
which is equivalent to carefully grouping the potential alignments 
to get a more "efficient" union bound. 

\begin{proof}
We assume henceforth that $A^p$ (the alignment from process $G_p$) is known 
and satisfies $|\mathcal{\tilde{I}}| > (1-5\epsilon)p n$,
which occurs with high probability by Lemma~\ref{lem:tildeI}. 
In other words, we condition on $A^p$ and proceed with a probabilistic analysis 
based only on the randomness of $x$ and of the characters inserted into $y$. 

Our plan is to define basic events $\mE_{S,\bar{S}}$
for every two subsets $S,\bar{S}\subset[n]$ of the same size $\ell=|S|=|\bar{S}|$,
representing positions in $x$ and in $y$, respectively. 
We will then show that our event of interest is bounded by these events   
\begin{align} \label{eq:editbound1}
  \BIGset{ \ED(x,y) < (1-6\epsilon)pn }
  \subseteq  
  \bigcup_{S,\bar{S} \mid \ell= \epsilon pn} \mE_{S,\bar{S}} ,
\end{align}
and bound the probability of each basic event by
\begin{align} \label{eq:editbound2}
  \Pr[ \mE_{S,\bar{S}} ] 
  \leq 
  |\Sigma|^{ -\epsilon\ell/(3p) } .
\end{align} 
The proof will then follow easily using a union bound and a simple calculation.

To define the basic event $\mE_{S,\bar{S}}$, we need some notation.  
Write $S=\{i_1,i_2,\dots,i_\ell\}$ in increasing order, 
and similarly $\bar S=\{\bar{i}_1,\bar{i}_2,\dots,\bar{i}_\ell\}$,
and use these to define $\ell$ blocks in $x$ and in $y$, namely, 
$B_{i_j}=x[i_j-\tfrac{\epsilon}{p} , i_j+\tfrac{\epsilon}{p}]$
and 
$\bar{B}_{i_j}=y[\bar{i}_j-\tfrac{\epsilon}{p} , \bar{i}_j+\tfrac{\epsilon}{p}]$. 
Notice that all the blocks are of the same length $1+2\tfrac{\epsilon}{p}$.
Now define $\mE_{S,\bar{S}}$ to be the event that
(i) $S\subseteq \tI$;%
\footnote{This implies that the blocks $B_{i_1},\dots,B_{i_\ell}$ in $x$ are disjoint.}
(ii) the blocks $\bar{B}_{\bar{}i_1},\dots,\bar{B}_{\bar{i}_\ell}$ in $y$ are disjoint;
and
(iii) each block $B_{i_j}$ in $x$ is equal to its corresponding block $\bar{B}_{\bar{i}_j}$ in $y$.
Notice that conditions (i) and (ii) actually depend only on $A^p$,
and thus can be viewed as restrictions on the choice of $S,\bar{S}$ in~\eqref{eq:editbound1};
with this viewpoint in mind, we can simply write 
\[
  \mE_{S,\bar{S}} 
  \eqdef 
  \set{ B_{i_1}=\bar{B}_{\bar{i}_1}, \ldots, B_{i_\ell}= \bar{B}_{\bar{i}_\ell} } .
\]

    We proceed to prove~\eqref{eq:editbound1}. 
Suppose there is an alignment $M$ from $x$ to $y$ with $\cost(M) < (1-6\epsilon)pn$,
and consider its cost around each position $i\in\tI$, 
namely, $\cost_M[i-\tfrac{\epsilon}{p} , i+\tfrac{\epsilon}{p}]$.
These intervals in $x$ are disjoint (by definition of $\tI$), 
and thus by Lemma~\ref{lem:subcost}, 
\[
  \sum_{i\in\tI} \cost_M(x[i-\tfrac{\epsilon}{p} , i+\tfrac{\epsilon}{p}])
  \le \cost(M)
  < (1-6\epsilon)pn.  
\]
Let $S\subset\tI$ include (the indices of) the summands equal to $0$.
Each other summand contributes at least $1$, 
thus $|\tI|-|S| = |\tI\setminus S|\cdot 1 < (1-6\epsilon)pn$ 
and by rearranging 
$
  |S| 
  > |\tI| - (1-6\epsilon)pn 
  > \epsilon pn 
$. 
To get the exact size $|S|=\epsilon pn$, 
we can replace $S$ with an arbitrary subset of it of the exact size. 
Now define $\bar S = \set{ M(i) \mid i\in S}$. 
It is easy to verify that the event $\mE_{S,\bar{S}}$ holds. 
Indeed, each $i\in S$ 
satisfies $\cost_M[i-\tfrac{\epsilon}{p} , i+\tfrac{\epsilon}{p}] = 0$,
which implies $M(i)\neq\bot$, and thus $|\bar{S}|=|S|$. 
Moreover, the block $x[i-\tfrac{\epsilon}{p} , i+\tfrac{\epsilon}{p}]$ in $x$ 
is equal to the corresponds block in $y$, 
and these blocks in $y$ are disjoint. 
This completes the proof of~\eqref{eq:editbound1}.

Next, we prove~\eqref{eq:editbound2}. 
Fix $S,\bar S\subset[n]$ of the same size $\ell$, 
and assume requirements (i) and (ii) hold (otherwise, the probability is 0). 
Let $B_{i_{j}}$ and $\bar{B}_{\bar{i}_j}$ be the corresponding blocks in $x$ and in $y$. 
Consider for now a given $j\in[\ell]$. 
The requirement $B_{i_{j}} = \bar{B}_{\bar{i}_j}$ 
means that for all $t\in\set{ -\tfrac{\epsilon}{p},\ldots,0,\ldots,+\tfrac{\epsilon}{p} }$ 
we require $x[i_j + t] = y[\bar{i}_j+t]$. 
The issue is that $x$ and $y$ are random but correlated through $A^p$;
in particular, the symbols $x[i_j + t]$ and $y[\bar{i}_j+t]$ are chosen independently at random 
unless $A^p$ aligns their positions, i.e., $A^p(i_j +t) = \bar{i}_j +t$. 
The key observation is that this last event cannot happen for both $t=-1$ and $t=1$,
because in that case, 
$A^p(i_j +1) - A^p(i_j -1) = \bar{i}_j+1 - (\bar{i}_j-1) = 2$; 
however, $i_j\in\tI$ implies that 
$A^p$ has exactly one edit operation (insertion or deletion)
in the interval $[i_j-1,i_j+1]$ (and not at its endpoints), 
thus $A^p(i_j +1) - A^p(i_j -1)\in \set{1,3}$. 
Assume first that $A^p(i_j +t) \neq \bar{i}_j +t$ for $t=1$. 
Then the same must hold also for all $t=2,\ldots,\tfrac{\epsilon}{p}$;
indeed, we again use that $i_j\in\tI$,
which implies that $A^p$ has no edit operations near position $i_j$,
thus $A^p(i_j +t) = A^p(i_j +1) + (t-1) \neq \bar{i}_j +1 + (t-1)$. 
%
The argument for $t=-1$ is similar, 
and we conclude that the requirement $B_{i_{j}} = \bar{B}_{\bar{i}_j}$ 
encompasses at least $\tfrac{\epsilon}{p}$ requirements of the form
$x[i_j + t] = y[\bar{i}_j+t]$ 
where these two positions are not aligned by $A^p$, 
and thus these two symbols are chosen independently at random. 

The above argument applies to every $j\in[\ell]$,
yielding overall at least $\ell\cdot \tfrac{\epsilon}{p}$ requirements of the form  
$x[i_j + t] = y[\bar{i}_j+t]$,
where these two symbols are chosen independently at random. 
Observe that each $y[\bar{i}_j+t]$ is 
either a character $x[t']$ (for $t'$ arising from $A^p$) or completely independent. 
Since each character of $x$ appears in at most $2$ requirements (once on each side),
we can extract a subset of at one-third of the requirements
such that the positions in $x$ appearing there are all distinct,
and thus the events are independent.%
\footnote{To see this, consider an auxiliary graph whose a vertex for each character $x[t]$,
and connect two by an edge if they appear in the same constraint.
Since every vertex has degree at most $2$, a greedy matching contains at least one third of the edges.
}
We overall obtain at least $\tfrac13 \ell\cdot \tfrac{\epsilon}{p}$ requirements,
each occurring independently with probability $1/|\Sigma|$,
and thus
\[
  \Pr[ \mE_{S,\bar{S}} ] 
  \leq 
  |\Sigma|^{ -\epsilon\ell/(3p) } .
\]

Finally, we are in position to prove the lemma. 
Combining \eqref{eq:editbound1} and \eqref{eq:editbound2} and a union bound
\begin{align*}
  \Pr[ \ED(x,y) < (1-6\epsilon)pn ]
  &\leq \binom{n}{\ell}^2 \cdot |\Sigma|^{ -\epsilon\ell/(3p) } 
  \leq \Big( \frac{n e}{\ell} \Big)^{2\ell} \cdot 2^{ -\epsilon\ell/(3p) } 
  =  \Big( \frac{e}{\epsilon p} \Big)^{2\epsilon pn} \cdot 2^{ -\epsilon^2 n/3 } 
  \\
  &\leq (p^2)^{-2\epsilon pn} \cdot 2^{ -\epsilon (15p\log(1/p)) n/3 } 
  \leq p^{-4\epsilon pn + 5\epsilon pn}
  \leq p^{\epsilon pn} .
\end{align*}

Recall that this was all conditioned on $A^p$,
which had error probability at most $e^{-{\epsilon^2 p^2 n}/{2}}$ (by Lemma~\ref{lem:tildeI}),
and now Lemma~\ref{lem:editbound-small-alphabet} follows by a union bound. 
\end{proof}


A similar bound holds for even smaller values of $\epsilon$, 
provided that the alphabet size is large. 
The proof is the same, except for the final calculation.  

\begin{lemma}
\label{lem:editbound-large-alphabet} 
Suppose $|\Sigma|\ge (\frac{1}{p})^{15}$.
Then for every $\epsilon \in [p,\frac16]$, 
we have 
$\Pr[ \ED(x,y) < (1-6\epsilon)pn ] \leq 2e^{-\epsilon^2p^2 n/2}$.
\end{lemma}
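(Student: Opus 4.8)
The plan is to reuse verbatim the structure of the proof of Lemma~\ref{lem:editbound-small-alphabet}, since the combinatorial core — the definition of the basic events $\mE_{S,\bar{S}}$, the containment~\eqref{eq:editbound1}, and the per-event bound~\eqref{eq:editbound2} stating $\Pr[\mE_{S,\bar{S}}]\le |\Sigma|^{-\epsilon\ell/(3p)}$ — never used anything about the alphabet size. In particular, conditioning on $A^p$ with $|\tI|>(1-5\epsilon)pn$ (valid by Lemma~\ref{lem:tildeI} whenever $\epsilon\ge p$), the same argument produces, for each candidate pair $S,\bar S$ with $\ell=\epsilon pn$, at least $\tfrac13 \ell\cdot\tfrac{\epsilon}{p}=\tfrac{\epsilon^2 n}{3}$ independent equalities between pairs of symbols drawn uniformly and independently from $\Sigma$, each holding with probability $1/|\Sigma|$.

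The only thing that changes is the final union-bound computation, where we now exploit $|\Sigma|\ge (1/p)^{15}$ instead of the stronger lower bound $\epsilon\ge 15p\log(1/p)$ on the accuracy parameter. First I would write
\begin{align*}
  \Pr[ \ED(x,y) < (1-6\epsilon)pn ]
  &\leq \binom{n}{\ell}^2 \cdot |\Sigma|^{ -\epsilon\ell/(3p) }
  \leq \Big( \frac{ne}{\ell} \Big)^{2\ell} \cdot |\Sigma|^{ -\epsilon\ell/(3p) }
  = \Big( \frac{e}{\epsilon p} \Big)^{2\epsilon pn} \cdot |\Sigma|^{ -\epsilon^2 n/3 } ,
\end{align*}
using $\ell=\epsilon pn$. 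Now, since $\epsilon\le 1/6$ we have $e/(\epsilon p)\le 1/p^2 \le |\Sigma|^{2/15}$, so the first factor is at most $|\Sigma|^{(4/15)\epsilon pn}\le |\Sigma|^{(4/15)\epsilon^2 n}$ (using $p\le \epsilon$), which is dominated by $|\Sigma|^{\epsilon^2 n/3}$; more precisely the product is at most $|\Sigma|^{-\epsilon^2 n/15}\le 2^{-\epsilon^2 n/15}\le e^{-\epsilon^2 p^2 n/2}$ for $n$ large. (The constant $15$ in the hypothesis is exactly what makes $1/p^2$ absorb into a small power of $|\Sigma|$, mirroring its role in Lemma~\ref{lem:editbound-small-alphabet} where $\log(1/p)$ was instead pushed into $\epsilon$.) Finally, un-conditioning on $A^p$ costs an additional $e^{-\epsilon^2 p^2 n/2}$ by Lemma~\ref{lem:tildeI}, and a union bound over the two events gives the claimed $2e^{-\epsilon^2 p^2 n/2}$.

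I do not expect any genuine obstacle here: the entire content is the final arithmetic, and the only care needed is to track that the hypothesis $|\Sigma|\ge (1/p)^{15}$ suffices to beat the $\binom{n}{\ell}^2$ term for all $\epsilon\in[p,1/6]$, in particular for the extreme case $\epsilon=p$. The rest of the proof is literally the proof of Lemma~\ref{lem:editbound-small-alphabet} with $\epsilon\ge 15p\log(1/p)$ deleted from the hypothesis and the alphabet lower bound inserted; I would simply say "the proof is identical to that of Lemma~\ref{lem:editbound-small-alphabet} up to~\eqref{eq:editbound2}, so we only redo the concluding calculation," to avoid repetition.
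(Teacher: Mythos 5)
Your approach is exactly the paper's: Lemma~\ref{lem:editbound-large-alphabet} is stated in the text with only the remark that "the proof is the same, except for the final calculation," and that is precisely what you do — reuse the containment~\eqref{eq:editbound1} and the per-event bound~\eqref{eq:editbound2} (both alphabet-agnostic), then redo the closing union-bound arithmetic using $|\Sigma|\ge (1/p)^{15}$ instead of $\epsilon\ge 15p\log(1/p)$.

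One arithmetic step is off, though. You assert $e/(\epsilon p)\le 1/p^2$ "since $\epsilon\le 1/6$," but that inequality is equivalent to $\epsilon\ge ep$, which does not follow from the hypothesis $\epsilon\ge p$ (and $\epsilon\le 1/6$ points the wrong way — it makes $e/(\epsilon p)$ larger). The correct consequence of $\epsilon\ge p$ is only $e/(\epsilon p)\le e/p^2$, an extra factor of $e$. This is harmless because the exponent $15$ has ample slack — for $p$ small enough one has $e\le(1/p)^{1/\ln(1/p)}$, so $e/p^2\le |\Sigma|^{2/15+o(1)}$ and the product is still $|\Sigma|^{-\Omega(\epsilon^2 n)}$ — but the justification as written is wrong and should be repaired (e.g., bound $(e/(\epsilon p))^{2\epsilon pn}\le e^{2\epsilon pn}\cdot p^{-4\epsilon pn}$ directly and absorb the $e^{2\epsilon pn}$ using $\ln(1/p)>2$). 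Similarly, the closing step $2^{-\epsilon^2 n/15}\le e^{-\epsilon^2 p^2 n/2}$ holds "for $p$ small," not "for $n$ large": both sides scale identically in $n$, and the requirement is $p^2\le 2\ln 2/15$, which is again fine in the regime the paper considers.
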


Following an argument similar to the proof of Lemma~\ref{lem:editbound-small-alphabet} we can make the following claim.

\begin{lemma}
\label{lem:zerocostbound} 
Let $\epsilon\in[15p\log \frac{1}{p},\frac{1}{6}]$. Then with probability at least $1-2e^{-\epsilon^2p^2 n/2}$, every alignment $M$ between $x,y$ satisfies $|\{i\in \mathcal{\tilde{I}}\mid \cost_M(x[i-\frac{\epsilon}{r},i+\frac{\epsilon}{r}])=0\}|\le 6\epsilon p n$.
\end{lemma}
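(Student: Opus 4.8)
The plan is to mimic the structure of the proof of Lemma~\ref{lem:editbound-small-alphabet} almost verbatim, but instead of bounding how many windows $W^i_\epsilon = [i-\tfrac{\epsilon}{p}, i+\tfrac{\epsilon}{p}]$ around indices $i \in \tilde{\mathcal{I}}$ an alignment can align at zero cost in order to lower-bound $\ED(x,y)$, we directly control that count itself. As before, condition on $A^p$ and assume $|\tilde{\mathcal{I}}| > (1-5\epsilon)pn$ (Lemma~\ref{lem:tildeI}, error probability $\le e^{-\epsilon^2 p^2 n /2}$). For subsets $S, \bar S \subset [n]$ of equal size $\ell$, define exactly the same basic event $\mE_{S,\bar S} = \set{B_{i_1} = \bar B_{\bar i_1}, \ldots, B_{i_\ell} = \bar B_{\bar i_\ell}}$ together with the side conditions (i) $S \subseteq \tilde{\mathcal{I}}$ and (ii) the blocks $\bar B_{\bar i_j}$ in $y$ are pairwise disjoint. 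The inequality~\eqref{eq:editbound2}, namely $\Pr[\mE_{S,\bar S}] \le |\Sigma|^{-\epsilon \ell/(3p)}$, holds verbatim since its proof only uses the randomness of $x$ and of the inserted characters, conditioned on $A^p$.

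The key step is the analogue of the covering statement~\eqref{eq:editbound1}. Suppose some alignment $M$ has the property that the set $T \eqdef \set{ i \in \tilde{\mathcal{I}} \mid \cost_M(x[i-\tfrac{\epsilon}{p}, i+\tfrac{\epsilon}{p}]) = 0 }$ has size $|T| > 6\epsilon pn$. By the disjointness of the windows $W^i_\epsilon$ for $i \in \tilde{\mathcal{I}}$ and monotonicity of $M$, each $i \in T$ has $M(i) \neq \bot$ (zero cost forces every position in the window, in particular $i$ itself, to be matched), and the images $\bar B_{\bar i_j}$ (with $\bar i_j = M(i_j)$) are pairwise disjoint substrings of $y$, each equal to the corresponding $B_{i_j}$. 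Thus taking any subset $S \subseteq T$ of size exactly $\ell = 6\epsilon pn$ and $\bar S = \set{M(i) : i \in S}$, the event $\mE_{S,\bar S}$ holds. Hence
\begin{align} \label{eq:zerocover}
  \BIGset{ \exists \text{ alignment } M : |\set{i \in \tilde{\mathcal{I}} \mid \cost_M(x[i-\tfrac{\epsilon}{p}, i+\tfrac{\epsilon}{p}]) = 0}| > 6\epsilon pn }
  \subseteq \bigcup_{S, \bar S \mid \ell = 6\epsilon pn} \mE_{S,\bar S}.
\end{align}

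To finish, I take a union bound over~\eqref{eq:zerocover} using~\eqref{eq:editbound2} with $\ell = 6\epsilon pn$, exactly as in the final display of Lemma~\ref{lem:editbound-small-alphabet}:
\begin{align*}
  \binom{n}{\ell}^2 |\Sigma|^{-\epsilon\ell/(3p)}
  \le \Big(\frac{ne}{\ell}\Big)^{2\ell} 2^{-\epsilon\ell/(3p)}
  = \Big(\frac{e}{6\epsilon p}\Big)^{12\epsilon pn} 2^{-2\epsilon^2 n}
  \le (p^2)^{-12\epsilon pn} \cdot 2^{-\epsilon(15 p \log(1/p))n \cdot (6/3)}
  \le p^{\epsilon pn},
\end{align*}
using $\epsilon \ge 15 p \log(1/p)$ and $|\Sigma| \ge 2$ (the constants are slack; one may absorb them by taking the threshold to be $6\epsilon pn$ rather than $\epsilon pn$, which is why the stated bound uses $6\epsilon pn$). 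Combining with the conditioning error from Lemma~\ref{lem:tildeI} via one more union bound gives the claimed probability $1 - 2e^{-\epsilon^2 p^2 n/2}$. The only real subtlety — and the one place to be careful — is the same independence-extraction argument underlying~\eqref{eq:editbound2}: within each block $B_{i_j}$ one must pick the $\tfrac{\epsilon}{p}$ "free" offsets (those $t$ for which $A^p(i_j + t) \neq \bar i_j + t$, guaranteed to exist on one side of $i_j$ since $i_j \in \tilde{\mathcal{I}}$), and then thin the resulting $\ell \cdot \tfrac{\epsilon}{p}$ equality constraints by a factor $3$ so that the underlying $x$-positions are all distinct and the events independent; this is already done in Lemma~\ref{lem:editbound-small-alphabet} and is inherited here unchanged.
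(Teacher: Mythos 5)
Your proof is correct and matches the argument the paper intends: the paper states Lemma~\ref{lem:zerocostbound} only by reference to Lemma~\ref{lem:editbound-small-alphabet}, and your proposal is exactly that argument made explicit, reusing the same basic events $\mE_{S,\bar S}$, the same bound~\eqref{eq:editbound2}, and the same covering idea, simply stopping at the count of zero-cost windows instead of converting it into an edit-distance lower bound. (One small remark: your parenthetical suggesting that the threshold $6\epsilon pn$ is \emph{needed} to absorb constants is not quite right — the computation with $\ell=\epsilon pn$ already closes, exactly as in Lemma~\ref{lem:editbound-small-alphabet}; the constant $6$ is chosen so that the count plugs cleanly into the arithmetic of Lemma~\ref{lem:unique-alignment}, not because the union bound requires it.)
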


\paragraph{Near-Optimal Alignments between $x,y$.}

Given $\epsilon>0$, a potential alignment $M$ between $x,y$, 
and an index $i\in [n]$,
define the event 
\begin{equation} \label{eq:EspM1} 
 \mathcal{E}_\epsilon^M(i)\eqdef 
\begin{cases}
 A^p(i-\tfrac{\epsilon}{p})= \min \{M(k)\neq \bot 
   \mid k\in[i-\frac{\epsilon}{p},i+\frac{\epsilon}{p}]\}; 
 \text{ and }\\
 A^p(i+\tfrac{\epsilon}{p})=\max \{M(k)\neq \bot 
   \mid k\in [i-\frac{\epsilon}{p},i+\frac{\epsilon}{p}]\}.
\end{cases}
\end{equation}

By convention,  $\mathcal{E}_\epsilon^M(i)$ is \emph{not} satisfied
if the minimization/maximization is over the empty set (because $M(k)=\bot$ for all relevant $k$).
We will only use it for $i\in\tI$,
in which case both 
$A^p(i-\frac{\epsilon}{p}), A^p(i+\frac{\epsilon}{p}) \neq \bot$. 
Intuitively, this event means that $A^p$ and $M$ agree on the block boundaries;
for example, in the simpler case where all relevant $M(k)\neq\bot$,
this event simply means that 
$A^p(i-\tfrac{\epsilon}{p})=M(i-\tfrac{\epsilon}{p})$ and 
$A^p(i+\tfrac{\epsilon}{p})=M(i+\tfrac{\epsilon}{p})$.

%
Denote the set of indices where 
the event $\mathcal{E}_\epsilon^M(i)$ occurs 
and the cost of $M$ over substring $x[i-\tfrac{\epsilon}{p},i+\tfrac{\epsilon}{p}]$ is $1$, 
by 
\[
  \mathcal{\tilde{I}}_{\epsilon,M}^{A^p_{x,y}}
  \eqdef
  \set{ i\in\mathcal{\tilde{I}} \mid
    \text{ event $\mathcal{E}_\epsilon^M(i)$ occurs and $\cost_M([i-\tfrac{\epsilon}{p},i+\tfrac{\epsilon}{p}]) = 1$ }
  }.
\]
When clear from the context, we shorten $\mathcal{\tilde{I}}_{\epsilon,M}^{A^p_{x,y}}$ to $\mathcal{\tilde{I}}_M$.

\begin{lemma}
\label{lem:unique-alignment}
Let $\epsilon \in [42p\log \frac{1}{p},\frac{1}{6}]$ and $p\le \delta \le \epsilon$.
Then with probability at least $1-4e^{-\frac{\epsilon^2p^2n}{2}}$,
every alignment $M$ between $x,y$ with $\cost(M)\le (1+\delta)p n$ 
satisfies $|\mathcal{\tilde{I}}_M|\ge (1-23\epsilon-\delta)p n$. 
\end{lemma}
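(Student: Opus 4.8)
The plan is to condition on the planted alignment $A^p$ and then argue using only the randomness of $x$ and of the symbols inserted into $y$, exactly as in the proof of Lemma~\ref{lem:editbound-small-alphabet}. Fix any alignment $M$ with $\cost(M)\le(1+\delta)pn$. Any two indices of $\mathcal{\tilde{I}}\subseteq\mathcal{I}$ are more than $\tfrac{2\epsilon}{p}$ apart (each index of $\mathcal{\tilde{I}}$ has no other edit of $A^p$ within $\tfrac{2\epsilon}{p}$), so the windows $\{W_\epsilon^i\}_{i\in\mathcal{\tilde{I}}}$ are pairwise disjoint substrings of $x$ and Lemma~\ref{lem:subcost} gives $\sum_{i\in\mathcal{\tilde{I}}}\cost_M(W_\epsilon^i)\le\cost(M)\le(1+\delta)pn$. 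I classify each $i\in\mathcal{\tilde{I}}$ by whether $\cost_M(W_\epsilon^i)$ is $0$, is $1$, or is at least $2$. By Lemma~\ref{lem:zerocostbound}, with probability $\ge1-2e^{-\epsilon^2p^2n/2}$ (simultaneously for all $M$) at most $6\epsilon pn$ indices lie in the first class; and since each index not in the first class contributes at least $1$ to the above sum while each cost-$\ge2$ index contributes at least $2$, the number of cost-$\ge2$ indices is at most $(1+\delta)pn-|\mathcal{\tilde{I}}|+6\epsilon pn\le(\delta+11\epsilon)pn$, using $|\mathcal{\tilde{I}}|>(1-5\epsilon)pn$ from Lemma~\ref{lem:tildeI}. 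Hence at least $(1-22\epsilon-\delta)pn$ indices $i\in\mathcal{\tilde{I}}$ have $\cost_M(W_\epsilon^i)=1$, and it remains to bound, uniformly over such $M$, the number of these indices for which $\mathcal{E}_\epsilon^M(i)$ is violated: if this is at most $\epsilon pn$, then $|\mathcal{\tilde{I}}_M|\ge(1-23\epsilon-\delta)pn$, and combining with the failure probabilities of Lemmas~\ref{lem:tildeI} and~\ref{lem:zerocostbound} gives the claimed probability $1-4e^{-\epsilon^2p^2n/2}$.

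The key is a local structure claim: \emph{if $i\in\mathcal{\tilde{I}}$, $\cost_M(W_\epsilon^i)=1$, and $\mathcal{E}_\epsilon^M(i)$ is violated, then $M(k)\notin\{\bot,A^p(k)\}$ for at least $\tfrac{\epsilon}{2p}$ indices $k\in W_\epsilon^i$.} Since $i\in\mathcal{\tilde{I}}$, the map $A^p$ is rigid (consecutive values) on each of $[i-\tfrac{\epsilon}{p},i-1]$ and $[i+1,i+\tfrac{\epsilon}{p}]$ with exactly one insertion or deletion at $i$; and since $\cost_M(W_\epsilon^i)=1$, the restriction of $M$ to $W_\epsilon^i$ is a union of two rigid runs (one of which may be empty or a single point), separated by one skipped position in $x$ or in $y$. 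Thus $k\mapsto M(k)-A^p(k)$ is piecewise constant on $W_\epsilon^i$ with breakpoints only at $i$ and at $M$'s skipped position, taking (up to) three values $v_0,v_1,v_2$ with consecutive values differing by exactly $1$; and $\mathcal{E}_\epsilon^M(i)$ is, up to an off-by-one occurring only when the skip sits at a boundary of $W_\epsilon^i$, the statement that the boundary values of $M-A^p$ are $0$. A short case analysis over the type of the edit at $i$, the type and location of $M$'s skip, and the shift of the image block shows that when this boundary condition fails, some interval among the three carries a fixed nonzero value of $M-A^p$ and has length at least $\tfrac{\epsilon}{2p}$ — one always has the length-$\Theta(\epsilon/p)$ intervals $[i-\tfrac{\epsilon}{p},i-1]$ and $[i+1,i+\tfrac{\epsilon}{p}]$ to work with, and a middle interval of length $<\tfrac{\epsilon}{2p}$ cannot absorb boundary failures on both sides. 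On such an interval $M(k)\ne A^p(k)$ and $M(k)\ne\bot$ (it lies in a rigid run), which proves the claim.

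Given the claim, I bound the bad event by a union bound over basic events, in the style of Lemma~\ref{lem:editbound-small-alphabet}. For a set $S\subseteq\mathcal{\tilde{I}}$ with $|S|=\ell\eqdef\epsilon pn$, a set $\bar{S}\subseteq[n]$ with $|\bar{S}|=\ell$ recording for each block the starting position of its image in $y$ (with these images required to be disjoint), and a tuple $\vec{\tau}=(\tau_i)_{i\in S}$ in which $\tau_i$ names one of the $O(\epsilon/p)$ possible shapes of $M$ on $W_\epsilon^i$ for which $\mathcal{E}_\epsilon^M(i)$ is violated (deletion- or insertion-type, with the location of the skip), let $\mathcal{B}_{S,\bar{S},\vec{\tau}}$ be the event that $x$ and the inserted symbols of $y$ are consistent with $\tau_i$ on each $i\in S$. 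Any $M$ with $\cost(M)\le(1+\delta)pn$ and more than $\ell$ violating cost-$1$ indices realizes some $\mathcal{B}_{S,\bar{S},\vec{\tau}}$, and by the claim each such event forces at least $\ell\cdot\tfrac{\epsilon}{2p}$ equalities $x[k]=y[M(k)]$; the $x$-positions lie in the disjoint windows $W_\epsilon^i$, and each $y[M(k)]$ is either a freshly inserted symbol or some $x[k']$ with $k'\ne k$, so a greedy matching in the auxiliary graph (of maximum degree $2$) extracts an independent subset of at least $\tfrac13\cdot\ell\cdot\tfrac{\epsilon}{2p}=\tfrac{\epsilon^2n}{6}$ of them, giving $\Pr[\mathcal{B}_{S,\bar{S},\vec{\tau}}]\le|\Sigma|^{-\epsilon^2n/6}$. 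Since the number of triples is at most $\binom{n}{\ell}^2\bigl(O(\epsilon/p)\bigr)^{\ell}$, plugging $\ell=\epsilon pn$, $|\Sigma|\ge2$, and $\epsilon\ge42p\log(1/p)$ into the same computation as in Lemma~\ref{lem:editbound-small-alphabet} (the extra factor $\bigl(O(\epsilon/p)\bigr)^{\ell}\le p^{-2\epsilon pn}$ is absorbed, since $42p\log(1/p)$ beats the $15p\log(1/p)$ there) bounds the union bound by $p^{\epsilon pn}\le e^{-\epsilon^2p^2n/2}$, as needed.

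I expect the local structure claim to be the main obstacle: one must verify, across all the cases (insertion vs.\ deletion at $i$; $M$'s skip lying left of, at, right of $i$, or at a boundary of $W_\epsilon^i$; the image shift), that the failure of the boundary-agreement event $\mathcal{E}_\epsilon^M(i)$ genuinely propagates to $\Omega(\epsilon/p)$ positions on which $M$ disagrees with $A^p$ — the delicate point being that a $+1$ discrepancy at one boundary could a priori be "healed" by an internal breakpoint of $M$, which is exactly why the explicit interval-length bookkeeping is needed. Everything else is a direct adaptation of the union-bound machinery already set up for Lemma~\ref{lem:editbound-small-alphabet}, the only genuinely new ingredient being the $\bigl(O(\epsilon/p)\bigr)^{\ell}$ factor that accounts for the choice of block shapes, which is harmless thanks to the extra slack in the hypothesis $\epsilon\ge42p\log(1/p)$.
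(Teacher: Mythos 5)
Your proposal matches the paper's proof in Appendix~\ref{app:unique-alignment}: the same conditioning on $A^p$, the same three-way split of $\cost_M(W^i_\epsilon)$ into cost $0$/$1$/$\ge 2$ using Lemmas~\ref{lem:tildeI} and~\ref{lem:zerocostbound} to get at least $(1-22\epsilon-\delta)pn$ cost-one indices, the same covering of the bad event by combinatorial basic events plus $\mathcal{B}^0$, and the same local claim that a violated $\mathcal{E}^M_\epsilon(i)$ on a cost-one window forces $\ge\epsilon/(2p)$ positions with $M(k)\notin\{\bot,A^p(k)\}$. The only place the paper works harder than your sketch is the explicit case analysis behind that local claim (splitting on $\gamma_j\in\{-1,+1\}$ and on whether $M$'s single skip falls in the middle or an outer part of $W^i_\epsilon$); your bookkeeping is slightly more generous in that you union over the $O(\epsilon/p)$ skip locations via $\vec\tau$ whereas the paper records only $\gamma\in\{-1,1\}^\ell$, but as you observe the extra $(O(\epsilon/p))^\ell$ factor is harmlessly absorbed by the slack in the hypothesis $\epsilon\ge 42p\log(1/p)$.
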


At a high level, the proof follows the outline of Lemma~\ref{lem:editbound-small-alphabet},
and avoids a direct union bound over all (relevant) potential alignments, 
because there are too many of them.
Instead, we introduce a smaller set of basic events 
that "covers" all these potential alignments.
Specifically, we show that an alignment $M$ violating the above 
gives rise to a large set $\tilde{S}$ of indices $i\in\tI$, 
where a corresponding block $B_i$ in $x$ is matched by $M$ 
to a block $\bar{B}_i$ in $y$ with cost at most $1$,
and event $\mathcal{E}^M_\epsilon(i)$ is not satisfied. 
It then remains to show that the probability that such a large set $\tilde{S}$ exists is very small. 
A crucial difference is that the cost of matching $B_i$ to $\bar{B}_i$ is at most $1$ here
(compared with $0$ in the proof of Lemma~\ref{lem:editbound-small-alphabet}).
It makes the notation more cumbersome, 
e.g, the length of $\bar{B}_i$ can be either $|B_i|$, $|B_i|-1$, or $|B_i|+1$,
and the analysis more elaborate with additional cases that require new technical ideas. 
We provide this proof in Appendix~\ref{app:unique-alignment}.


A similar bound holds for even smaller values of $\epsilon$, 
provided that the alphabet size is large. 

\begin{lemma} \label{lem:unique-large-alignment}
Suppose $|\Sigma|\ge (\frac{1}{p})^{42}$. Then for every $\epsilon \in[p,\frac{1}{6}]$ and every $p\le\delta \le \epsilon$, with probability at least $1-4e^{-\frac{\epsilon^2p^2n}{2}}$,
every alignment $M$ between $x,y$ with $\cost(M)\le (1+\delta)p n$ 
satisfies $|\mathcal{\tilde{I}}_M|\ge (1-23\epsilon-\delta)p n$. 
\end{lemma}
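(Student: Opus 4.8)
The plan is that the argument is word-for-word identical to the proof of Lemma~\ref{lem:unique-alignment} (carried out in Appendix~\ref{app:unique-alignment}), up to its very last computation, where the role played there by the lower bound $\epsilon\ge 42p\log\frac1p$ is now played instead by the hypothesis $|\Sigma|\ge(1/p)^{42}$ on the alphabet size; this mirrors exactly the passage from Lemma~\ref{lem:editbound-small-alphabet} to Lemma~\ref{lem:editbound-large-alphabet}. Let me recall the shape of that argument. After conditioning on the planted alignment $A^p$ and on the high-probability events about $A^p$ and about zero-cost blocks, provided (in their large-alphabet forms) by Lemmas~\ref{lem:tildeI} and~\ref{lem:zerocostbound} and costing an additive $4e^{-\epsilon^2p^2n/2}$ in the failure probability, one shows that any alignment $M$ with $\cost(M)\le(1+\delta)pn$ but $|\tI_M|<(1-23\epsilon-\delta)pn$ must induce a set $\tilde S\subseteq\tI$ of size $\ell=\Theta(\epsilon pn)$ such that, for each $i\in\tilde S$, the block $B_i$ of $x$ around $i$ is matched by $M$ with cost at most $1$ to some block $\bar B_i$ of $y$ of length $|B_i|$, $|B_i|-1$, or $|B_i|+1$, while the boundary-agreement event $\mE_\epsilon^M(i)$ fails. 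A union bound over the choice of $\tilde S$, of the target blocks $\bar B_i$, of their three possible lengths, and of the at most one ``defect'' position inside each block, combined with the fact that each such configuration forces $\Omega(\epsilon/p)$ coordinates of $x$ to equal independently-drawn coordinates, yields
\[
  \Pr[\, \exists\ \text{such bad } M \mid A^p \,]
  \;\le\;
  \binom{n}{\ell}^{2}\cdot (\epsilon/p)^{O(\ell)}\cdot |\Sigma|^{-\Omega(\epsilon\ell/p)},
\]
for $\ell=\Theta(\epsilon pn)$, exactly as in the small-alphabet proof (the $(\epsilon/p)^{O(\ell)}$ absorbs the $O(\epsilon/p)$ per-block choices of length and defect position).

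The only step that must be redone is bounding this product by $e^{-\epsilon^2p^2n/2}$. Substituting $\ell=\Theta(\epsilon pn)$ and using $\epsilon\ge p$ (with $p$ small), the first two factors are $\big(\tfrac{e}{\epsilon p}\big)^{2\ell}(\epsilon/p)^{O(\ell)}\le (1/p)^{O(\ell)}$, since $\epsilon^{-1}\le p^{-1}$ and $\epsilon\le 1$. For the last factor, $|\Sigma|\ge(1/p)^{42}$ together with $\epsilon/p\ge 1$ gives $|\Sigma|^{-\Omega(\epsilon\ell/p)}\le (1/p)^{-42\,\Omega(\epsilon\ell/p)}\le (1/p)^{-\Omega(\ell)}$, and the constant $42$ is chosen (mirroring the choice of $42p\log\frac1p$ in Lemma~\ref{lem:unique-alignment}) precisely so that it dominates the $(1/p)^{O(\ell)}$ blow-up. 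Hence $\Pr[\exists\ \text{bad }M\mid A^p]\le p^{\Omega(\epsilon pn)}\le e^{-\epsilon^2p^2n/2}$ for all large $n$ and all $\epsilon\in[p,\tfrac16]$, and a final union bound with the $4e^{-\epsilon^2p^2n/2}$ conditioning error proves the lemma.

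I expect no genuine obstacle here. The combinatorial core --- covering the exponentially many near-optimal alignments by the polynomially-structured basic events, and the per-coordinate collision probability $1/|\Sigma|$ --- is untouched, so no new idea is needed. The only thing requiring care is the bookkeeping verifying that the constant $42$ makes the final inequality hold uniformly down to $\epsilon=p$ (rather than only for $\epsilon$ bounded away from $p$), which is exactly where Lemma~\ref{lem:editbound-large-alphabet} differs from Lemma~\ref{lem:editbound-small-alphabet}. In fact one could state the two versions as a single lemma under the hypothesis (roughly) $\epsilon\log|\Sigma|\ge 42\,p\log\frac1p$, which holds both for $\epsilon\ge 42p\log\frac1p$ when $|\Sigma|=2$ and for $|\Sigma|\ge(1/p)^{42}$ when $\epsilon\ge p$, and derive both as special cases.
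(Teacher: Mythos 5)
Your proposal is correct and matches the paper's intent exactly: the paper gives no separate proof of Lemma~\ref{lem:unique-large-alignment}, stating only that it follows from the same argument as Lemma~\ref{lem:unique-alignment} with the final numerical calculation adapted, which is precisely what you do — replacing $2^{-\epsilon\ell/(6p)}$ by $|\Sigma|^{-\epsilon\ell/(6p)}\le (1/p)^{-42\epsilon\ell/(6p)}$ and using $\epsilon\ge p$ instead of $\epsilon\ge 42p\log(1/p)$ to dominate the $\binom{n}{\ell}^2\cdot 2^\ell$ combinatorial factor. One small imprecision: the paper's union bound is only over $S$, $\bar S$, and $\gamma\in\{-1,1\}^\ell$ (hence the $2^\ell$), not over per-block defect positions (those are handled inside the per-block case analysis, which always yields $\epsilon/(2p)$ forced independent matches regardless of where the single edit falls), but your extra $(\epsilon/p)^{O(\ell)}$ only overcounts and is still absorbed, so the conclusion stands.
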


\section{Robustness of Approximate Median}
\label{sec:unique-median}
In this section, we consider the (approximate) median string problem on a set of strings generated by our probabilistic model $G_p$ (from Section~\ref{sec:prob-model}). For a random (unknown) string $s \in \Sigma^n$, $G_p$ generates a set $S = \{s_1,s_2,\cdots,s_m\}$ of independent traces of $s$. We show that with high probability, any $(1+\epsilon)$-approximate median of $S$ must be close (in edit distance) to the unknown string $s$. In other words, any $(1+\epsilon)$-approximate median must "agree" with the unknown string $s$ in most of the portions. It is true even when $m=3$. In this section, we state the results and the proofs by considering $m=3$. In particular, we prove Theorem~\ref{thm:median-trace-close}. At the end of the section, we remark on why such result with three traces also directly provides a similar result for any $m>3$ traces. Another way to interpret this result is the following. Suppose we take a set of three traces and find its $(1+\epsilon)$-approximate median. Then if we add more traces in the set,  its $(1+\epsilon)$-approximate median does not change by much. So in some sense, $(1+\epsilon)$-approximate median is robust in the case of average-case traces.

For the purpose of the analysis, we start by considering infinite length strings (as in Section~\ref{sec:prob-model}), and then later we will move to the finite-length versions. Recall that $U$ denotes the uniform distribution over strings $x\in\Sigma^{\N}$, i.e., each character $x[i]$, for $i \in \N$, is chosen uniformly at random and independently from $\Sigma$. Consider a parameter $p \in (0,0.001)$ and define $q\eqdef\frac{p(4-3p)}{2-p^2}$. (Note, $q=2p-\Theta(p^2)$.) Then consider the following two processes:
\begin{itemize}
\item \textbf{Process 1}: Draw a string $s$ from $U$. Then draw three strings $s_1,s_2,s_3$ independently from $G_p(s)$. Output the tuple $(s,s_1,s_2,s_3)$.
\item \textbf{Process 2}: Draw a string $x_1$ from $U$. Then draw $\bar{x}$ from $G_p(x_1)$ (and denote the corresponding alignment function by $A_{1,\bar{x}}^p$). Finally, draw $x_2,x_3$ independently from $G_p(\bar{x})$ (and denote the corresponding alignment functions by $A_{\bar{x},2}^p, A_{\bar{x},3}^p$ respectively). Output the tuple $(\bar{x},x_1,x_2,x_3)$.
\end{itemize}

As an immediate corollary of Proposition~\ref{prop:dist-symmetry}, we know that the distributions on $(s,s_1)$ and $(\bar{x},x_1)$ are the same. So we conclude the following about the above two processes.

\begin{claim}
\label{clm:equivalence}
The probability distributions on $(s,s_1,s_2,s_3)$ and $(\bar{x},x_1,x_2,x_3)$,
the tuples generated by Process 1 and Process 2 respectively, 
are identical.
\end{claim}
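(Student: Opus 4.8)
The plan is to factor each of the two processes into a first ``link'' --- the pair consisting of the hidden string and the first trace --- followed by the conditional generation of the remaining two traces, and to check that both pieces agree. For the first link this is exactly the symmetry observation already recorded in the paragraph preceding the claim (a corollary of Proposition~\ref{prop:dist-symmetry}); for the conditional second part it is the bare definition of the two processes. Composing the two gives equality of the full joint laws.

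Concretely, I would first argue that the pair $(s,s_1)$ produced by Process~1 and the pair $(\bar x,x_1)$ produced by Process~2 have the same distribution: in Process~1 we have $s\sim U$ and $s_1\sim G_p(s)$, so $(s,s_1)$ is a (base, trace) pair; in Process~2 we have $x_1\sim U$ and $\bar x\sim G_p(x_1)$, so $(x_1,\bar x)$ is a (base, trace) pair, and by Proposition~\ref{prop:dist-symmetry} we get $(\bar x,x_1)\deq(x_1,\bar x)\deq(s,s_1)$. Now condition on the realized value of this first pair. In Process~1, conditionally on $s$, the traces $s_1,s_2,s_3$ are i.i.d.\ draws from $G_p(s)$; in particular $(s_2,s_3)$ is, conditionally on $s$, two independent draws from $G_p(s)$ and is independent of $s_1$. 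In Process~2, the strings $x_2,x_3$ are generated from $\bar x$ using fresh randomness, so conditionally on $\bar x$ the pair $(x_2,x_3)$ consists of two independent draws from $G_p(\bar x)$ and is independent of $x_1$. Hence in both processes the conditional law of the last two coordinates given the first two coordinates is the same fixed kernel, namely $G_p(\,\cdot\,)^{\otimes 2}$ applied to the \emph{first} coordinate. Combining the matching law of the first pair with the matching conditional law of the last pair yields $(s,s_1,s_2,s_3)\deq(\bar x,x_1,x_2,x_3)$.

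An equivalent and perhaps slicker way to write the same proof, which I would mention, is the reversibility identity underlying Proposition~\ref{prop:dist-symmetry}: for all strings $a,b$, $\Pr[X=a]\cdot\Pr[G_p(a)=b]=\Pr[X=b]\cdot\Pr[G_p(b)=a]$ with $X\sim U$. Then the probability of an outcome under Process~2 is proportional to $\Pr[X=x_1]\cdot\Pr[G_p(x_1)=\bar x]\cdot\Pr[G_p(\bar x)=x_2]\cdot\Pr[G_p(\bar x)=x_3]$, and applying the identity to the first two factors rewrites this as $\Pr[X=\bar x]\cdot\Pr[G_p(\bar x)=x_1]\cdot\Pr[G_p(\bar x)=x_2]\cdot\Pr[G_p(\bar x)=x_3]$, which is exactly the probability of the corresponding outcome under Process~1 with $s\mapsto\bar x,\ s_i\mapsto x_i$.

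The argument has no real obstacle; the only care needed is bookkeeping. First, one must make sure that in Process~2 the pair $(x_2,x_3)$ is genuinely conditionally independent of $x_1$ given $\bar x$ --- which holds because $x_2,x_3$ are produced from $\bar x$ by independent invocations of the channel --- since this is what lets the second conditional factor be a function of $\bar x$ alone. Second, since the processes nominally produce infinite-length strings, the relation $\deq$ should be read prefix-wise (the distribution restricted to any finite set of coordinates), which causes no difficulty because both $U$ and $G_p$ act coordinate-by-coordinate with only finite look-back.
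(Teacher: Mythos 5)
Your proof is correct and takes essentially the same route the paper does (the paper simply notes, as a corollary of the symmetry Proposition~\ref{prop:dist-symmetry}, that $(s,s_1)\deq(\bar x,x_1)$ and leaves the composition with the conditionally identical generation of the remaining two traces implicit). Your write-up just fills in the conditional-independence bookkeeping and adds the equivalent detailed-balance phrasing; both are faithful elaborations of the paper's one-line argument, not a different method.
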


We want to investigate the property of an approximate median of the strings generated through Process 1. Due to the above claim, instead of considering the strings $s_1,s_2,s_3$ from now on we focus on $x_1,x_2,x_3$ generated through Process 2. By Proposition~\ref{prop:dist-transitivity}, both $x_2$ and $x_3$ can be viewed as strings drawn from $G_q(x_1)$. Let us use the notations $A_{1,2}^q$ and $A_{1,3}^q$ to denote the alignment functions produced by the random process $G_q$ while generating $x_2$ and $x_3$ respectively, from $x_1$. We want to emphasize that the process $G_q$ is considered solely for the purpose of the analysis.

Next, we use the alignments $A_{1,\bar{x}}^p,$ $A_{\bar{x},2}^p$ (and $A_{\bar{x},3}^p$) to define an alignment between $x_1,x_2$ (and $x_1,x_3$) via $\bar{x}$. Let $A_{1,\bar{x},2}^p$ and $A_{1,\bar{x},3}^p$ denote $A_{\bar{x},2}^p \circ A_{1,\bar{x}}^p$ and $A_{\bar{x},3}^p \circ A_{1,\bar{x}}^p$ respectively. (See Section~\ref{sec:prelims} for the definition of the notation $\circ$.)

\paragraph{Median of $n$-length prefixes of $x_1,x_2,x_3$. }
So far in this section we have talked about infinite length strings. From now on we restrict ourselves to the the $n$-length prefixes of $x_1,x_2$ and $x_3$ denoted by $x_1[1,n], x_2[1,n]$ and $x_3[1,n]$ respectively. By abusing the notations, we simply use $x_1,x_2$ and $x_3$ to also denote $x_1[1,n], x_2[1,n]$ and $x_3[1,n]$ respectively. Also, we consider the ($n$-length) restriction of all the alignment functions (defined so far) accordingly. Again, for simplicity, we use the same notations to refer to these restricted alignment functions.

Now, we consider the (approximate) median string problem on the set $S = \{x_1,x_2,x_3\}$. Recall, for any string $y$, $\obj(S,y) \eqdef \sum_{k=1}^{3}\ED(x_k,y)$, and $\opt(S)=\min_{y\in \Sigma^*} \obj(S,y)$. Since throughout this section, $S = \{x_1,x_2,x_3\}$, to simplify the notations, we drop $S$ from both $\obj$ and $\opt$. The main result of this section is the following.
\begin{theorem}
\label{thm:median-trace-close-alternate}
For a large enough $n \in \N$ and a noise parameter $p \in (0,0.001)$, 
let $\bar{x}, x_1,x_2$ and $x_3$ be the $n$-length prefixes of the strings generated by Process 2. If $\med$ is a $(1+\epsilon)$-approximate median of $S=\{x_1,x_2,x_3\}$ for $\epsilon \in [110 p \log (1/p),1/6]$, 
then $\Pr [\ED(\bar{x},\med) \le 195\epsilon\cdot \opt(S) ] \geq 1-e^{-\log^2 n}$.
\end{theorem}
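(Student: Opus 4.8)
The plan is to track a single ``good'' alignment from $x_1$ to $\med$ and show it must agree, on almost all edit operations, with the composed planted alignment from $x_1$ to $\bar x$; once we know the optimal $x_1$-to-$\med$ alignment disagrees with the $x_1$-to-$\bar x$ planted alignment on only an $O(\epsilon p n)$ fraction of positions, a block-by-block argument gives $\ED(\bar x,\med)=O(\epsilon)\cdot\opt$. First I would record the two quantitative facts about $\opt$: by the argument sketched after Claim~\ref{clm:opt-value} (using Lemma~\ref{lem:editbound-small-alphabet} applied to the pair $(x_1,x_2)$ and to $(x_1,x_3)$, which are distributed as $(x_1,G_q(x_1))$ with $q=2p-\Theta(p^2)$, together with Proposition~\ref{prop:dist-symmetry}), with probability $\ge 1-e^{-\log^2 n}$ we have $\opt \ge (1-O(\epsilon))\cdot\big(\ED(x_1,x_2)+\ED(x_1,x_3)+\ED(x_2,x_3)\big)$ and each of these three edit distances is $(1\pm O(\epsilon))qn$, while $\obj(\bar x)\le (1+O(\epsilon))\cdot 3pn \le (1+O(\epsilon))\opt$, so $\bar x$ is itself an $(1+O(\epsilon))$-approximate median and $\opt=\Theta(pn)$.

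Next I would set up the alignments. Take a $(1+\epsilon)$-approximate median $\med$ and fix optimal alignments $x_1\!\leftrightarrow\!\med$, $\med\!\leftrightarrow\!x_2$, $\med\!\leftrightarrow\!x_3$; compose to get potential alignments $M_{1,2}$ from $x_1$ to $x_2$ (via $\med$) and $M_{1,3}$ from $x_1$ to $x_3$ (via $\med$). A short triangle-inequality / counting argument shows $\cost(M_{1,2})\le \ED(x_1,\med)+\ED(\med,x_2)$ and, since $\obj(\med)\le(1+\epsilon)\opt$ and each pairwise distance is $\approx qn\approx 2pn$ while $\opt\approx 3pn$, each of $\cost(M_{1,2}),\cost(M_{1,3})$ is at most $(1+\delta)qn$ with $\delta=O(\epsilon)$. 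The same bound holds for the planted composite alignments $A^p_{1,\bar x,2}$ and $A^p_{1,\bar x,3}$ (their cost is whp $(1\pm O(\epsilon))\cdot$ the $q$-channel cost). Now apply Lemma~\ref{lem:unique-alignment} (with $\epsilon,\delta$ rescaled by a constant to absorb the $q$ vs.\ $p$ discrepancy) to the pair $(x_1,x_2)$: every alignment of cost $\le(1+\delta)qn$ agrees with the planted alignment $A^q_{1,2}$ on all but $O(\epsilon)qn$ of the blocks of $\tilde{\mathcal I}$ (blocks carrying a single well-spaced planted $R_q$-edit). Hence $M_{1,2}$ and $A^p_{1,\bar x,2}$ both agree with $A^q_{1,2}$ on $\ge(1-O(\epsilon))qn$ blocks, so they agree with \emph{each other} on $\ge(1-O(\epsilon))qn$ blocks; likewise $M_{1,3}$ and $A^p_{1,\bar x,3}$ agree on $\ge(1-O(\epsilon))qn$ blocks.

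The heart of the argument is the ``overlap'' step. On the set of positions $i\in[n]$ where $A^p_{1,\bar x}$ neither deletes $x_1[i]$ nor has an insertion adjacent to it --- i.e.\ $x_1[i]$ survives into $\bar x$ --- the alignment $A^p_{1,\bar x,2}$ sends $i$ to its image in $x_2$ and $A^p_{1,\bar x,3}$ sends $i$ to its image in $x_3$, both images being the image of the same character $\bar x[A^p_{1,\bar x}(i)]$. So on this ``surviving'' set (which has size $\ge(1-O(p))n$), the pair $\big(A^p_{1,\bar x,2},A^p_{1,\bar x,3}\big)$ records exactly the identification $x_1[i]=\bar x[A^p_{1,\bar x}(i)]$. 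Combining the three agreements above: on all but $O(\epsilon)pn$ positions $i$, $M_{1,2}$ and $M_{1,3}$ both behave at $i$ the way the planted alignments do, and there $M_{1,2}(i)$ and $M_{1,3}(i)$ are the images of the \emph{same} position of $\med$ (since $M_{1,2},M_{1,3}$ factor through $\med$). The only way this is consistent with $M_{1,2}(i),M_{1,3}(i)$ matching the $A^p$-images of $\bar x[A^p_{1,\bar x}(i)]$ is that the $\med$-position through which $i$ factors carries the character $\bar x[A^p_{1,\bar x}(i)]=x_1[i]$ --- here we again use that $x_1$ (hence $\bar x$) is random, so a ``wrong'' common image would force an improbable coincidence of independently-chosen characters, exactly as in Lemmas~\ref{lem:tildeI}--\ref{lem:unique-alignment}, and this is where I would invoke a union bound over the (polynomially many, after the block reductions) choices. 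Concluding: the optimal alignment $x_1\!\leftrightarrow\!\med$ matches $\ge(1-O(\epsilon))pn$-many positions of $x_1$ to positions of $\med$ holding the equal character, and the same holds for $\med\!\leftrightarrow\!x_1$ read backwards; translating this into an alignment $\bar x\!\leftrightarrow\!\med$ (via the planted $A^p_{1,\bar x}$, which is itself near-optimal by Lemma~\ref{lem:editbound-small-alphabet}) and counting unmatched positions on both sides yields $\ED(\bar x,\med)\le O(\epsilon)\cdot pn = O(\epsilon)\cdot\opt$, with the constant working out to $195$ after bookkeeping, and the failure probability is a union of the $O(1)$ bad events above, each $\le e^{-\Omega(\epsilon^2 p^2 n)}\le e^{-\log^2 n}$ for large $n$.

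\textbf{Main obstacle.} The delicate point is the overlap step: agreement of $M_{1,2}$ with $A^q_{1,2}$ and of $M_{1,3}$ with $A^q_{1,3}$ is ``local'' (per block) and two-sided, but the blocks of $\tilde{\mathcal I}$ for the $(x_1,x_2)$ channel and for the $(x_1,x_3)$ channel are \emph{different} random partitions of $[n]$, and moreover $x_2,x_3$ are not independent given $x_1$ (they share $\bar x$). I expect the real work is to choose a common refinement of the two block structures (on $\ge(1-O(\epsilon))n$ positions), argue the planted-alignment robustness (Lemma~\ref{lem:unique-alignment}-type statement) still applies on this refinement despite the correlation --- which is fine because the robustness lemma conditions only on the relevant $A^q$ and uses randomness of $x_1$ --- and then push the ``same $\med$-position $\Rightarrow$ same character'' deduction through a single clean union bound rather than per-block conditioning that could blow up dependencies. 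Getting the constants (the jump from $110p\log(1/p)$ to the $195\epsilon$ in the conclusion) to close consistently across the $p\leftrightarrow q$ change of channel and the three nested applications of the robustness lemma is the other thing that needs care.
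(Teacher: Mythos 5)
Your setup matches the paper closely: you compose the optimal $x_1\!\leftrightarrow\!\med$, $\med\!\leftrightarrow\!x_2$, $\med\!\leftrightarrow\!x_3$ alignments into potential alignments $M_{1,2},M_{1,3}$, bound their cost via the triangle inequality and the $(1+\epsilon)$-approximate-median property, apply Lemma~\ref{lem:unique-alignment} to get agreement with $A^q_{1,2},A^q_{1,3}$ on most blocks, and likewise for the planted composites $A^p_{1,\bar x,2},A^p_{1,\bar x,3}$. That part is right, and your observation that the robustness lemma survives the dependence between $x_2,x_3$ (since it conditions on the relevant $A^q$ and uses only the randomness of $x_1$) is also correct. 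The gap is exactly at the step you flag as delicate.

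The probabilistic fix you propose for the ``overlap step'' does not work. You argue that if the common $\med$-position through which $i$ factors carries a ``wrong'' character, this would force an improbable coincidence of independently chosen characters and can be killed by a union bound. But no coincidence is needed: on a block where the single planted $R_q$-edit lives in the $x_1\!\to\!\bar x$ leg (so that $\bar x$'s block agrees with both $x_2$'s and $x_3$'s blocks, while differing from $x_1$'s block by one edit), the median is perfectly free to copy the block \emph{from $x_1$} rather than from $\bar x$. Then $M_{1,\m}$ has local cost $0$ and each of $M_{\m,2},M_{\m,3}$ has local cost $1$, no rare character collision involved. Moreover $\med$ is chosen adversarially after $x_1,x_2,x_3$ are fixed, so a union bound ``over medians'' would be over exponentially many strings, not polynomially many. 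What actually rules out having too many such blocks is not a probability bound but a deterministic exchange argument (this is the content of Lemma~\ref{lem:large-similarity}): replacing each such block of $\med$ by the corresponding block of $\bar x$ raises $\ED(x_1,\med)$ by at most $1$ per block and lowers each of $\ED(x_2,\med),\ED(x_3,\med)$ by $1$ per block, giving a net objective decrease equal to the number of swapped blocks. If that number exceeded $4\epsilon p n$, the objective would drop below $\opt$, contradicting $\opt\le\obj(\med')$. So the final pinning-down of $\med$'s block contents is a consequence of the $(1+\epsilon)$-optimality of $\med$, not of the randomness of $x_1$; the randomness is spent earlier, in establishing the near-uniqueness of low-cost alignments.
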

We would like to emphasize that (for the simplicity in the analysis) we have made no attempt to optimize the constants. By a more careful analysis, both the range of $p$ and the constant involved in the bound of $\ED(\bar{x},\med)$ could be improved significantly. The above theorem together with Claim~\ref{clm:equivalence} immediately gives us Theorem~\ref{thm:median-trace-close}. Note, in Theorem~\ref{thm:median-trace-close}, we do not have any length restrictions on the traces. On the other hand, the above theorem considers $\bar{x}, x_1,x_2$ and $x_3$ to of of length $n$. However, by a standard application of Chernoff-Hoeffding bound, it suffices to restrict ourselves to the $(n-\sqrt{n} \log n)$-length prefixes of all the traces (of Theorem~\ref{thm:median-trace-close}). Then we can apply the above theorem over them, to get Theorem~\ref{thm:median-trace-close}.

Before proving Theorem~\ref{thm:median-trace-close-alternate}, we make a few observations on the ($n$-length restricted) alignments between $x_1$ and $x_2$ ($x_1$ and $x_3$). Consider an $\epsilon \in [110 p \log (1/p),1/6]$. Let $\med$ be an (arbitrary) $(1+\epsilon)$-approximate median of $x_1,x_2$ and $x_3$. 

Let $M_{1,\m}, M_{\m,x}$ and $M_{\m,x}$ be (arbitrary) optimal alignment from $x_1$ to $\med$, $\med$ to $x_2$, and $\med$ to $x_3$ respectively. Then we define an alignment between $x_1,x_2$ and $x_1,x_3$ via $\med$. We use $M_{1,\m,2}$ and $M_{1,\m,3}$ to denote $M_{\m,2}\circ M_{1,\m}$ and $M_{\m,3} \circ M_{1,\m}$ respectively. Next, we compare the alignments $A_{1,\bar{x},2}^p$ ($A_{1,\bar{x},3}^p$) and $M_{1,\m,2}$ ($M_{1,\m,3}$) with $A_{1,2}^q$ ($A_{1,3}^q$). For that purpose, we use the notations $\I$ and $\tI_{\epsilon}$ with respect to the alignments $A^p, A^q$ and $M$ (as defined in Section~\ref{sec:unique-edit}). For any $i \in [n]$, let $W^i$ denotes the interval $[i-\epsilon/q , i+\epsilon/q]$.

Recall, for any $k\in \{2,3\}$, $\tI_{M_{1,\m,k}}^{A_{1,k}^q}$ is the set of all the indices $i$ on $x_1$ such that there is exactly one edit operation inside the interval $W^i$ on $x_1$ with respect to both the alignments $M_{1,\m,k}$ and $A_{1,k}^q$. Now consider any such interval $W^i$. Since the alignment $M_{1,\m,k}$ is obtained by concatenating two alignments $M_{1,\m}$ and $M_{\m,k}$, that one edit operation inside the interval $W^i$ happens either in the alignment $M_{1,\m}$ or $M_{\m,k}$, but not in both. Now, let us consider the indices $i$ such that in the interval $W^i$ that one edit operation happens with respect to the alignment $M_{1,\m}$. This leads us to the following definition,
$$\J_{M_{1,\m,k}}^{A_{1,k}^q}\eqdef \{i \in \tI_{M_{1,\m,k}}^{A_{1,k}^q} \mid \cost_{M_{1,\m}}(W^i)=\cost_{M_{1,\m,k}}(W^i)=1\}.$$

Since in the above definition we consider the cost of the interval $W^i$ to be exactly one with respect to the alignment $M_{1,\m}$, if $M_{1,\m}(i-\epsilon/q)=\bot$ then $M_{1,\m}(i-\epsilon/q+1)\ne\bot$. Similarly, if $M_{1,\m}(i+\epsilon/q)=\bot$ then $M_{1,\m}(i+\epsilon/q-1)\ne\bot$. In words, if any one of the boundary symbols of the block $x_1[W^i]$ gets deleted then the next symbol inside the block must be mapped to some symbol in $\med$. 

Without loss of generality, from now on we assume that for all $i \in \J_{M_{1,\m,k}}^{A_{1,k}^q}$, $M_{1,\m}(i-\epsilon/q) \ne \bot$ and $M_{1,\m}(i+\epsilon/q) \ne \bot$. (If for some $i$, the above assumption is not true then we need to argue with $M_{1,\m}(i-\epsilon/q+1)$ instead of $M_{1,\m}(i-\epsilon/q)$, or with $M_{1,\m}(i+\epsilon/q-1)$ instead of $M_{1,\m}(i+\epsilon/q-1)$.)

In a similar way, define $\J_{A_{1,\bar{x},k}^p}^{A_{1,k}^q}$ as 

$$\J_{A_{1,\bar{x},k}^p}^{A_{1,k}^q}\eqdef \{i \in \tI_{A_{1,\bar{x},k}^p}^{A_{1,k}^q} \mid \cost_{A_{1,\bar{x}}^p}(W^i)=\cost_{A_{1,\bar{x},k}^p}(W^i)=1\}.$$

Now from the above definition, it is easy to observe the following.
\begin{proposition}
\label{prop:block-preserving}
For any $k \in \{2,3\}$, for all $i \in \J_{M_{1,\m,k}}^{A_{1,k}^q} \bigcap \J_{A_{1,\bar{x},k}^p}^{A_{1,k}^q}$,
$$\med[M_{1,\m}(i-\epsilon/q),M_{1,\m}(i+\epsilon/q)] = \bar{x}[A_{1,\bar{x}}^p(i-\epsilon/q),A_{1,\bar{x}}^p(i+\epsilon/q)].$$
\end{proposition}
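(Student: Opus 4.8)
The plan is to exhibit a common value for the two strings in the claim: I will show that each of $\med[M_{1,\m}(i-\epsilon/q),M_{1,\m}(i+\epsilon/q)]$ and $\bar x[A^p_{1,\bar x}(i-\epsilon/q),A^p_{1,\bar x}(i+\epsilon/q)]$ equals the single substring $x_k[a,b]$ of $x_k$, where $a\eqdef A^q_{1,k}(i-\epsilon/q)$ and $b\eqdef A^q_{1,k}(i+\epsilon/q)$. Since $i\in\tI^{A^q_{1,k}}$ we have $a,b\neq\bot$, and the restriction of $A^q_{1,k}$ to $W^i=[i-\epsilon/q,i+\epsilon/q]$ is a cost-$1$ alignment of $x_1[W^i]$ onto $x_k[a,b]$ whose single edit changes the length by exactly one; in particular $\ED(x_1[W^i],x_k[a,b])=1$, so any cost-$1$ alignment of $x_1[W^i]$ onto $x_k[a,b]$ is optimal on this block. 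I also use that $A^p_{1,\bar x,k}=A^p_{\bar x,k}\circ A^p_{1,\bar x}$ is coupled to $A^q_{1,k}$ as in Proposition~\ref{prop:dist-transitivity}.

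Both equalities are instances of the same sublemma. Suppose a composition $B\circ A$ of alignments, restricted to $W^i$, has cost $1$ and image $x_k[a,b]$ (this image identification is exactly what the event $\mathcal E$ built into $\tI_{B\circ A}^{A^q_{1,k}}$ asserts), suppose the first leg $A$ already has $\cost_A(W^i)=1$ and maps $x_1[W^i]$ onto a block $u$ with matched endpoints (the WLOG hypothesis), and suppose $A$'s single block-edit is a \emph{deletion}. Then the costs add, $1=\cost_{B\circ A}(W^i)=\cost_A(W^i)+\cost_B(u)$, forcing $\cost_B(u)=0$; hence $B$ shifts $u$ verbatim onto a contiguous substring of $x_k$, which by the $\mathcal E$ event has endpoints $a,b$, so $u=x_k[a,b]$. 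Applying this with $(A,B)=(A^p_{1,\bar x},A^p_{\bar x,k})$ and $i\in\J_{A^p_{1,\bar x,k}}^{A^q_{1,k}}$ (which supplies $\cost_{A^p_{1,\bar x}}(W^i)=\cost_{A^p_{1,\bar x,k}}(W^i)=1$) gives $\bar x[A^p_{1,\bar x}(i-\epsilon/q),A^p_{1,\bar x}(i+\epsilon/q)]=x_k[a,b]$; applying it with $(A,B)=(M_{1,\m},M_{\m,k})$ and $i\in\J_{M_{1,\m,k}}^{A^q_{1,k}}$ gives $\med[M_{1,\m}(i-\epsilon/q),M_{1,\m}(i+\epsilon/q)]=x_k[a,b]$. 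Chaining the two equalities yields the proposition.

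The main obstacle is the case the sublemma excludes, namely when the first leg's single block-edit on $W^i$ is an \emph{insertion} rather than a deletion. Then the first leg adds a symbol $z$ to $u$ that is absent from $x_1[W^i]$, and a priori the second leg could delete $z$ and re-insert a different symbol into $x_k$, keeping $\cost_{B\circ A}(W^i)=1$ while $\cost_B(u)=2$ and $u\neq x_k[a,b]$; this can occur on either leg. I would close this by combining approximate-median optimality with the randomness of $x_1$: such a configuration on the $M$-leg lets one delete $z$ from $\med$ and strictly shrink the objective (saving a unit on $\ED(\cdot,x_1)$ and a unit on $\ED(\cdot,x_k)$, losing at most one on the third trace), and — more robustly — Lemma~\ref{lem:unique-alignment} applied to the random-string/trace pairs $(x_1,\med)$ and $(\med,x_k)$, and symmetrically to $(x_1,\bar x)$ and $(\bar x,x_k)$, pins the block-edit of every near-optimal alignment to the unique planted location, forcing $M_{1,\m}$ and $A^p_{1,\bar x}$ to perform exactly the planted edit on $W^i$ and hence $u=x_k[a,b]$ on the nose.

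Carrying out that last reduction — in particular the length-bookkeeping that matches the deletion/insertion shape of each first leg's edit to the length of $x_k[a,b]$ across the sub-cases — is where the real work lies. The remainder of the argument is purely combinatorial once $i$ is known to lie in the intersection $\J_{M_{1,\m,k}}^{A^q_{1,k}}\cap\J_{A^p_{1,\bar x,k}}^{A^q_{1,k}}$, since everything then reduces to reading off images and costs of the three pieces of each composed alignment.
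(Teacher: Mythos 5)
You have correctly identified a genuine gap. The paper gives no proof of Proposition~\ref{prop:block-preserving}, asserting only that it "is easy to observe"; the intended argument is evidently the cost-bookkeeping you describe in your sublemma, but, as you notice, the bookkeeping does not close. The membership conditions for $\J_{M_{1,\m,k}}^{A_{1,k}^q}$ are $\cost_{M_{1,\m}}(W^i)=\cost_{M_{1,\m,k}}(W^i)=1$ (plus the boundary event $\mathcal E$), and these two equalities do \emph{not} force $\cost_{M_{\m,k}}$ to vanish on the image block $u=\med[M_{1,\m}(i-\epsilon/q),M_{1,\m}(i+\epsilon/q)]$. Your counter-scenario is exactly right: if $M_{1,\m}$ inserts a symbol at position $j'$ of $u$ and $M_{\m,k}$ deletes $j'$ and inserts a fresh position in $x_k$, then $\cost_{M_{1,\m}}(W^i)=1$, $\cost_{M_{\m,k}}(u)=2$, and the deletion of $j'$ exactly cancels the insertion so that $\cost_{M_{1,\m,k}}(W^i)=1$; the boundary event $\mathcal E$ can still hold, and yet $u\neq x_k[A^q_{1,k}(i-\epsilon/q),A^q_{1,k}(i+\epsilon/q)]$. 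A mirror scenario (one insertion in the first leg, two deletions in the second) even lets $|u|$ differ from $|x_k[a,b]|$. So the proposition, as stated, is not a deterministic consequence of the definitions of $\J$ and $\tI$; the paper's informal sentence "that one edit operation happens either in $M_{1,\m}$ or $M_{\m,k}$, but not in both" silently assumes additivity of block costs under composition, which fails precisely in the cancellation case you flag.

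Your proposed repairs, however, do not work as written. Lemma~\ref{lem:unique-alignment} concerns the planted alignment between a uniformly random string and a trace of it generated by $G_p$; there is no such planted alignment between $x_1$ and $\med$ (or between $\med$ and $x_k$), since $\med$ is an approximate median, not a trace, so invoking the lemma on the pairs $(x_1,\med)$ and $(\med,x_k)$ is not meaningful. The median-optimality argument you sketch (delete the phantom symbol from $\med$ and save two units while losing at most one) is sound in spirit, but because $\med$ is only a $(1+\epsilon)$-approximate median this gives an $O(\epsilon\cdot\opt)$ bound on the \emph{number} of offending indices rather than excluding them outright, so it would prove a weakened, "all but few" form of the proposition rather than the stated "for all $i$." (That weaker form is in fact all the downstream proof of Theorem~\ref{thm:median-trace-close-alternate} needs, so the paper's final claims can likely be salvaged with adjusted constants by strengthening the definition of $\J$ to additionally require $\cost_{M_{\m,k}}(u)=0$ and separately bounding the cancellation indices — but that is a fix to the paper, not a proof of the proposition as written.) In short: your identification of the insertion/cancellation obstruction is the substantive content here and is correct; the closing argument you outline needs a different mechanism than Lemma~\ref{lem:unique-alignment}.
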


Let us now define a set of good events and then from now on we proceed by assuming those good events occur.
\begin{itemize} \compactify
\item $\mathcal{G}_1 \eqdef$ For each of the pairs of strings $(x_1,\bar{x}), (\bar{x},x_2)$ and $(\bar{x},x_3)$, their edit distances are at least $(1-6\epsilon)p n -  2\sqrt{n} \log n$ and at most $(1+\epsilon)p n + 2\sqrt{n} \log n$.
\item $\mathcal{G}_2 \eqdef$ For any $i \ne j \in [3]$, $(1-6\epsilon)q n - 2\sqrt{n} \log n \le \ED(x_i,x_j) \le (1+\epsilon) q n + 4\sqrt{n} \log n$.
\item $\mathcal{G}_3 \eqdef$ For all $k \in \{2,3\}$, $\Big|\tI_{A_{1,\bar{x},k}^p}^{A_{1,k}^q}\Big| \ge (1-25\epsilon) q n - 2 \sqrt{n} \log n$, and $\Big|\tI_{M_{1,\m,k}}^{A_{1,k}^q}\Big| \ge (1-37\epsilon) q n - 2 \sqrt{n} \log n$.
\item $\mathcal{G}_4 \eqdef$ For any $i\ne j \in \{2,3\}$, $\Big|\J_{A_{1,\bar{x},i}^p}^{A_{1,i}^q} \setminus \J_{A_{1,\bar{x},j}}^{A_{1,j}^q} \Big| \le 2 \epsilon \Big|\J_{A_{1,\bar{x},i}^p}^{A_{1,i}^q}\Big| $.
\end{itemize}
Let $\mathcal{G}\eqdef \bigcap_{i \in [4]}\mathcal{G}_i$.
\begin{claim}
\label{clm:good-event}
The probability that the event $\mathcal{G}$ occurs is at least $1-e^{-\log^2 n}$.
\end{claim}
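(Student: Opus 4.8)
The plan is to prove $\Pr[\neg\mathcal{G}]\le\sum_{i=1}^{4}\Pr[\neg\mathcal{G}_i]$ by a union bound and to bound each $\Pr[\neg\mathcal{G}_i]\le e^{-\Omega(\log^2 n)}$ by reducing $\mathcal{G}_i$ to the concentration statements of Section~\ref{sec:unique-edit} (Lemmas~\ref{lem:number-edit-operation}, \ref{lem:tildeI}, \ref{lem:editbound-small-alphabet}, \ref{lem:unique-alignment}, and their large-alphabet analogues) together with elementary Chernoff bounds. A preliminary step used throughout is the passage from the infinite-length channels of Section~\ref{sec:prob-model} to length-$n$ prefixes: the number of symbols of $x_1$ that $G_p$ consumes while emitting $\bar{x}[1,n]$, and likewise for each further channel, equals $n\pm O(\sqrt n\log n)$ except with probability $e^{-\Omega(\log^2 n)}$; hence every quantity we estimate for a length-$n$ prefix differs from the corresponding "exact length $n$" quantity by at most $O(\sqrt n\log n)$ symbols, which is precisely what the $\pm 2\sqrt n\log n$ and $4\sqrt n\log n$ slacks in $\mathcal{G}_1,\dots,\mathcal{G}_4$ absorb. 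Events $\mathcal{G}_1$ and $\mathcal{G}_2$ then follow essentially off the shelf, while the combinatorial work lies in $\mathcal{G}_3$ (cost bounds for composed alignments) and $\mathcal{G}_4$ (overlap of the $\J$-sets).

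For $\mathcal{G}_1$: by construction $(x_1,\bar{x})$ has the law of $(U,G_p(U))$, and by Proposition~\ref{prop:dist-symmetry} (which gives $\bar{x}\sim U$ marginally) so do $(\bar{x},x_2)$ and $(\bar{x},x_3)$; the lower edit-distance bound is Lemma~\ref{lem:editbound-small-alphabet} (its hypothesis $\epsilon\ge 15p\log(1/p)$ follows from $\epsilon\ge 110p\log(1/p)$), and the upper bound holds because $\ED$ is at most the cost of the planted alignment, which equals the number of insertions plus deletions and concentrates around $pn$ by Lemma~\ref{lem:number-edit-operation} and Chernoff. For $\mathcal{G}_2$: by Corollary~\ref{cor:dist-pair-transitivity} applied in each direction, every pair among $(x_1,x_2),(x_1,x_3),(x_2,x_3)$ has the law of $(U,G_q(U))$ with $q=q(p)=2p-\Theta(p^2)$, so the same argument applies with $q$ in place of $p$ ($\epsilon\ge 15q\log(1/q)$ still holds since $q<2.01p$), the extra slack $4\sqrt n\log n$ accounting for the two nested channels in the pair $(x_2,x_3)$. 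Each of these $O(1)$ estimates fails with probability $\le 2e^{-\epsilon^2p^2n/2}\le e^{-\Omega(\log^2 n)}$.

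For $\mathcal{G}_3$ we invoke the $q$-analogue of Lemma~\ref{lem:unique-alignment} (its hypothesis $\epsilon\ge 42q\log(1/q)$ again following from $\epsilon\ge 110p\log(1/p)$) with planted alignment $A_{1,k}^q$ and near-optimal alignment equal respectively to $A_{1,\bar{x},k}^p$ and to $M_{1,\m,k}$; it remains to certify that each is a valid alignment between $x_1$ and $x_k$ of cost at most $(1+\delta)qn$ for an appropriate $\delta=\Theta(\epsilon)$, whereupon the lemma gives $|\tI^{A_{1,k}^q}_{\cdot}|\ge(1-23\epsilon-\delta)qn$ up to the prefix slack. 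For $A_{1,\bar{x},k}^p=A_{\bar{x},k}^p\circ A_{1,\bar{x}}^p$, composition never decreases cost, so $\cost(A_{1,\bar{x},k}^p)\le\cost(A_{1,\bar{x}}^p)+\cost(A_{\bar{x},k}^p)$, each summand being $(1+o(\epsilon))pn$ by Lemma~\ref{lem:number-edit-operation}, and since $2pn\le(1+p)qn\le(1+\epsilon/100)qn$ this gives $\delta=o(\epsilon)$ and hence the $25\epsilon$ bound. For $M_{1,\m,k}=M_{\m,k}\circ M_{1,\m}$ we have $\cost(M_{1,\m,k})\le\ED(x_1,\med)+\ED(\med,x_k)=a_1+a_k$ with $a_j\eqdef\ED(x_j,\med)$; the $(1+\epsilon)$-approximate-median property gives $a_1+a_2+a_3\le(1+\epsilon)\opt$, the bound $\opt\le\obj(S,\bar{x})=\sum_j\ED(x_j,\bar{x})$ comes from $\mathcal{G}_1$, and the triangle inequalities $a_i+a_3\ge\ED(x_i,x_3)$ together with the lower bounds in $\mathcal{G}_2$ bound $a_3$ from below; combining these (and converting $pn$ to $qn$) yields $a_1+a_k\le(1+14\epsilon)qn+O(\sqrt n\log n)$, so the lemma with $\delta=14\epsilon$ gives the $37\epsilon$ bound. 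This constant-tracking — nested $(1+\epsilon)$ factors, the $pn\leftrightarrow qn$ conversion, and checking that the robustness lemma remains in force for $\delta=\Theta(\epsilon)$ — is only harmless because $p$ is tiny (so $q$ is extremely close to $2p$), and I expect it to be the main obstacle.

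Finally, for $\mathcal{G}_4$, fix $\{i,j\}=\{2,3\}$ and condition on the first-leg alignment $A_{1,\bar{x}}^p$, after which the two second-leg channels $A_{\bar{x},2}^p,A_{\bar{x},3}^p$ are independent. An index $\iota$ lies in $\J_{A_{1,\bar{x},k}^p}^{A_{1,k}^q}$ essentially iff the first leg makes exactly one well-separated edit operation inside $W^\iota$ and the $k$-th second leg makes none inside the image of $W^\iota$ under $A_{1,\bar{x}}^p$; hence $\J_{A_{1,\bar{x},i}^p}^{A_{1,i}^q}$ and $\J_{A_{1,\bar{x},j}^p}^{A_{1,j}^q}$ agree on every such $\iota$ whose image window is clean for both second legs, and $\J_{A_{1,\bar{x},i}^p}^{A_{1,i}^q}\setminus\J_{A_{1,\bar{x},j}^p}^{A_{1,j}^q}$ is contained in the set of $\iota$ where the first leg edits $W^\iota$ while the $j$-th second leg edits the image of $W^\iota$ (the residual possibilities, that $\iota$ fails $\tI^{A_{1,j}^q}$-membership or the boundary-agreement event, are already subsumed in the $\mathcal{G}_3$ slack). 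Conditioned on the first leg, each of the $\approx pn$ first-leg edit operations is spoiled by the $j$-th second leg with probability $O(\epsilon)$ (the image window has length $O(\epsilon/q)$ and edit density $\approx p$), so by Chernoff this set has size $O(\epsilon pn)=O(\epsilon qn)$ with probability $1-e^{-\Omega(\epsilon pn)}$, while a matching Chernoff lower bound (using $\mathcal{G}_3$ to control the $\tI$ count) gives $|\J_{A_{1,\bar{x},i}^p}^{A_{1,i}^q}|=\Theta(qn)$ with the same confidence; choosing the constants so that the former is at most $2\epsilon$ times the latter establishes $\mathcal{G}_4$. Summing the $O(1)$ failure probabilities, each of the form $e^{-\Omega(\epsilon^2p^2n)}$ or $e^{-\Omega(\log^2 n)}$ and hence $\le e^{-\log^2 n}$ for $n$ large enough, completes the proof.
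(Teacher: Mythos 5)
Your plan matches the paper's own proof: introduce a high-probability "prefix-alignment" event (the paper's $\mathcal{G}_5$, your "preliminary step") to control the passage from infinite strings to $n$-prefixes, derive $\mathcal{G}_1,\mathcal{G}_2$ from Lemmas~\ref{lem:number-edit-operation} and~\ref{lem:editbound-small-alphabet} (via Proposition~\ref{prop:dist-symmetry} and Corollary~\ref{cor:dist-pair-transitivity}), derive $\mathcal{G}_3$ by bounding the cost of the composite alignments and applying Lemma~\ref{lem:unique-alignment} over the $q$-channel, and derive $\mathcal{G}_4$ by conditioning on the first leg and applying a Chernoff bound. The only cosmetic difference is that the paper bounds $\cost(M_{1,\m,k})$ by first invoking Claim~\ref{clm:med-str-dist} ($\ED(x_i,\med)\le(1+22\epsilon)pn$) rather than re-deriving that bound from scratch as you do, and both you and the paper apply Lemma~\ref{lem:unique-alignment} with $\delta$ a small constant multiple of $\epsilon$ (slightly outside its stated $\delta\le\epsilon$ hypothesis), absorbed into the unoptimized constants.
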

We defer the proof of the above claim to the end of this section. Let us now make a simple observation on the value of $\opt$.
\begin{claim}
\label{clm:opt-value}
Assuming $\mathcal{G}_1$ and $\mathcal{G}_2$ occur, $3(1-7 \epsilon)p n \le \opt \le 3(1+\epsilon) p n +  6 \sqrt{n} \log n$.
\end{claim}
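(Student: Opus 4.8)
The plan is to derive both inequalities directly from the triangle inequality for $\ED$ together with the pairwise distance estimates packaged into $\mathcal{G}_1$ and $\mathcal{G}_2$; no further probabilistic reasoning is needed once we condition on those two events.

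For the upper bound I would simply use $\bar{x}$ itself as a feasible center. Since $\opt = \min_y \obj(S,y) \le \obj(S,\bar{x}) = \ED(x_1,\bar{x}) + \ED(x_2,\bar{x}) + \ED(x_3,\bar{x})$ and, under $\mathcal{G}_1$, each of the three summands is at most $(1+\epsilon)pn + 2\sqrt{n}\log n$, this already gives $\opt \le 3(1+\epsilon)pn + 6\sqrt{n}\log n$.

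For the lower bound, let $y$ be arbitrary (ultimately an optimal median). Applying the triangle inequality $\ED(x_i,y) + \ED(x_j,y) \ge \ED(x_i,x_j)$ to the three pairs $\{1,2\},\{1,3\},\{2,3\}$ and summing yields $2\,\obj(S,y) \ge \ED(x_1,x_2) + \ED(x_1,x_3) + \ED(x_2,x_3)$, so by $\mathcal{G}_2$ we obtain $\obj(S,y) \ge \tfrac32\bigl((1-6\epsilon)qn - 2\sqrt{n}\log n\bigr)$. The rest is arithmetic: factoring $q = \frac{p(4-3p)}{2-p^2} = \frac{2p(1-\tfrac34 p)}{1-\tfrac12 p^2}$ gives $q \ge 2p(1-\tfrac34 p)$, hence $\tfrac32(1-6\epsilon)qn \ge 3(1-6\epsilon)(1-\tfrac34 p)pn \ge 3(1-7\epsilon)pn + 3(\epsilon - \tfrac34 p)pn$, and since $\epsilon \ge 110 p\log(1/p) > p$ for $p \in (0,0.001)$ the slack term is at least $\tfrac34 \epsilon pn$. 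Finally, because $\epsilon p \ge 110 p^2\log(1/p)$ and $n$ is large enough relative to $p$, this slack dominates the additive loss $3\sqrt{n}\log n$, so $\opt \ge 3(1-7\epsilon)pn$.

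The argument is elementary throughout; the only step that demands any care is the last one, namely verifying that the $\epsilon$-slack in the lower bound is wide enough to swallow the $O(\sqrt{n}\log n)$ additive terms built into $\mathcal{G}_1$ and $\mathcal{G}_2$, which is exactly where the hypotheses $\epsilon \ge 110 p\log(1/p)$ and "$n$ large enough" are used. I do not anticipate any further obstacle.
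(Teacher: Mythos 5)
Your proof is correct and follows exactly the paper's own argument: the upper bound uses $\bar{x}$ as a feasible center under $\mathcal{G}_1$, and the lower bound applies the triangle inequality to each of the three pairs and sums, then invokes $\mathcal{G}_2$. The only difference is that you spell out the final arithmetic ($q \ge 2p(1-\tfrac{3}{4}p)$ and absorbing the $O(\sqrt{n}\log n)$ additive loss into the $\epsilon pn$ slack for large $n$), where the paper simply asserts it from $q = 2p - \Theta(p^2)$ and $p \in (0,0.001)$.
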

\begin{proof}
Since $\bar{x}=G_p(x_1)$, $\ED(x_1,\bar{x}) \le (1+\epsilon)p n + 2 \sqrt{n} \log n$, which implies $|\I^{A_{1,\bar{x}}^p}| \le (1+\epsilon) p n + 2 \sqrt{n} \log n$. Similar bounds also hold for $\ED(x_2,\bar{x})$ and $ \ED(x_3,\bar{x})$. Thus,
\begin{align*}
\obj(\bar{x})& =\ED(x_1,\bar{x}) + \ED(x_2,\bar{x}) + \ED(x_3,\bar{x})\\
&\le 3(1+\epsilon)p n + 6 \sqrt{n} \log n &\text{assuming }\mathcal{G}_1.
\end{align*}
Clearly, $\opt \le \obj(\bar{x})$.

Now for the lower bound, let $\med^*$ be an optimal median. Thus we deduce that 
\begin{align*}
\opt &= \sum_{i=1}^3 \ED(x_i,\med^*)\\
&=\frac{1}{2} \sum_{i < j}(\ED(x_i,\med^*) + \ED(x_j, \med^*))\\
&\ge \frac{1}{2} \sum_{i < j} \ED(x_i,x_j) &\text{by triangular inequality}\\
&\ge \frac{3}{2} ((1-6\epsilon)q n - 2 \sqrt{n} \log n) &\text{assuming }\mathcal{G}_2\\
&\ge 3(1-7 \epsilon)p n &\text{since } q = 2p - \Theta(p^2) \text{ and }p \in (0,0.001).
\end{align*}
\end{proof}

Next, we provide an upper bound on the cost of an optimal alignment between $\med$ and $x_i$ for $i \in [3]$.
\begin{claim}
\label{clm:med-str-dist}
Assuming $\mathcal{G}_1$ and $\mathcal{G}_2$ occur, for each $i\in [3]$, $\ED(x_i,\med) \le (1+22\epsilon)p n$.
\end{claim}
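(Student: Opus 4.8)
The plan is to isolate a single distance $\ED(x_i,\med)$ from the median objective. Fix $i\in[3]$ and write $\{j,k\}=[3]\setminus\{i\}$. Since $\med$ is a $(1+\epsilon)$-approximate median of $S=\{x_1,x_2,x_3\}$ we have $\obj(\med)=\ED(x_i,\med)+\ED(x_j,\med)+\ED(x_k,\med)\le(1+\epsilon)\opt$, and by the triangle inequality $\ED(x_j,\med)+\ED(x_k,\med)\ge\ED(x_j,x_k)$. Combining these gives $\ED(x_i,\med)\le(1+\epsilon)\opt-\ED(x_j,x_k)$. The first step is therefore to plug in the two bounds available under the assumed good events: the upper bound $\opt\le 3(1+\epsilon)pn+6\sqrt n\log n$ from Claim~\ref{clm:opt-value} (which uses only $\mathcal{G}_1$) and the lower bound $\ED(x_j,x_k)\ge(1-6\epsilon)qn-2\sqrt n\log n$ from $\mathcal{G}_2$, yielding
\[
 \ED(x_i,\med)\;\le\;(1+\epsilon)\bigl(3(1+\epsilon)pn+6\sqrt n\log n\bigr)-(1-6\epsilon)qn+2\sqrt n\log n .
\]

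The second step is a short arithmetic simplification of the right-hand side. I would expand $3(1+\epsilon)^2p$, use $q=q(p)=\frac{p(4-3p)}{2-p^2}\ge 2p-\tfrac32 p^2$ (so the linear-in-$n$ term becomes $p+O(\epsilon)p+O(p^2)$ times $n$), bound $\epsilon^2\le\tfrac16\epsilon$ since $\epsilon\le\tfrac16$, and then invoke the standing hypothesis $\epsilon\ge 110p\log(1/p)\ge 110p$ to absorb the leftover $O(p^2)n$ into $\epsilon pn$. Finally, since $\epsilon$ and $p$ are fixed, $\epsilon pn\ge 110 p^2\log(1/p)\,n=\omega(\sqrt n\log n)$, so for $n$ large enough the residual $O(\sqrt n\log n)$ terms are also dominated by $\epsilon pn$. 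Collecting everything gives $\ED(x_i,\med)\le(1+22\epsilon)pn$.

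The calculation is routine; the only care needed is to track constants so that the $22\epsilon pn$ slack really does cover all the corrections. The single point worth noting is that, although $q/p=\frac{4-3p}{2-p^2}$ is strictly below $2$, it differs from $2$ by only $\Theta(p)$, so $3(1+\epsilon)^2p-(1-6\epsilon)q$ exceeds $p$ by just $\Theta(\epsilon p)+\Theta(p^2)$; both corrections fit inside $22\epsilon pn$ precisely because $p\ll\epsilon$ under the assumption on $\epsilon$. No probabilistic reasoning is required beyond conditioning on $\mathcal{G}_1$ and $\mathcal{G}_2$.
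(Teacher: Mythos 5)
Your proof is correct and is essentially the paper's own argument, merely phrased directly rather than as a proof by contradiction: the paper assumes $\ED(x_i,\med)\ge(1+22\epsilon)pn$, derives $\ED(x_j,x_k)\le\ED(x_j,\med)+\ED(x_k,\med)\le(1-7\epsilon)qn$, and contradicts $\mathcal{G}_2$, which is exactly the chain $\ED(x_i,\med)\le(1+\epsilon)\opt-\ED(x_j,x_k)$ that you use, followed by the same substitution of the $\mathcal{G}_1$ and $\mathcal{G}_2$ bounds and the same constant-tracking.
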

\begin{proof}
Since $\med$ is an $(1+\epsilon)$-approximate median of $x_1,x_2,x_3$, $\obj(\med) \le (1+\epsilon) \opt$. Now, we show that $\ED(x_1,\med) \le (1+22\epsilon)p n$. For the contradiction sake, let us assume that $\ED(x_1,\med) \ge (1+22\epsilon)p n$. Then 
\begin{align*}
\ED(x_2,\med) + \ED(x_3,\med) & = \obj(\med) - \ED(x_1,\med)\\
&\le (1+\epsilon)\opt - \ED(x_1,\med)\\
&\le 3(1+\epsilon)^2 p n + 6 (1+\epsilon) \sqrt{n} \log n - (1+22\epsilon)p n &\text{by Claim~\ref{clm:opt-value}}\\
&\le (1-7\epsilon)q n &\text{since} q = 2p - \Theta(p^2).
\end{align*}
From the above we can deduce that by triangular inequality, $\ED(x_2,x_3) \le \ED(x_2,\med) + \ED(x_3,\med) \le (1-7\epsilon)q n$, which contradicts the fact that the event $\mathcal{G}_2$ occurs. A similar argument works for $\ED(x_2,\med)$ and $\ED(x_3,\med)$.
\end{proof}

The next claim is the key to prove Theorem~\ref{thm:median-trace-close-alternate}.
\begin{lemma}
\label{lem:large-similarity}
Assuming $\mathcal{G}$ occurs, for any $k \in \{2,3\}$, $\Big|\J_{M_{1,\m,k}}^{A_{1,k}^q} \bigcap \J_{A_{1,\bar{x},k}^p}^{A_{1,k}^q}\Big| \ge (1-276\epsilon)p n$.
\end{lemma}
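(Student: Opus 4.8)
The plan is an inclusion--exclusion argument. Fix $k\in\{2,3\}$ and abbreviate $\J_A\eqdef\J_{A_{1,\bar{x},k}^p}^{A_{1,k}^q}$ and $\J_M\eqdef\J_{M_{1,\m,k}}^{A_{1,k}^q}$. I would show that $|\J_A|,|\J_M|\ge(1-O(\epsilon))pn$ while $|\J_A\cup\J_M|\le(1+O(\epsilon))pn$; then $|\J_A\cap\J_M|\ge|\J_A|+|\J_M|-|\J_A\cup\J_M|\ge(1-276\epsilon)pn$ once the constants are tracked. For the two lower bounds, note first that both $A_{1,\bar{x},k}^p$ and $M_{1,\m,k}$ are near-optimal alignments between $x_1$ and $x_k$: $\cost(A_{1,\bar{x},k}^p)\le\cost(A_{1,\bar{x}}^p)+\cost(A_{\bar{x},k}^p)\le(1+O(\epsilon))qn$ using $\mathcal{G}_1$, and $\cost(M_{1,\m,k})\le\cost(M_{1,\m})+\cost(M_{\m,k})=\ED(x_1,\med)+\ED(\med,x_k)\le(1+O(\epsilon))qn$ using Claim~\ref{clm:med-str-dist} together with $q=2p-\Theta(p^2)$ and $p\ll\epsilon$. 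Hence $\mathcal{G}_3$ yields $|\tI_{A_{1,\bar{x},k}^p}^{A_{1,k}^q}|,|\tI_{M_{1,\m,k}}^{A_{1,k}^q}|\ge(1-O(\epsilon))qn$. Passing from $\tI_{A_{1,\bar{x},k}^p}^{A_{1,k}^q}$ to $\J_A$ only discards windows whose single cost unit lies in the second leg $A_{\bar{x},k}^p$; as the windows $W^i$ ($i\in\tI^{A_{1,k}^q}$) are pairwise disjoint and $A_{1,\bar{x}}^p$ maps them to disjoint substrings of $\bar{x}$ (Lemma~\ref{lem:subcost}), at most $|\I^{A_{\bar{x},k}^p}|\le(1+\epsilon)pn$ windows are discarded (cf.\ Lemma~\ref{lem:number-edit-operation}), and since $qn=(2-\Theta(p))pn$ this gives $|\J_A|\ge(1-O(\epsilon))pn$; the same reasoning discards at most $\cost(M_{\m,k})=\ED(\med,x_k)\le(1+22\epsilon)pn$ windows and gives $|\J_M|\ge(1-O(\epsilon))pn$.

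For the union bound, define the window sets $\mathcal{W}_A\eqdef\{i\in\tI^{A_{1,k}^q}: W^i\text{ contains an edit operation of }A_{1,\bar{x}}^p\}$ and $\mathcal{W}_M\eqdef\{i\in\tI^{A_{1,k}^q}: W^i\text{ contains an edit operation of }M_{1,\m}\}$. By the definitions of $\J_A,\J_M$ we have $\J_A\subseteq\mathcal{W}_A$ and $\J_M\subseteq\mathcal{W}_M$, and disjointness of the windows yields $|\mathcal{W}_A|\le|\I^{A_{1,\bar{x}}^p}|\le(1+\epsilon)pn$ and $|\mathcal{W}_M|\le\ED(x_1,\med)\le(1+22\epsilon)pn$. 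Since $|\mathcal{W}_A\cup\mathcal{W}_M|=|\mathcal{W}_M|+|\mathcal{W}_A\setminus\mathcal{W}_M|$, the entire proof reduces to the crux
\[ |\mathcal{W}_A\setminus\mathcal{W}_M|\le O(\epsilon)pn , \]
i.e.\ on all but an $O(\epsilon)$-fraction of the windows where the planted alignment $A_{1,\bar{x}}^p$ performs an edit, the optimal alignment $M_{1,\m}$ from $x_1$ to the median also performs an edit. Granting this, $|\J_A\cup\J_M|\le|\mathcal{W}_A\cup\mathcal{W}_M|\le(1+O(\epsilon))pn$ and the lemma follows.

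To prove the crux---which I expect to be the main obstacle, as it is the only place that genuinely uses that $\med$ is a near-optimal \emph{median} (rather than merely a string close to $x_1$) and that \emph{two} traces are available---I would proceed as follows. Let $\J_A^\ast\eqdef\J_{A_{1,\bar{x},2}^p}^{A_{1,2}^q}\cap\J_{A_{1,\bar{x},3}^p}^{A_{1,3}^q}$; by $\mathcal{G}_4$ together with the lower bounds of the first paragraph, $|\J_A^\ast|\ge(1-O(\epsilon))pn$, and on each window $W^i$ with $i\in\J_A^\ast$ the single edit of $A_{1,\bar{x}}^p$ is copied cleanly into both $x_2$ and $x_3$, so $x_2$ and $x_3$ carry there the \emph{same} block $\bar{B}_i\eqdef\bar{x}[A_{1,\bar{x}}^p(i-\epsilon/q),A_{1,\bar{x}}^p(i+\epsilon/q)]$, which differs from $x_1[W^i]$ by exactly one insertion or deletion (so $\bar{B}_i\ne x_1[W^i]$). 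Now take $i\in\J_A^\ast$ with $W^i\notin\mathcal{W}_M$: then $M_{1,\m}$ matches $x_1[W^i]$ into $\med$ with zero cost, so $\med$'s block there equals $x_1[W^i]\ne\bar{B}_i$. For all but $O(\epsilon)pn$ such $i$ we also have $i\in\tI_{M_{1,\m,2}}^{A_{1,2}^q}\cap\tI_{M_{1,\m,3}}^{A_{1,3}^q}$ by $\mathcal{G}_3$, which pins the image of this block in both $x_2$ and $x_3$ to $\bar{B}_i$; since a length-$\ell$ block cannot be matched to $\bar{B}_i$ (of length $\ell\pm1$) at zero cost, $\med$ pays at this window at least one unit against $x_2$ and one against $x_3$, whereas the reference choice $\bar{x}$ pays only one unit total there (against $x_1$), and $\med$ cannot do better than this reference on any $\J_A^\ast$-window because $x_1[W^i]\ne\bar{B}_i$. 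Comparing window by window over $\J_A^\ast$ and using disjointness of the $W^i$'s, $\obj(\med)\ge\obj(\bar{x})+|\{i\in\J_A^\ast: W^i\notin\mathcal{W}_M\}|-O(\epsilon)pn-O(\sqrt{n}\log n)$. Since $\med$ is a $(1+\epsilon)$-approximate median and $\opt=\Theta(pn)$ with $\obj(\bar{x})\le(1+O(\epsilon))\opt$ by Claim~\ref{clm:opt-value}, we conclude $|\{i\in\J_A^\ast:W^i\notin\mathcal{W}_M\}|\le O(\epsilon)pn$. Finally $|\mathcal{W}_A\setminus\mathcal{W}_M|\le|\{i\in\J_A^\ast:W^i\notin\mathcal{W}_M\}|+|\mathcal{W}_A\setminus\J_A^\ast|$ and $|\mathcal{W}_A\setminus\J_A^\ast|\le|\mathcal{W}_A|-|\J_A^\ast|\le O(\epsilon)pn$ by the size bounds above, which establishes the crux. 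The one point that still needs care is making the window-by-window comparison of $\obj(\med)$ with $\obj(\bar{x})$ rigorous---reconciling the $q$-windows used for $x_2$ with those used for $x_3$, and checking that the relevant blocks in $\med,x_2,x_3$ are indeed disjoint---but this is routine bookkeeping of the kind already carried out in the earlier lemmas.
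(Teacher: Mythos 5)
Your decomposition is sound and, after unwinding, equivalent to the paper's: the set you need to bound, $\{i\in\J_A^\ast: W^i\notin\mathcal{W}_M\}$, is essentially the paper's $(Q\cap R)\setminus\J_{M_{1,\m,2}}^{A_{1,2}^q}$, and the surrounding inclusion--exclusion bookkeeping (lower bounds on $|\J_A|,|\J_M|$, reduction via $\mathcal{W}_A,\mathcal{W}_M$) is correct. The gap is exactly in the step you yourself flag as ``routine bookkeeping.'' The inequality
\[
\obj(\med)\ \ge\ \obj(\bar{x})+|\{i\in\J_A^\ast: W^i\notin\mathcal{W}_M\}|-O(\epsilon)pn
\]
does \emph{not} follow from a window-by-window comparison over $\J_A^\ast$. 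The $\J_A^\ast$-windows account for only roughly $pn$ of $\obj(\bar{x})\approx 3pn$; the other $\approx 2pn$ comes from the edits of $A_{\bar{x},2}^p$ and $A_{\bar{x},3}^p$, which live on windows \emph{outside} $\J_A^\ast$. Your per-window argument says nothing about those, so from ``$\med$ pays at least as much as $\bar{x}$ on $\J_A^\ast$-windows and strictly more on the offending ones'' you can only conclude a lower bound on the $\J_A^\ast$-contribution to $\obj(\med)$; to upgrade this to $\obj(\med)\ge\obj(\bar{x})+\cdots$ you would additionally need that $\med$'s cost on the complement is at least $\bar{x}$'s cost on the complement minus $O(\epsilon)pn$, which is a nontrivial claim you never establish (and which would require a separate accounting of the $A_{\bar{x},k}^p$-edit windows).

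The paper avoids this entirely by a local surgery rather than a global comparison: on each offending block it \emph{replaces} $\med$'s block by the corresponding block of $\bar{x}$, which reduces $\ED(\cdot,x_2)$ and $\ED(\cdot,x_3)$ by one each while increasing $\ED(\cdot,x_1)$ by at most one, so $\obj$ drops by at least one per block. If the offending set exceeded $4\epsilon pn$ this would push the objective strictly below $\opt$, a contradiction, and no comparison with the out-of-window parts of $\obj(\bar{x})$ is ever needed. Replacing your global comparison with this surgery makes your crux argument go through (you already have all the ingredients -- the blocks are disjoint, $\bar{B}_i$ equals both the $x_2$- and $x_3$-block, and $\med$'s block equals $x_1$'s which differs from $\bar{B}_i$ by one edit); as written, however, the crux step is not complete.
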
 
Before proving the above claim, let us fist prove Theorem~\ref{thm:median-trace-close-alternate} by assuming the above claim.

\begin{proof}[Proof of Theorem~\ref{thm:median-trace-close-alternate}]
Let us first assume that the good event $\mathcal{G}$ occurs. Let $K\eqdef \J_{M_{1,\m,2}}^{A_{1,2}^q} \bigcap \J_{A_{1,\bar{x},2}^p}^{A_{1,2}^q}$. By Claim~\ref{clm:med-str-dist}, $\ED(x_1,\med) \le (1+22\epsilon)p n$, which implies $|\I^{M_{1,\m}}| \le (1+22\epsilon)p n$. Note, $\J_{M_{1,\m,k}}^{A_{1,k}^q} \subseteq \I^{M_{1,\m}}$ and $\J_{A_{1,\bar{x},k}^p}^{A_{1,k}^q}\subseteq  \I^{A_{1,\bar{x}}^p}$. Then it follows from Lemma~\ref{lem:large-similarity} that $|\I^{M_{1,\m}} \setminus K| \le 298 \epsilon p n$. Also, assuming $\mathcal{G}_1$, $|\I^{A_{1,\bar{x}}^p} \setminus K| \le 278 \epsilon p n$.

Next observe, $\bar{x}$ can be transformed into $\med$ by the following alignment $B_{\bar{x},\med}$: First, apply the alignment function $(A_{1,\bar{x}}^p)^{-1}$ and then $M_{1,\m}$. Recall, by Proposition~\ref{prop:block-preserving}, for all $i \in \J_{M_{1,\m,k}}^{A_{1,k}^q} \cap \J_{A_{1,\bar{x},k}^p}^{A_{1,k}^q}$,
$\med[M_{1,\m}(i-\epsilon/q),M_{1,\m}(i+\epsilon/q)] = \bar{x}[A_{1,\bar{x}}^p(i-\epsilon/q),A_{1,\bar{x}}^p(i+\epsilon/q)]$. Hence, we deduce that $\ED(\bar{x},\med) \le \cost(B_{\bar{x},\med}) \le |\I^{A_{1,\bar{x}}^p} \setminus K| + |\I^{M_{1,\m}} \setminus K| \le 576 \epsilon p n$. 

Hence, by Claim~\ref{clm:opt-value}, $\cost(B_{\bar{x},\med}) \le 195\epsilon \cdot \opt$. The probability bound of the lemma follows from Claim~\ref{clm:good-event}.
\end{proof}

Now, it remains to show Lemma~\ref{lem:large-similarity}. For that purpose, we need the following observation.

\begin{claim}
\label{clm:large-IJ}
Assuming $\mathcal{G}_1, \mathcal{G}_2$ and $\mathcal{G}_3$ occur, for any $k \in \{2,3\}$, $\Big|\tI_{M_{1,\m,k}}^{A_{1,k}^q} \bigcap \J_{A_{1,\bar{x},k}^p}^{A_{1,k}^q}\Big| \ge (1-132 \epsilon)p n$.
\end{claim}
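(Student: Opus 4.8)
The plan is to pass to $\I^{A_{1,k}^q}$, the set of edit positions of the planted $q$-channel, which contains both $\tI_{M_{1,\m,k}}^{A_{1,k}^q}$ and $\J_{A_{1,\bar{x},k}^p}^{A_{1,k}^q}$ (each is a subset of the well-spaced set $\tI^{A_{1,k}^q}$, and $\tI^{A_{1,k}^q}\subseteq\I^{A_{1,k}^q}$), and then apply inclusion--exclusion:
\[
  \Big| \tI_{M_{1,\m,k}}^{A_{1,k}^q} \cap \J_{A_{1,\bar{x},k}^p}^{A_{1,k}^q} \Big|
  \ \ge\ \Big| \tI_{M_{1,\m,k}}^{A_{1,k}^q} \Big| + \Big| \J_{A_{1,\bar{x},k}^p}^{A_{1,k}^q} \Big| - \big|\I^{A_{1,k}^q}\big| .
\]
By Lemma~\ref{lem:number-edit-operation} applied to the $q$-channel, with high probability $\big|\I^{A_{1,k}^q}\big| \le (1+\epsilon)qn + O(\sqrt n \log n)$, and $\mathcal{G}_3$ gives $\big|\tI_{M_{1,\m,k}}^{A_{1,k}^q}\big| \ge (1-37\epsilon)qn - 2\sqrt n\log n$. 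Hence it suffices to prove $\big|\J_{A_{1,\bar{x},k}^p}^{A_{1,k}^q}\big| \ge (1-O(\epsilon))pn$; substituting $q = 2p-\Theta(p^2)$ and absorbing the terms $p^2 n$ and $\sqrt n\log n$ into $O(\epsilon pn)$ — legitimate since the target failure probability $e^{-\log^2 n}$ already forces $\epsilon pn \gg \log^2 n$, and $p\le\epsilon$ — then yields the claimed $(1-132\epsilon)pn$ after tracking constants.

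To lower bound $\big|\J_{A_{1,\bar{x},k}^p}^{A_{1,k}^q}\big|$, I would start from $\big|\tI_{A_{1,\bar{x},k}^p}^{A_{1,k}^q}\big| \ge (1-25\epsilon)qn - 2\sqrt n\log n$ ($\mathcal{G}_3$) and split $\tI_{A_{1,\bar{x},k}^p}^{A_{1,k}^q}$ according to the value $c_1(i):=\cost_{A_{1,\bar{x}}^p}(W^i)$, using that membership in this set forces $\cost_{A_{1,\bar{x},k}^p}(W^i)=1$, and that the windows $W^i$ (and, when $c_1(i)=0$, their images under $A_{1,\bar x}^p$) are pairwise disjoint by well-spacing of the $q$-edits. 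The indices with $c_1(i)=1$ are exactly $\J_{A_{1,\bar{x},k}^p}^{A_{1,k}^q}$. If $c_1(i)=0$, then $A_{1,\bar{x}}^p$ carries $x_1[W^i]$ identically onto a block of $\bar x$ on which $A_{\bar{x},k}^p$ has cost exactly $1$, so there are at most $\big|\I^{A_{\bar{x},k}^p}\big| \le (1+\epsilon)pn + O(\sqrt n\log n)$ such $i$. If $c_1(i)\ge 2$, then since $\cost_{A_{1,\bar{x},k}^p}(W^i)\ge c_1(i)-d_i$, where $d_i$ is the number of characters inserted by $A_{1,\bar x}^p$ inside that block and later deleted by $A_{\bar x,k}^p$, we must have $d_i\ge 1$; as cancellations in disjoint windows are distinct, the number of such $i$ is at most the total number of cancellations, which is at most $\tfrac12\big(\cost(A_{1,\bar x}^p)+\cost(A_{\bar x,k}^p)-\cost(A_{1,\bar x,k}^p)\big) \le \tfrac12\big(2(1+\epsilon)pn-(1-6\epsilon)qn\big)+O(\sqrt n\log n) = O(\epsilon pn)$, using $\mathcal{G}_1$, $\mathcal{G}_2$ and $q=2p-\Theta(p^2)$. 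Adding the three cases, $\big|\J_{A_{1,\bar{x},k}^p}^{A_{1,k}^q}\big| \ge (1-25\epsilon)qn - (1+\epsilon)pn - O(\epsilon pn) - O(\sqrt n\log n) \ge (1-O(\epsilon))pn$.

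I expect the main obstacle to be the case $c_1(i)\ge 2$: one has to recognize that a window in which $A_{1,\bar x}^p$ makes two or more edits while the composition still costs exactly $1$ must host a genuine cancellation (a channel-1 insertion killed by a channel-2 deletion), and then to control the global cancellation count. The clean way to do the latter is to observe that $\cost(A_{1,\bar x}^p)+\cost(A_{\bar x,k}^p)-\cost(A_{1,\bar x,k}^p)$ equals twice the number of such cancellations plus a nonnegative remainder, and this ``defect'' is already squeezed to $O(\epsilon pn)$ by $\mathcal{G}_1$, $\mathcal{G}_2$ and the transitivity relation $q=2p-\Theta(p^2)$, so no new concentration bound beyond the stated good events is required. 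The case $c_1(i)=0$ and the closing arithmetic (where the explicit constant $132$ emerges) are routine.
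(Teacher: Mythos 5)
Your proposal is correct and follows essentially the same inclusion--exclusion template as the paper, but your bound on $\big|\tI_{A_{1,\bar{x},k}^p}^{A_{1,k}^q} \setminus \J_{A_{1,\bar{x},k}^p}^{A_{1,k}^q}\big|$ is more elaborate than it needs to be. You split by the first-stage cost $c_1(i)=\cost_{A_{1,\bar{x}}^p}(W^i)$ and treat $c_1(i)\ge 2$ via a global cancellation count (the cost defect $\cost(A_{1,\bar x}^p)+\cost(A_{\bar x,k}^p)-\cost(A_{1,\bar x,k}^p)$). The paper instead observes in one stroke that
\[
  \Big|\tI_{A_{1,\bar{x},k}^p}^{A_{1,k}^q} \setminus \J_{A_{1,\bar{x},k}^p}^{A_{1,k}^q}\Big| \ \le\ |\I^{A_{\bar{x},k}^p}|,
\]
because for \emph{every} such index $i$ (whether $c_1(i)=0$ or $c_1(i)\ge 2$), the image window $A_{1,\bar x}^p(W^i)$ must contain at least one edit of $A_{\bar x,k}^p$: if $c_1(i)=0$ the single composed edit lives in $A_{\bar x,k}^p$, and if $c_1(i)\ge 2$ a cancellation forces a deletion by $A_{\bar x,k}^p$ there. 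Since these image windows are pairwise disjoint (Lemma~\ref{lem:subcost}), the count is bounded by $|\I^{A_{\bar{x},k}^p}|\le (1+\epsilon)pn + 2\sqrt n\log n$ in one shot, without the separate $O(\epsilon pn)$ term from cancellation counting, and this is precisely what produces the stated constant $132$. Your version is sound but introduces an extra additive $O(\epsilon pn)$ in the subtraction, which in the end yields a somewhat worse constant; the upshot is the same $(1-O(\epsilon))pn$ bound. The other cosmetic difference --- you apply $|\tI_M\cap\J_A|\ge|\tI_M|+|\J_A|-|\I^{A_{1,k}^q}|$ directly, while the paper first computes $|\tI_M\cap\tI_A|\ge(1-65\epsilon)qn$ and then subtracts $|\tI_A\setminus\J_A|$ --- is equivalent and immaterial.
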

\begin{proof}
Let us consider any $k \in \{2,3\}$. Assuming the good event $\mathcal{G}_2$, $|\I^{A_{1,k}^q}| \le (1+\epsilon)q n + 2 \sqrt{n} \log n$. Further, by definition (see Section~\ref{sec:unique-edit}), $\tI_{M_{1,\m,k}}^{A_{1,k}^q},\tI_{A_{1,\bar{x},k}^p}^{A_{1,k}^q} \subseteq \I^{A_{1,k}^q}$. Then, assuming the good event $\mathcal{G}_3$, it is easy to observe that, $\Big|\tI_{M_{1,\m,k}}^{A_{1,k}^q} \bigcap \tI_{A_{1,\bar{x},k}^p}^{A_{1,k}^q}\Big| \ge (1-65\epsilon)q n$.

Next observe, $ \Big|\tI_{A_{1,\bar{x},k}^p}^{A_{1,k}^q} \setminus \J_{A_{1,\bar{x},k}^p}^{A_{1,k}^q}\Big| \le |\I^{A_{\bar{x},k}^p}|$ and thus at most $(1+\epsilon)p n + 2 \sqrt{n} \log n$ (assuming the good event $\mathcal{G}_1$). Hence,
\begin{align*}
\Big|\tI_{M_{1,\m,k}}^{A_{1,k}^q} \bigcap \J_{A_{1,\bar{x},k}^p}^{A_{1,k}^q}\Big| & \ge (1-65\epsilon)q n - (1+\epsilon)p n - 2 \sqrt{n} \log n\\
& \ge (1-132\epsilon)p n &\text{recall, }q = 2 p - \Theta(p^2).
\end{align*}
\end{proof}

Now we complete the proof of Lemma~\ref{lem:large-similarity}.
\begin{proof}[Proof of Lemma~\ref{lem:large-similarity}]
Let $R\eqdef \J_{A_{1,\bar{x},2}^p}^{A_{1,2}^q} \bigcap \J_{A_{1,\bar{x},3}^p}^{A_{1,3}^q}$ and $Q\eqdef \tI_{M_{1,\m,2}}^{A_{1,2}^q}\bigcap \tI_{M_{1,\m,3}}^{A_{1,3}^q}$.

\begin{align*}
\Big|\tI_{M_{1,\m,2}}^{A_{1,2}^q}\bigcap R \Big| &= \Big|\tI_{M_{1,\m,2}}^{A_{1,2}^q} \bigcap \J_{A_{1,\bar{x},2}^p}^{A_{1,2}^q}\Big| - \Big| \tI_{M_{1,\m,2}}^{A_{1,2}^q} \bigcap \Big( \J_{A_{1,\bar{x},2}^p}^{A_{1,2}^q} \setminus \J_{A_{1,\bar{x},3}}^{A_{1,3}^q} \Big) \Big|\\
& \ge \Big|\tI_{M_{1,\m,2}}^{A_{1,2}^q} \bigcap \J_{A_{1,\bar{x},2}^p}^{A_{1,2}^q}\Big| - \Big|\J_{A_{1,\bar{x},2}^p}^{A_{1,2}^q} \setminus \J_{A_{1,\bar{x},3}}^{A_{1,3}^q} \Big|\\
& \ge (1-132\epsilon)p n -2 \epsilon \Big|\J_{A_{1,\bar{x},2}^p}^{A_{1,2}^q}\Big| \quad \text{by Claim~\ref{clm:large-IJ} and assuming }\mathcal{G}_4\\
&\ge (1-135\epsilon) p n
\end{align*}
where the last inequality uses the fact that $\J_{A_{1,\bar{x},2}^p}^{A_{1,2}^q} \subseteq \I^{A_{1,\bar{x}}^p}$ and assuming the good event $\mathcal{G}_1$, $|\I^{A_{1,\bar{x}}^p}| \le (1+\epsilon)p n + 2 \sqrt{n} \log n$.

Similarly, $\Big|\tI_{M_{1,\m,3}}^{A_{1,3}^q}\cap R \Big| \ge (1-135\epsilon) p n$. Since $\J_{A_{1,\bar{x},2}^p}^{A_{1,2}^q},\J_{A_{1,\bar{x},3}^p}^{A_{1,3}^q} \subseteq \I^{A_{1,\bar{x}}^p}$, $|R|\le (1+\epsilon) p n + 2 \sqrt{n} \log n$. Hence, we can deduce that, 
\begin{equation}
\label{eq:Q-R-bound}
 |Q\cap R| \ge (1-272 \epsilon)p n   
\end{equation}

Next, we claim that $\Big| (Q \cap R) \setminus \J_{M_{1,\m,2}}^{A_{1,2}^q} \Big| \le 4 \epsilon p n$. To prove the claim, observe, for each $i \in R$, 
$$\bar{x}[A_{1,\bar{x}}^p(i-\epsilon/q), A_{1,\bar{x}}^p(i+\epsilon/q)] = x_2[A_{1,2}^q(i-\epsilon/q), A_{1,2}^q(i+\epsilon/q)] = x_3[A_{1,3}^q(i-\epsilon/q), A_{1,3}^q(i+\epsilon/q)].$$

Then consider any $i \in (Q \cap R) \setminus \J_{M_{1,\m,2}}^{A_{1,2}^q}$. 
Observe, the followings hold directly from the definition.
\begin{enumerate}
\item $x_1[i-\epsilon/q , i+\epsilon/q] = \med[M_{1,\m}(i-\epsilon/q) , M_{1,\m}(i+\epsilon/q)]$.
\item For any $k \in \{2,3\}$, the cost of the (sub-)alignment $M_{\m,k}$ restricted to the mapping from the substring $\med[M_{1,\m}(i-\epsilon/q) , M_{1,\m}(i+\epsilon/q)]$ to $x_k[A_{1,2}^q(i-\epsilon/q), A_{1,2}^q(i+\epsilon/q)]$ is exactly one.
\end{enumerate}
Thus we can modify the string $\med$ as follows: Take each $i \in (Q \cap R) \setminus \J_{M_{1,\m,2}}^{A_{1,2}^q}$, and replace the block $\med[M_{1,\m}(i-\epsilon/q) , M_{1,\m}(i+\epsilon/q)]$ by $\bar{x}[A_{1,\bar{x}}^p(i-\epsilon/q), A_{1,\bar{x}}^p(i+\epsilon/q)]$.
 We call the resulting string $\med'$. Note, for any $k\in \{2,3\}$, $\ED(\med',x_k) \le \ED(\med,x_k) - \Big| (Q \cap R) \setminus \J_{M_{1,\m,2}}^{A_{1,2}^q} \Big|$. On the other hand, $\ED(\med',x_1) \le \ED(\med,x_1) + \Big| (Q \cap R) \setminus \J_{M_{1,\m,2}}^{A_{1,2}^q} \Big|$. Hence, we get that
 $$\obj(\med') \le \obj(\med) - \Big| (Q \cap R) \setminus \J_{M_{1,\m,2}}^{A_{1,2}^q} \Big|.$$
 Now, if $\Big| (Q \cap R) \setminus \J_{M_{1,\m,2}}^{A_{1,2}^q} \Big| > 4\epsilon p n$, then 
 \begin{align*}
 \obj(\med') &\le \obj(\med) - 4\epsilon p n\\
 & \le (1+\epsilon)\opt - 4 \epsilon p  n &\text{since }\med \text{ is an }(1+\epsilon)\text{-approximate median}\\
 & < (1+\epsilon)\opt - \epsilon \opt &\text{by Claim~\ref{clm:opt-value}}\\
 & = \opt
 \end{align*}
which leads us to a contradiction since $\opt \le \obj(\med')$. So we conclude that $\Big| (Q \cap R) \setminus \J_{M_{1,\m,2}}^{A_{1,2}^q} \Big| \le 4 \epsilon p n$. So we can deduce the following
\begin{align*}
\Big|\J_{M_{1,\m,2}}^{A_{1,2}^q} \cap \J_{A_{1,\bar{x},2}^p}^{A_{1,2}^q}\Big| & \ge \Big| J_{M_{1,\m,2}}^{A_{1,2}^q} \bigcap (Q \cap R)\Big|\\
&= |Q\cap R| - \Big| (Q \cap R) \setminus \J_{M_{1,\m,2}}^{A_{1,2}^q} \Big|\\
&\ge (1-276\epsilon) p n&\text{by~\eqref{eq:Q-R-bound}}.
\end{align*}
Similarly, we can also show that $\Big|\J_{M_{1,\m,3}}^{A_{1,3}^q} \cap \J_{A_{1,\bar{x},3}^p}^{A_{1,3}^q}\Big| \ge (1-276\epsilon)p n$.
\end{proof}

\begin{proof}[Proof of Claim~\ref{clm:good-event}]
First, we have already observed that $x_2,x_3 \sim G_q(x_1)$ (when viewed them as infinite length strings). Further, by Proposition~\ref{prop:dist-symmetry} Proposition~\ref{prop:dist-transitivity}, we can say that $x_3 \sim G_q(x_2)$. We use the notation $A_{x_2,x_3}^q$ to denote the alignment produced by the random process $G_q(x_2)$. Then the followings are immediate from the Chernoff-Hoeffding bound. For any $i \in [n]$,
\begin{itemize}
\item $\Pr[|A_{1,2}^q (i) - i| > \sqrt{n} \log n] \le 2e^{-2\log^2 n}$,
\item $\Pr[|A_{1,3}^q (i) - i| > \sqrt{n} \log n] \le 2e^{-2\log^2 n}$,
\item $\Pr[|A_{x_2,x_3}^q (i) - i| > \sqrt{n} \log n] \le 2e^{-2\log^2 n}$,
\item $\Pr[|A_{1,\bar{x}}^p (i) - i| > \sqrt{n} \log n] \le 2e^{-2\log^2 n}$,
\item $\Pr[|A_{\bar{x},3}^p (i) - i| > \sqrt{n} \log n] \le 2e^{-2\log^2 n}$,
\item $\Pr[|A_{\bar{x},2}^p (i) - i| > \sqrt{n} \log n] \le 2e^{-2\log^2 n}$.
\end{itemize}
So, by a union bound, we get that for all $i \in [n]$, all the above six conditions hold with probability at least $1 - e^{-1.5\log^2 n}$ (for large enough $n$). Below while considering the $n$-length prefixes, we assume that the above event, denoted by $\mathcal{G}_5$, holds.

Recall, $\bar{x}=G_p(x_1)$  (when viewed them as infinite length strings). Then by Lemma~\ref{lem:number-edit-operation} together with the assumption of $\mathcal{G}_5$, for any $\epsilon > 0$, with probability at least $1-2e^{-\epsilon^2pn/3}$, $\ED(x_,\bar{x}) \le (1+\epsilon)p n + 2\sqrt{n} \log n$. Further, by Lemma~\ref{lem:editbound-small-alphabet}, we get that with probability at least $1-2e^{-\epsilon^2 p^2 n/2}$, $\ED(x_1,\bar{x}) \ge (1-6\epsilon)p n - 2\sqrt{n} \log n$. Similar bounds holds for the pairs $(\bar{x},x_2), (\bar{x},x_3)$ and $(x_i,x_j)$ for any $i\ne j \in [3]$.

Next, we assume $\mathcal{G}_1$, $\mathcal{G}_2$ and $\mathcal{G}_5$ occur. Then given that, we want to claim that $\mathcal{G}_3$ occurs with probability at least $1-12e^{-\epsilon^2 q^2 n/5}$. Note, by our choice of $\epsilon$, it holds that $\epsilon \ge 54 q\log_2 \frac{1}{q}$. Note, by triangular inequality, $\cost(A_{1,\bar{x},2}^p) \le \cost(A_{1,\bar{x}}^p) + \cost(A_{\bar{x},2}^p)$. Then by Lemma~\ref{lem:number-edit-operation}, $\cost(A_{1,\bar{x},2}^p) \le (1+\epsilon)2p n $ with probability at least $1-4 e^{-\epsilon^2p n/4}$. Note, $q = 2p - \Theta(p^2)$. Then by Lemma~\ref{lem:unique-alignment}, $\Big|\tI_{A_{1,\bar{x},2}^p}^{A_{1,2}^q}\Big| \ge (1-25\epsilon) q n - 2 \sqrt{n} \log n$ with probability at least $1-4e^{-\epsilon^2 q^2 n/2}$. A similar argument holds for $\Big|\tI_{A_{1,\bar{x},3}^p}^{A_{1,3}^q}\Big|$. 

Next, we argue about the set $\tI_{M_{1,\m,2}}^{A_{1,2}^q}$. By Claim~\ref{clm:med-str-dist}, $\ED(\med,x_i) \le (1+22\epsilon)p n$ for each $i\in [3]$. Then, by triangular inequality, $\cost(M_{1,\m,2}) \le 2(1+22\epsilon)p n \le (1+23\epsilon)q n$. Then by Lemma~\ref{lem:editbound-small-alphabet} (and assuming $\mathcal{G}_5$ occurs), with probability at least $1-2e^{-(23\epsilon)^2 q^2 n/2}$, $\Big|\tI_{M_{1,\m,2}}^{A_{1,2}^q}\Big| \ge (1-37\epsilon) q n - 2 \sqrt{n} \log n$. A similar argument holds for $\Big|\tI_{M_{1,\m,3}}^{A_{1,3}^q}\Big|$.

Next, we assume $\mathcal{G}_1$, $\mathcal{G}_2$, $\mathcal{G}_3$ and $\mathcal{G}_5$ occur. For $\mathcal{G}_4$, observe, $\mathbb{E}\Big[ |\J_{A_{1,\bar{x},i}^p}^{A_{1,i}^q} \setminus \J_{A_{1,\bar{x},j}}^{A_{1,j}^q} \Big| \Big] = \epsilon \Big|\J_{A_{1,\bar{x},i}^p}^{A_{1,i}^q}\Big|$. Then by a standard application of Chernoff bound, we get that the probability that $\Big|\J_{A_{1,\bar{x},i}^p}^{A_{1,i}^q} \setminus \J_{A_{1,\bar{x},j}}^{A_{1,j}^q} \Big| > 2 \epsilon \Big|\J_{A_{1,\bar{x},i}^p}^{A_{1,i}^q}\Big|$ is at most $2^{-\epsilon \Big|\J_{A_{1,\bar{x},i}^p}^{A_{1,i}^q}\Big|/2} \le 2^{-\frac{\epsilon p n}{4}}$ (by Claim~\ref{clm:large-IJ}).

 So finally we get that $\mathcal{G}$ holds (i.e., all the $\mathcal{G}_i$, for $i \in [4]$ hold) with probability at least $1-e^{-\log^2 n}$.
\end{proof}

\paragraph{For more than three traces.}
So far, we have shown that for any set $\{s_1,s_2,s_3\}$ of three traces of $s$, its any $(1+\epsilon)$-approximate median is close to $s$. Below we argue that a similar result for any arbitrary number (less than some $\mathrm{poly}(n)$) of traces directly follows.

\begin{corollary}
\label{cor:m-traces-median}
For a large enough $n \in \N$ and a noise parameter $p \in (0,0.001)$, 
let the string $s\in\{0,1\}^n$ be chosen uniformly at random, 
and let $s_1,\cdots,s_m$ be $m=n^{O(1)}$ traces generated by $G_p(s)$. 
If $\med$ is a $(1+\epsilon)$-approximate median of $S=\{s_1,\cdots,s_m\}$
for any $\epsilon \in [110 p \log (1/p),1/6]$, 
then $\Pr [\ED(s,\med) \le O(\epsilon)\cdot \frac{\opt(S)}{m} ] \geq 1-n^{-1}$.
\end{corollary}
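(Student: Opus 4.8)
The plan is to reduce to the three-trace case (Theorem~\ref{thm:median-trace-close}) by averaging the median objective over all triples of traces. As a preliminary step I would record the ``typical'' magnitudes involved, exactly as in the proofs of Claim~\ref{clm:opt-value} and Claim~\ref{clm:good-event}: passing to the two-stage model and applying Proposition~\ref{prop:dist-transitivity} (so that each $s_j$ is distributed as $G_q(s_i)$ with $q=2p-\Theta(p^2)$), then combining Lemma~\ref{lem:number-edit-operation} and Lemma~\ref{lem:editbound-small-alphabet} with a union bound over the $\binom{m}{2}=n^{O(1)}$ pairs. Since $\epsilon\ge 110p\log(1/p)$ and $n$ is large, this yields that with probability at least $1-n^{-1}$ we have simultaneously $\ED(s_i,s)\le(1+O(\epsilon))pn$ for all $i$ and $(1-O(\epsilon))qn\le\ED(s_i,s_j)\le(1+O(\epsilon))qn$ for all $i\neq j$ (the lower-order $\sqrt n\log n$ terms are absorbed because $m\ge 3$). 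The triangle-inequality computation of Claim~\ref{clm:opt-value} then gives $m(1-O(\epsilon))pn\le\opt(S)\le m(1+O(\epsilon))pn$, and likewise $3(1-O(\epsilon))pn\le\opt(T)\le 3(1+O(\epsilon))pn$ for \emph{every} triple $T\subseteq S$ with $|T|=3$.

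Next I would average over triples. Each trace lies in exactly $\binom{m-1}{2}$ triples, so
\[
  \sum_{T\subseteq S,\,|T|=3}\obj(T,\med)
  =\binom{m-1}{2}\obj(S,\med)
  \le\binom{m-1}{2}(1+\epsilon)\opt(S),
\]
and hence the average of $\obj(T,\med)$ over the $\binom{m}{3}$ triples is at most $\frac{3(1+\epsilon)}{m}\opt(S)\le 3(1+O(\epsilon))pn$. Therefore some triple $T^{*}=\{s_a,s_b,s_c\}$ satisfies $\obj(T^{*},\med)\le 3(1+O(\epsilon))pn\le(1+O(\epsilon))\opt(T^{*})$, i.e.\ $\med$ is a $(1+\epsilon')$-approximate median of $T^{*}$ for some $\epsilon'=O(\epsilon)$. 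This $\epsilon'$ still lies in the admissible range $[110p\log(1/p),1/6]$: the lower bound is immediate, and for the upper bound it suffices to assume $\epsilon$ is bounded a little away from $1/6$ (constants are not optimized here).

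Finally I would invoke Theorem~\ref{thm:median-trace-close} on $T^{*}$ to conclude
\[
  \ED(s,\med)\le O(\epsilon')\cdot\opt(T^{*})\le O(\epsilon)\cdot 3(1+O(\epsilon))pn\le O(\epsilon)\cdot\frac{\opt(S)}{m},
\]
using $\opt(T^{*})\le 3(1+O(\epsilon))pn$ and $\opt(S)\ge m(1-O(\epsilon))pn$. The one genuine subtlety---and the main obstacle---is that $T^{*}$ is chosen only after $\med$, which may be adversarial, so Theorem~\ref{thm:median-trace-close} cannot be applied to a single fixed triple. To handle this I would use the stronger failure probability $e^{-\log^2 n}$ from Theorem~\ref{thm:median-trace-close-alternate} and take a union bound of its conclusion over all $\binom{m}{3}=n^{O(1)}$ triples; this costs only $n^{O(1)}e^{-\log^2 n}\le n^{-1}$, so with probability $\ge 1-n^{-1}$ the three-trace conclusion holds simultaneously for every triple (in particular for $T^{*}$ and $\med$). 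Intersecting with the preliminary good event and adjusting the polynomial in the error probability gives the claimed bound $\Pr[\ED(s,\med)\le O(\epsilon)\opt(S)/m]\ge 1-n^{-1}$; the remaining work is routine bookkeeping of $(1\pm O(\epsilon))$ factors and lower-order terms.
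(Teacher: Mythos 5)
Your proposal matches the paper's argument in all essential respects: both establish the typical magnitudes of pairwise edit distances (and hence of $\opt(S)$ and $\opt(T)$ for every triple $T$), both use an averaging argument on $\obj(\cdot,\med)$ to extract a triple $T^{*}$ for which $\med$ is a $(1+O(\epsilon))$-approximate median, and both then invoke Theorem~\ref{thm:median-trace-close-alternate} with a union bound over triples to handle the fact that $T^{*}$ depends on $\med$. The only cosmetic difference is that the paper averages over the $m-2$ consecutive triples $\{s_i,s_{i+1},s_{i+2}\}$ while you average over all $\binom{m}{3}$ triples; both are $n^{O(1)}$ and fit comfortably inside the $e^{-\log^2 n}$ union-bound budget, so neither variant buys anything over the other.
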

\begin{proof}[Proof (Sketch).] 
By an argument similar to that used in Claim~\ref{clm:opt-value}, we can show that with high probability $\opt(S) \leq \obj(S,s) \le (1+\epsilon) pnm$. Let $z$ be an $(1+\epsilon)$-approximate median of $S$,
then $\obj(S,z) \le (1+\epsilon)^2 pnm$.
By averaging, there exists a subset $S'=\{s_i,s_{i+1},s_{i+2}\} \subseteq S$ such that $\obj(S',z) \le 3(1+\epsilon)^2 pn$, hence $z$ is a $(1+O(\epsilon))$-approximate median of $S'$. Thus by Theorem~\ref{thm:median-trace-close-alternate} together with a union bound, $z$ is at distance at most $O(\epsilon) \opt(S')$ from $s$. 
We can also show that with high probability $\opt(S') \ge (1 - 7\epsilon)pnm$,
again by an argument similar to Claim~\ref{clm:opt-value}. 
We thus conclude that with high probability
$ \ED(s,z) \le O(\epsilon) \frac{\opt(S)}{m}$.
\end{proof}

\section{Linear-time Approximate Trace Reconstruction}
\label{sec:trace-median-algo}
In this section, we describe a linear-time algorithm that reconstructs the unknown string using only three traces, up to some small edit error. In particular, we prove Theorem~\ref{thm:main}. Before describing our linear-time algorithm, first note that we can compute an (exact) median of three traces using a standard dynamic programming algorithm~\cite{Sankoff75, kruskal1983} in cubic time. Then by Theorem~\ref{thm:median-trace-close-alternate}, that median string will be close (in edit distance) to the unknown string. More specifically, the edit distance between the computed median string and the unknown string will be at most $O(\epsilon p n)$ (follows from Theorem~\ref{thm:median-trace-close-alternate} and Claim~\ref{clm:opt-value}) with high probability. In this section, we design a more sophisticated method to compute an approximation of the unknown string. For that purpose, we first divide each trace into "well-separated" blocks of size $\log^2 n$ each. Then we run the dynamic-programming-based median algorithm~\cite{Sankoff75, kruskal1983} on these small blocks. Thus we spend only $\mathrm{poly} \log n$ time per block, and hence in total $\tilde{O}(n)$ time. Since we consider "well-separated" blocks, they are independent. Thus we apply Theorem~\ref{thm:median-trace-close-alternate} for each of these blocks (instead of the whole string). Using standard Chernoff-Hoeffding bound, we get that most of these block medians are close to their corresponding block of the unknown string. Hence, by concatenating these block medians, we get back the whole unknown string up to some small edit error. Formally, our result is the following.

\begin{theorem}
\label{thm:linear-trace}
There is a small non-negative constant $c_0 < 1$ and a deterministic algorithm that, 
given as input a noise parameter $p \in (0,c_0]$, 
an accuracy parameter $\epsilon \in [110 p \log (1/p), 1/6]$, 
and three traces $s_1,s_2,s_3 \sim G_p(s)$ for a uniformly random (but unknown) string $s\in \{0,1\}^n$, 
outputs in time $\tilde{O}(n)$ a string $z$ that satisfies 
$\Pr [\ED(s,z) \le  5270 \epsilon p n ] \geq 1-n^{-1}$.
\end{theorem}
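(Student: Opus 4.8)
\medskip
\noindent\textbf{Plan.} The plan is to implement the block-based scheme sketched at the start of this section and then reduce the analysis, block by block, to Theorem~\ref{thm:median-trace-close-alternate}. First I would fix a block length $L=\Theta(\log^2 n)$ and partition the first trace $s_1$ into consecutive blocks $D_1,D_2,\dots$ of length $L$. Inside each $D_j$, designate its central $\Theta(\log^2 n)$ symbols as the \emph{anchor} $a_j$, leaving a \emph{buffer} of $\Theta(\epsilon p\log^2 n)$ symbols on each side; this buffer length is $\omega(\log n)$ (since $\epsilon,p$ are constants with respect to $n$), so consecutive anchors are "well separated", yet the buffers together make up only an $O(\epsilon p)$ fraction of $s_1$. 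Processing the anchors from left to right and maintaining a running estimate of the corresponding location in $s_2$ and in $s_3$, I would use an $O(km)$-time approximate pattern matching algorithm under edit distance (e.g.\ \cite{LV89,GP90}) to find, inside a window of $O(\log^2 n)$ symbols of $s_2$ (resp.\ $s_3$) around that estimate, a substring minimizing the edit distance to $a_j$ --- its \emph{best match}. The best matches split $s_2$ and $s_3$ into blocks $D_j^{(2)},D_j^{(3)}$ in bijection with the $D_j$'s (e.g.\ cut at the midpoints between consecutive best matches). Finally, for each $j$ compute the exact median $z_j$ of $D_j,D_j^{(2)},D_j^{(3)}$ with the cubic-time dynamic program of \cite{Sankoff75,kruskal1983}, and output $z\eqdef z_1\odot z_2\odot\cdots$. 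Each of the $O(n/\log^2 n)$ iterations touches $\polylog n$ symbols and runs in $\polylog n$ time, so the algorithm is deterministic and runs in time $\tilde O(n)$.

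For correctness, I would condition on the planted alignments from $s$ to $s_1,s_2,s_3$ produced by $G_p$. The first and main structural claim is that, because $s$ is uniformly random and the anchors are well separated, with probability $1-n^{-\Omega(1)}$ the best match of every anchor $a_j$ inside $s_2$ (and $s_3$) coincides with its \emph{true match} --- the substring descended from the part of $s$ that generated $a_j$ --- except within $O(\log n)$ symbols of each endpoint. This is the pattern-matching analogue of the robustness results of Section~\ref{sec:unique-edit}: any occurrence of $a_j$ in the window that is shifted, or aligned differently, from the true match would, exactly as in the proofs of Lemma~\ref{lem:editbound-small-alphabet} and Lemma~\ref{lem:unique-alignment}, force $\Omega(\log n)$ mutually independent uniform symbols of $s$ to agree, an event of probability $2^{-\Omega(\log n)}=n^{-\Omega(1)}$; since there are only $O(n/\log^2 n)$ anchors and $\polylog n$ candidate offsets per anchor, a union bound goes through. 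The same bound, together with a Chernoff estimate on the channel drift inside one block (as in the proof of Claim~\ref{clm:good-event}), shows that the running estimate never drifts away from the true match --- the true match of $a_{j+1}$ lies within $O(\sqrt L\log n)=o(L)$ of the end of $a_j$'s true match, hence inside the search window. Consequently, after discarding the $O(\log n)$-symbol boundary mismatches, each triple $D_j,D_j^{(2)},D_j^{(3)}$ restricted to its anchor part consists of three independent $G_p$-traces of a common substring $\sigma_j$ of $s$ of length $\Theta(\log^2 n)$; the triples for distinct $j$ are independent; and $s$ itself is the concatenation of the $\sigma_j$'s interleaved with the buffers.

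To finish, I would apply Theorem~\ref{thm:median-trace-close-alternate} to each triple with string length $\Theta(\log^2 n)$ (its hypotheses on $\epsilon$ are exactly the ones assumed here): call block $j$ \emph{good} if $\ED(\sigma_j,z_j)\le 195\epsilon\cdot\opt_j$, where $\opt_j$ is the optimal median objective of the $j$-th triple, which by Claim~\ref{clm:opt-value} is $O(p\log^2 n)$; each block is good with probability $1-o(1)$, and by independence and a Chernoff bound all but an $o(1)$-fraction of blocks are good with probability $1-e^{-\polylog n}$. Bounding $\ED(s,z)$ by stitching together per-block alignments, the good blocks contribute $\sum_j 195\epsilon\cdot\opt_j\le (1+o(1))\cdot 585\,\epsilon p n$; the bad blocks, being an $o(1)$-fraction, contribute $o(n)\le\epsilon p n$ for $n$ large (given $p$); the $O(\log n)$ boundary mismatches and the best-match/true-match discrepancies contribute $O(n/\log n)=o(\epsilon p n)$; and the buffer symbols, which $z$ simply omits, contribute the total buffer length $O(\epsilon p n)$. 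Summing these with appropriate (unoptimized) constants, and taking $c_0$ small enough that the lower-order terms are absorbed, gives $\ED(s,z)\le 5270\,\epsilon p n$, while the overall failure probability is a union bound of the $n^{-\Omega(1)}$, $e^{-\polylog n}$, and $O(n)$ channel-drift terms, which is at most $n^{-1}$.

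The main obstacle is the first structural claim --- that the edit-distance-minimizing occurrence of an anchor inside the relevant window of $s_2$ is its true descendant up to an $O(\log n)$ boundary. It is a genuine pattern-matching strengthening of the global-alignment robustness of Section~\ref{sec:unique-edit}, and its proof must cope with the dependence created by overlapping search windows reusing the same symbols of $s_2$, with the three possible length changes of a near-match, and with the fact that an anchor's own symbols are random and correlated with $s_2$ through the channel --- precisely the complications addressed by the carefully crafted union bounds in the proof of Lemma~\ref{lem:unique-alignment}, now carried out over candidate offsets within each window.
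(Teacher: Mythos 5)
Your overall plan -- partition $s_1$ into blocks with anchors, locate best matches of the anchors in $s_2,s_3$ by approximate pattern matching, partition $s_2,s_3$ accordingly, run the cubic median DP block-by-block, and reduce each block to Theorem~\ref{thm:median-trace-close-alternate} -- is exactly the paper's algorithm, and you have correctly identified what needs to be proved. However there are two real gaps between what you describe and a complete proof.

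First, the structural claim you flag as "the main obstacle" is indeed the crux, and you do not actually prove it; you propose to adapt the union-bound machinery of Lemmas~\ref{lem:editbound-small-alphabet} and~\ref{lem:unique-alignment}, which is a considerably heavier hammer than needed. The paper's Claim~\ref{clm:near-perfect-align} is instead based on the simple fact (Proposition~\ref{prop:random-string-edit}, from~\cite{BEKMRRS03}) that two independently uniform strings of length $m$ have edit distance $\geq m/10$ with probability $1-2^{-\Omega(m)}$, combined with an exchange argument: split the anchor into $\Theta(\log n /p)$-length sub-blocks; if the best-match alignment shifts a suffix of these sub-blocks off their true descendants, each misaligned sub-block is matched to an essentially independent uniform substring and costs $\Omega(\log n/p)$ by Proposition~\ref{prop:random-string-edit}, whereas deleting the misaligned tail and following the planted alignment costs only $O(\log n)$; so the best match cannot wander more than $O(\log n/p)$ from the true match. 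No adaptation of the Section~\ref{sec:unique-edit} lemmas, with their correlated overlapping windows and three length cases, is required. As written, your proof leaves the central claim unestablished.

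Second, your constant-tracking has a genuine error at the step where you apply Theorem~\ref{thm:median-trace-close-alternate} per block. You compute $z_j$ as the \emph{exact} median of $D_j,D_j^{(2)},D_j^{(3)}$, which are the algorithm's cut blocks (containing buffer and boundary slop), not of the clean trace triple $T^i=\{y_1^i,t_2^i,t_3^i\}$ of the true substring $\sigma_j$. You then treat $z_j$ as if Theorem~\ref{thm:median-trace-close-alternate} applied with approximation factor $(1+\epsilon)$ to give $\ED(\sigma_j,z_j)\le 195\epsilon\cdot\opt_j$. But because $\obj(T^i,z^i)$ exceeds $\obj(S^i,z^i)$ by the boundary slack, while $\opt(T^i)$ is only $\approx 3(1-7\epsilon)p\log^2 n$, the paper shows $z^i$ is a $(1+9\epsilon)$-approximate median of $T^i$, so the correct per-block bound is $\ED(y^i,z^i)\le 195\cdot 9\epsilon\cdot\opt(T^i)=1755\epsilon\cdot\opt(T^i)$. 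Your figure of $585\epsilon pn$ for the good-block contribution should therefore be roughly $5265\epsilon pn$, which is in fact where the theorem's constant $5270$ comes from. Your remaining terms (bad blocks, boundary mismatches, buffers) are accounted for in the right spirit, though the buffer size $\Theta(\epsilon p\log^2 n)$ you use differs from the paper's $\Theta(p^{-1}\log^{3/2}n)$ choice, which makes the buffer contribution $o(\epsilon pn)$ rather than $\Theta(\epsilon pn)$; either is workable.
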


Before describing the algorithm we would like to introduce some notation that we use in this section. 
For a string $x \in \Sigma^n$, let $y=x[i,i+1,\cdots,j]$ be a substring of it; 
then we denote by $\texttt{start}(y)$ the index $i$ and by $\texttt{end}(y)$ the index $j$. 

\paragraph{Description of the algorithm. }
The algorithm works as follows. First, partition $s_1$ into $r=\frac{|s_1|}{\ell}$ disjoint blocks $s_1^{1},s_1^{2},\cdots,s_1^{r}$ each of length $\ell=\log^2 n + \frac{240}{p} \log^{3/2} n$. For each $s_1^i$, let us call the middle $\log^2 n$-size sub-block, denoted by $y_1^i$, an \emph{anchor}. Next, for each $i \in [r]$ and $j \in \{2,3\}$, find the \emph{best match} of the anchor $y_1^i$ in the string $s_j$ i.e., for each $y_1^i$ find a substring (breaking ties arbitrarily) in $s_j$ that has the minimum edit distance with $y_1^i$. Let us denote the matched substrings in $s_2$ and $s_3$ by $y_2^i$ and $y_3^i$ respectively. Then for each $j \in \{2,3\}$, we divide $s_j$ into blocks $s_j^{1},\cdots,s_j^{r}$ (some of the $s_j^i$'s could be empty) by treating $y_j^{1},\cdots,y_j^{r}$ as anchors. More specifically,
\begin{itemize}
\item Set the start index of $s_j^1$ to be 1. For any other non-empty block $s_j^i$, set its start index to be $\lfloor(\texttt{end}(y_j^{i-1}) + \texttt{start}(y_j^i))/2 \rfloor$.
\item For the last non-empty block in $s_j$, set its end index to be $|s_j|$. For any other non-empty block $s_j^i$, set its end index to be $\lfloor(\texttt{end}(y_j^{i}) + \texttt{start}(y_j^{i+1}))/2 \rfloor - 1$.
\end{itemize}
Next, for each $i \in [r]$, compute a median of $\{s_1^i,s_2^i,s_3^i\}$, and let it be denoted by $z^i$. Finally, output $z = z^1 \odot \cdots \odot z^r$ (i.e., the concatenation of all the $z^i$'s).

\paragraph{Correctness proof. }
Before proceeding with the correctness proof, let us state a known fact from~\cite{BEKMRRS03} about the edit distance between two random strings.
\begin{proposition}[\cite{BEKMRRS03}]
\label{prop:random-string-edit}
For any two strings $x\in \Sigma^m$ and $y \in \Sigma^n$ drawn uniformly at random, $\Pr[\ED(x,y) \ge \frac{\max\{m,n\}}{10}] \ge 1-2^{-\max\{m,n\}/10}$.
\end{proposition}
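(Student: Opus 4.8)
The plan is to reduce the statement to a tail bound on the longest common subsequence of $x$ and $y$, and then to prove that tail bound by a union bound over subsequence ``witnesses''. Write $N\eqdef\max\{m,n\}$ and assume without loss of generality $m\le n=N$. The first step is the elementary inequality $\ED(x,y)\ge N-\lcs(x,y)$: in any edit script of cost $d$ transforming $x$ into $y$, at most $d$ characters of $x$ are ever deleted or substituted, so the remaining $\ge m-d$ characters of $x$ appear in order inside $y$ and thus form a common subsequence; the symmetric argument from the side of $y$ gives $n-d\le\lcs(x,y)$, hence $d\ge N-\lcs(x,y)$. Consequently it suffices to show that, except with probability at most $2^{-N/10}$, the strings $x$ and $y$ have no common subsequence of length $k\eqdef N-\lceil N/10\rceil$, since then $\ED(x,y)\ge N-(k-1)\ge N/10$.

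The core estimate is the union bound. A common subsequence of length $k$ is encoded by a pair of increasing index tuples $i_1<\dots<i_k$ in $[m]$ and $j_1<\dots<j_k$ in $[n]$ with $x[i_a]=y[j_a]$ for all $a$. For a fixed such pair, the coordinates $i_1,\dots,i_k$ are pairwise distinct, the coordinates $j_1,\dots,j_k$ are pairwise distinct, and $x,y$ are independent and uniform; hence the $k$ equalities $x[i_a]=y[j_a]$ are mutually independent, each of probability $1/|\Sigma|\le 1/2$, so this fixed pair is ``realized'' with probability at most $2^{-k}$. Summing over the at most $\binom{m}{k}\binom{n}{k}\le\binom{N}{k}^2$ admissible pairs of tuples gives
\[
  \Pr[\lcs(x,y)\ge k]\ \le\ \binom{N}{k}^2\, 2^{-k}.
\]

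It then remains to verify that this quantity is at most $2^{-N/10}$. Writing $N-k=\gamma N$ and using $\binom{N}{k}=\binom{N}{\gamma N}\le 2^{NH(\gamma)}$ for the binary entropy function $H$, the right-hand side is $2^{(2H(\gamma)-(1-\gamma))N}$, whose exponent is negative (and bounded away from $0$) for $\gamma$ below a fixed threshold; combined with $\ED(x,y)\ge N-k=\gamma N$ this yields a linear-in-$N$ lower bound on the edit distance that holds with the claimed doubly-exponential confidence (over a larger alphabet the factor $2^{-k}$ improves to $|\Sigma|^{-k}$, which lets $\gamma$, hence the edit-distance lower bound, be taken close to $N$). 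The step I expect to be the main obstacle is precisely this final numerical calibration: pushing the common-subsequence cutoff all the way down to $(1-\tfrac1{10})N$ while keeping the failure probability at $2^{-N/10}$ is essentially tight for a two-letter alphabet, so one must estimate the binomial coefficients carefully --- or, to get more room, first bound $\EX[\lcs(x,y)]$ by the truncated sum $\sum_{k}\Pr[\lcs(x,y)\ge k]$ and then apply a bounded-difference (Azuma/McDiarmid) inequality to concentrate $\lcs(x,y)$ around its mean. For the use of this proposition in Section~\ref{sec:trace-median-algo} this tightness is immaterial: there $N=\Theta(\log^2 n)$ and any fixed positive constants in place of the two occurrences of $\tfrac1{10}$ already give failure probability $n^{-\omega(1)}$.
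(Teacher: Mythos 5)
First, note that the paper does not prove this proposition at all: it is imported verbatim from \cite{BEKMRRS03} and used as a black box (in Claim~\ref{clm:inter-block-dist}), so there is no in-paper proof to compare against; the only question is whether your argument stands on its own. Its skeleton is fine: $\ED(x,y)\ge \max\{m,n\}-\lcs(x,y)$ is correct even with substitutions, and for a fixed pair of index tuples the $k$ equalities are indeed mutually independent, giving $\Pr[\lcs(x,y)\ge k]\le\binom{N}{k}^2|\Sigma|^{-k}$. The genuine gap is the final calibration in exactly the case the paper needs, $\Sigma=\{0,1\}$: with $k=\lceil 9N/10\rceil$ the bound is $\binom{N}{k}^2 2^{-k}=2^{(2H(1/10)-9/10)N+O(\log N)}=2^{(0.038+o(1))N}$, i.e.\ the expected number of witness pairs is exponentially \emph{large}, so the union bound certifies nothing at this cutoff --- the exponent has the wrong sign, it is not merely ``essentially tight''. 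Worse, to match the stated failure probability $2^{-N/10}$ you need the exponent below $-1/10$, which forces the cutoff up to roughly $0.92N$ and hence an edit-distance bound of only about $N/12$. (For $|\Sigma|\ge 3$, or if $\ED$ were indel-only so that $\ED=m+n-2\lcs$, the same computation does go through with room to spare; but this paper's $\ED$ allows substitutions and its alphabet is binary.)

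Neither of your proposed repairs closes this gap. Sharper estimates of the binomial coefficients cannot help, since $\binom{N}{N/10}=2^{H(1/10)N-O(\log N)}$ is tight up to polynomial factors and the exponent $2H(1/10)-9/10$ is genuinely positive. Bounding $\EX[\lcs(x,y)]$ by summing the very same tail bounds only yields $\EX[\lcs]\le(0.906+o(1))N$ --- this is precisely the classical first-moment upper bound on the Chv\'atal--Sankoff constant --- so an Azuma/McDiarmid concentration step around that mean again gives only $\ED\gtrsim 0.094N$, short of $N/10$. To prove the proposition as stated over a binary alphabet you would need a strictly stronger upper bound on the LCS of random binary strings (e.g.\ the refined subsequence-counting arguments of Deken or Dan\v{c}\'ik--Paterson, which push the constant below $0.9$), or else weaken the constants. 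As you correctly observe, weakened constants would be harmless for the application in Section~\ref{sec:trace-median-algo}, where the blocks have length $\Omega(\log n)$ and any fixed constants give failure probability $n^{-\omega(1)}$; but as a proof of Proposition~\ref{prop:random-string-edit} itself the argument falls short, which is presumably why the paper simply cites \cite{BEKMRRS03}.
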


\begin{claim}
\label{clm:inter-block-dist}
For every two substrings $x,y$ of length at least $60 \log n$, of $s_i,s_j$ respectively, where $i,j \in [3]$, such that $(A_{s,s_i}^p)^{-1}(x)$ and $(A_{s,s_j}^p)^{-1}(y)$ are two disjoint substrings of $s$, $\Pr[\ED(x,y) \ge \frac{\max\{|x|,|y|\}}{10}] \ge 1-n^{-4}$.
\end{claim}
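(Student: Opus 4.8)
The plan is to condition on the two planted alignments $A_{s,s_i}^p$ and $A_{s,s_j}^p$ produced by the channel $G_p$, and to run the probabilistic argument over the remaining randomness, namely the characters of $s$ and the symbols freshly inserted into the traces. Fixing these two alignments determines the lengths $|s_i|,|s_j|$ and, for the given index ranges, the substrings $(A_{s,s_i}^p)^{-1}(x)$ and $(A_{s,s_j}^p)^{-1}(y)$ of $s$; in particular the hypothesis that these are \emph{disjoint} substrings of $s$ becomes a deterministic statement about the alignments. It therefore suffices to show that, conditioned on any realization of the two alignments for which this disjointness holds, we have $\Pr[\ED(x,y)\ge \max\{|x|,|y|\}/10]\ge 1-n^{-4}$; the stated bound then follows by averaging over the alignments.

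The first step is the key structural observation. In the model $G_p$ the alignment is drawn first, and only afterwards are matched characters copied from $s$ while inserted characters are chosen independently and uniformly. Hence, conditioned on $A_{s,s_i}^p$, each character of the substring $x$ of $s_i$ equals either $s[t]$ for some $t\in\supp\big((A_{s,s_i}^p)^{-1}\big)$ whose image lies in the index range of $x$, or an independent uniform symbol. Since an alignment is monotone and injective on its support, the positions $t$ arising this way are pairwise distinct, so $x$ is itself a uniform random string over $\Sigma^{|x|}$; moreover it is a function only of (i) the characters of $s$ inside the contiguous block $(A_{s,s_i}^p)^{-1}(x)$, and (ii) a collection of fresh insertion symbols used at positions of $x$. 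The identical statement holds for $y$, with $A_{s,s_j}^p$ and $(A_{s,s_j}^p)^{-1}(y)$.

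By hypothesis the blocks $(A_{s,s_i}^p)^{-1}(x)$ and $(A_{s,s_j}^p)^{-1}(y)$ are disjoint, so the sets of characters of $s$ on which $x$ and $y$ depend are disjoint; and since all insertion symbols are mutually independent and independent of $s$, the insertion symbols feeding $x$ are independent of those feeding $y$. (When $i=j$ this uses that $x$ and $y$ occupy non-overlapping index ranges in the trace, which is the situation arising in our application; the degenerate case where $x$ and $y$ overlap inside a run of insertions yet have disjoint $s$-origins does not occur.) Therefore $x$ and $y$ are independent uniform random strings of lengths $|x|,|y|\ge 60\log n$, and Proposition~\ref{prop:random-string-edit} applied with $\max\{|x|,|y|\}\ge 60\log n$ gives
\[
  \Pr\Big[\ED(x,y)\ge \tfrac{\max\{|x|,|y|\}}{10}\Big]\ \ge\ 1-2^{-\max\{|x|,|y|\}/10}\ \ge\ 1-2^{-6\log n}\ \ge\ 1-n^{-4},
\]
which is exactly the claim conditioned on the alignments, hence also unconditionally.

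The only genuinely delicate point is the structural observation in the second paragraph: one must verify carefully that, after conditioning on the two planted alignments, $x$'s dependence on $s$ is confined to the contiguous substring $(A_{s,s_i}^p)^{-1}(x)$ and that $x$ and $y$ become \emph{exactly} independent and uniform, rather than merely approximately so. Once this is set up cleanly, the remainder is an immediate invocation of Proposition~\ref{prop:random-string-edit}.
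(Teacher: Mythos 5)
Your proof is correct and follows essentially the same route as the paper's (very terse) argument: identify that, because the channel $G_p$ first draws the alignment and only then fills in characters, the two substrings $x,y$ are uniform random strings, and then invoke Proposition~\ref{prop:random-string-edit}. The paper packages the uniformity step as an appeal to Proposition~\ref{prop:dist-symmetry}, whereas you re-derive it directly from the structure of $G_p$ by conditioning on the planted alignments; the content is the same. In fact your write-up is somewhat more careful than the paper's two-line sketch on the one point that actually matters — you spell out why the disjointness of $(A_{s,s_i}^p)^{-1}(x)$ and $(A_{s,s_j}^p)^{-1}(y)$ yields \emph{independence} of $x$ and $y$ (disjoint source characters of $s$, disjoint fresh insertion symbols), which is what licenses applying Proposition~\ref{prop:random-string-edit}; the paper states uniformity but leaves independence implicit. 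Your final numeric step $2^{-\max\{|x|,|y|\}/10}\le 2^{-6\log n}\le n^{-4}$ matches the claimed bound.
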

\begin{proof}
By Proposition~\ref{prop:dist-symmetry}, $x$ and $y$ are two strings chosen uniformly at random by picking each of its symbols independently uniformly at random from $\Sigma$. Then the claim directly follows from Proposition~\ref{prop:random-string-edit} together with a standard application of union bound.
\end{proof}

Let us now define a \emph{true match} for each block $y_1^i$ in strings $s_2$ and $s_3$. For each $j \in \{2,3\}$, we call the block $A_{s,s_j}^p((A_{s,s_1}^p)^{-1}(y_1^i))$ in $s_j$ the true match of $y_1^i$, denoted by $t_j^i$. Next, we want to claim that for each block $y_1^i$, its best match $y_j^i$ in a string $s_j$, for $j\in \{2,3\}$, is close to its true match $t_j^i$. The following lemma is crucial to show the correctness of the algorithm and also to establish a linear-time bound for the algorithm.

\begin{claim}
\label{clm:near-perfect-align}
For each $i \in [r]$ and $j \in \{2,3\}$, with probability $1-5n^{-2}$,
\begin{enumerate}
\item $|\texttt{start}(t_j^i) - \texttt{start}(y_j^i)| \le \frac{200}{p} \log n$, and
\item $|\texttt{end}(t_j^i) - \texttt{end}(y_j^i)| \le \frac{200}{p} \log n$.
\end{enumerate}
\end{claim}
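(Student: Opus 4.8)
The plan is to show that the best match $y_j^i$ cannot have its span in $s_j$ displaced from the true match $t_j^i$ by more than $\frac{200}{p}\log n$ at either endpoint, the point being that any larger displacement forces the optimal alignment between the anchor $y_1^i$ and $y_j^i$ to handle a long segment built from characters that are independent uniform symbols, which is too costly for $y_j^i$ to still beat $t_j^i$. First I would put in place the high-probability scaffolding. Since $s_j\sim G_p(s)$ and, by transitivity (Proposition~\ref{prop:dist-transitivity}, Corollary~\ref{cor:dist-pair-transitivity}), $(s_1,s_j)$ can be viewed as $(s_1,G_{q}(s_1))$ with $q=q(p)$, I would use Lemma~\ref{lem:number-edit-operation} together with Chernoff--Hoeffding (and a union bound over the $O(n)$ anchors and the two choices of $j$) to get that, outside an event of probability $O(n^{-2})$: the preimage $X:=(A^p_{s,s_1})^{-1}(y_1^i)$ of the anchor satisfies $|X|=\log^2 n\pm O(\log^{3/2} n)$; the composed planted alignment $y_1^i\to s\to s_j$ restricted to the anchor has cost at most $E_0:=2q\log^2 n$, so $\ED(y_1^i,t_j^i)\le E_0$ and hence, by minimality of $y_j^i$, also $\ED(y_1^i,y_j^i)\le E_0$ and $\big||y_j^i|-\log^2 n\big|\le E_0$; and the channel $A^p_{s,s_j}$ is near length-preserving on any window of size $O(\log^2 n)$, so that displacements measured in $s$ and in $s_j$ agree up to $O(\log n)$. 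I would then reduce to the case that the preimage $Y:=(A^p_{s,s_j})^{-1}(y_j^i)$ of the best match overlaps $X$: if $Y\cap X=\emptyset$, then $y_1^i$ and $y_j^i$ are $G_p$-traces of disjoint windows of the random string $s$, so by Proposition~\ref{prop:dist-symmetry} they are independent uniform strings, and Proposition~\ref{prop:random-string-edit} (equivalently Claim~\ref{clm:inter-block-dist}) gives $\ED(y_1^i,y_j^i)\ge\tfrac1{10}\max(|y_1^i|,|y_j^i|)\ge\tfrac1{20}\log^2 n>E_0$, a contradiction. In particular $\texttt{start}(y_j^i),\texttt{end}(y_j^i)$ are already within $O(\log^2 n)$ of $\texttt{start}(t_j^i),\texttt{end}(t_j^i)$.

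The heart of the argument is to sharpen this to $\frac{200}{p}\log n$. Suppose toward a contradiction that $\texttt{start}(y_j^i)<\texttt{start}(t_j^i)-\frac{200}{p}\log n$; the other three cases (start too large, end too small, end too large) are symmetric, with "prefix/suffix of $y_j^i$'' traded for "prefix/suffix of $y_1^i$''. Since the windows overlap, $y_j^i$ has a genuine prefix $w:=s_j[\texttt{start}(y_j^i),\texttt{start}(t_j^i)-1]$ with $|w|>\frac{200}{p}\log n$ whose preimage lies entirely to the left of $X$ in $s$; conditioning on all the alignment functions $A^p$, the characters of $w$ are either fresh insertions or copies of characters $s[k]$ with $k\notin X$, hence $w$ is a uniform string independent of $y_1^i$ (and of $t_j^i$). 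I would then derive $\ED(y_1^i,y_j^i)>\ED(y_1^i,t_j^i)$, contradicting minimality of $y_j^i$, roughly as follows. Let $M$ be an optimal alignment from $y_1^i$ to $y_j^i$, let $v:=M^{-1}(w)$ be the prefix of $y_1^i$ mapped into $w$, and apply Lemma~\ref{lem:subcost} to the disjoint substrings $w$ and $y_j^i\setminus w$ of $y_j^i$ to obtain $\cost(M)\ge \ED(v,w)+\ED\big(y_1^i\setminus v,\;y_j^i\setminus w\big)$. The first term is $\Omega(|w|)$: if $|v|$ is short this is the trivial bound $\ED(v,w)\ge|w|-|v|$, and otherwise it is Claim~\ref{clm:inter-block-dist} applied to the independent uniform strings $v,w$. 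The second term I would lower bound by $\ED(y_1^i,t_j^i)$ minus a small slack, using that $y_j^i\setminus w$ is, after $O(\log n)$ trimming, a $G_q$-type trace of a window of $s$ that contains $X$, comparing $M$ restricted to that window against the near-unique planted alignment there (Lemma~\ref{lem:unique-alignment}, Lemma~\ref{lem:editbound-small-alphabet}). Adding the two terms and using $|w|>\frac{200}{p}\log n$ should give $\cost(M)>\ED(y_1^i,t_j^i)$. Finally, converting window displacements in $s$ back to index displacements in $s_j$ (equal up to $O(\log n)$ by the concentration of $A^p_{s,s_j}$ on an $O(\log^2 n)$ window) yields the stated bounds on $|\texttt{start}(t_j^i)-\texttt{start}(y_j^i)|$ and $|\texttt{end}(t_j^i)-\texttt{end}(y_j^i)|$; the $O(n)$-fold union bound over anchors gives the per-anchor failure probability $5n^{-2}$ (and $O(n^{-1})$ overall, as needed in the next lemma).

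The step I expect to be the main obstacle is making that "extra cost $\Omega(|w|)$'' comparison quantitatively tight enough to reach the bound $\frac{200}{p}\log n$ rather than a much weaker $O(p\log^2 n)$. The naive route — invoking the edit-distance lower bound of Lemma~\ref{lem:editbound-small-alphabet} on the overlap window — loses a multiplicative $(1\pm O(\epsilon))$ factor on the base quantity $\ED(y_1^i,t_j^i)=\Theta(p\log^2 n)$, and that error term alone already exceeds $\frac1p\log n$. Avoiding it seems to require charging the displacement directly against the genuinely random segment $w$: one compares $M$ to the near-unique planted alignment on the overlap (so the "base cost'' is matched up to an additive $O(\log n)$, not a multiplicative factor), and then runs a Chernoff/union bound over the $O(n)$ candidate starting positions of the match in $s_j$, where each candidate with a random prefix of length $\ell\ge\frac{200}{p}\log n$ contributes only $2^{-\Omega(\ell)}=n^{-\Omega(1/p)}$, comfortably beating $n^{-2}$ for a small enough constant $p$. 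Getting the bookkeeping of these slack terms to actually close at the scale $\tilde O(1/p)$ — which is precisely the scale needed so that the per-block error $O(\frac1p\log n)$ summed over the $\approx pn/\log^{3/2}n$ blocks stays $o(\epsilon p n)$ — is the delicate part of the proof.
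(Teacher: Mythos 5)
Your overall strategy is the right one, and you correctly reduce the problem to showing that a large displacement of $y_j^i$ from $t_j^i$ forces the optimal alignment $y_1^i\to y_j^i$ to waste cost on a long stretch of fresh random symbols. You also correctly identify the weak step: lower-bounding the ``inner'' term $\ED(y_1^i\setminus v,\;y_j^i\setminus w)$ by $\ED(y_1^i,t_j^i)$ minus a slack smaller than $\ED(v,w)$. That step is a genuine gap, and the naive route you name (invoking Lemma~\ref{lem:editbound-small-alphabet} on the overlap window) really does fail: the $(1\pm 6\epsilon)$ multiplicative error applied to the base quantity $\Theta(p\log^2 n)$ produces a slack of order $\epsilon p\log^2 n\gtrsim p^2\log(1/p)\log^2 n$, which dominates the target scale $\tfrac1p\log n$ as soon as $\log n$ is large. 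Your back-up idea (an additive comparison against the planted alignment plus a union bound over candidate starting positions) is pointing in the right direction but is not carried out, and it is exactly the missing content.

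The paper's proof avoids this entirely by decomposing the \emph{anchor} $y_1^i$ rather than the candidate $y_j^i$. It partitions $y_1^i$ into $\tfrac{p\log n}{10}$ contiguous sub-blocks of length $\tfrac{10\log n}{p}$ each and, for an optimal alignment $B:y_1^i\to y_j^i$, asks which sub-blocks $y_1^{i,k}$ have $B(y_1^{i,k})$ overlapping their individual true matches $t_j^{i,k}=A^p_{s,s_j}((A^p_{s,s_1})^{-1}(y_1^{i,k}))$. The argument is then purely local and additive: (a) any sub-block whose $B$-image misses its true match costs at least $\tfrac{\log n}{p}$, by Claim~\ref{clm:inter-block-dist} applied to two independent random windows; and (b) an exchange argument (keep $B$ up to the last overlapping block $k'$, delete block $k'$, then splice in the planted alignment $A^p_{s,s_j}\circ (A^p_{s,s_1})^{-1}$ on the rest, which costs at most $24\log n$ per sub-block) is strictly cheaper whenever more than $\tfrac{10}{1-24p}$ trailing sub-blocks miss, contradicting that $y_j^i$ is a best match. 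This yields $O(1)$ bad trailing sub-blocks, each of length $O(\tfrac{\log n}{p})$, hence the $\tfrac{200}{p}\log n$ endpoint bound — with no comparison of edit distances of full versus truncated windows, so no $(1\pm\epsilon)$ slack ever enters. Your inequality $\cost(M)\ge\ED(v,w)+\ED(v',w')$ (from Lemma~\ref{lem:subcost}) and the $\Omega(|w|)$ bound on the first term are fine, but the route through the second term would require a local, additive-error version of the near-unique-alignment statement, at which point you would effectively be re-deriving the paper's sub-block comparison in different notation; the anchor-side decomposition is the cleaner way to execute the idea.
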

\begin{proof}
Let us partition $y_1^i$ into $\frac{p \log n}{10}$ sub-blocks $y_1^{i,1},\cdots,y_1^{i,(p \log n)/10}$, each of size $\frac{10\log n}{p}$. Next, for each of these sub-blocks $y_1^{i,k}$ consider its true match in the string $s_j$ (for any $j \in \{2,3\}$) defined as $t_j^{i,k}\eqdef A_{s,s_j}^p((A_{s,s_1}^p)^{-1}(y_1^{i,k}))$.

Now, consider an (arbitrary) optimal alignment $B$ between $y_1^i$ and $y_j^i$. Observe, if for all $1\le k \le (p \log n)/10$, $B(y_1^{i,k})$ has a non-empty overlap with the corresponding true match $t_j^{i,k}$, then the claim is true. So from now on, let us assume that at least for some block $y_1^{i,k}$, $B(y_1^{i,k})$ does not overlap with $t_j^{i,k}$. Let $y_1^{i,k'}$ be the right-most sub-block such that there is a non-empty overlapping between $B(y_1^{i,k'})$ and $t_j^{i,k'}$. Then for each $k'+1 \le k \le (p \log n)/10$, by Claim~\ref{clm:inter-block-dist}, $\cost_B(y_1^{i,k}) \ge \frac{\log n}{p}$ with probability at least $1-n^{-4}$.

We want to claim that $k' \ge \frac{p \log n}{10} - \frac{10}{1-24p}$. If not, then we deduce that $y_j^i$ is not the best match of $y_1^i$ in the string $s_j$. To argue this, suppose $k' < \frac{p \log n}{10} - \frac{10}{1-24p}$. Then we modify the mapping $B$ to derive another mapping $B'$ as follows: $B'$ respects $B$ till the block $y_1^{i,k'-1}$. Next, $B'$ deletes the block $y_1^{i,k'}$, and then use the mapping $A_{s,s_j}^p \circ (A_{s,s_1}^p)^{-1}$ to map the remaining blocks $y_1^{i,k'+1},\cdots,y_1^{i,(p \log n)/10}$. By Lemma~\ref{lem:number-edit-operation} (applied on strings of size at least $\frac{10 \log n}{p}$) together with an union bound, we get that for all the blocks of $s_1$ of size at least $\frac{10 \log n}{p}$, the cost of the alignment $A_{s,s_j}^p \circ (A_{s,s_1}^p)^{-1}$ is at most $24\log n$ with probability at least $1-2n^{-2}$.

Clearly, the cost of this new alignment $B'$ is at least $(\frac{10}{1-24p} + 1)\frac{\log n}{p} - (\frac{10 \log n}{p} + \frac{240 \log n}{1-24p} ) > 0$ less than that of $B$. Hence, $y_j^i$ cannot be the best match of $y_1^i$ in $s_j$. So we deduce that $k' \ge \frac{p \log n}{10} - \frac{10}{1-24p}$. Note, if an alignment function just deletes all the blocks $y_1^{i,k'+1},\cdots,y_1^{i,(p \log n)/10}$, it would cost at most $\frac{100 \log n}{p(1-24p)}$. Thus, since $t_j^i$ is the best match of $y_1^i$, the cost of $B$ for these blocks $y_1^{i,k'+1},\cdots,y_1^{i,(p \log n)/10}$ must be at most $\frac{100 \log n}{p(1-24p)} $. From this we conclude that $|\texttt{end}(t_j^i) - \texttt{end}(y_j^i)| \le \frac{100}{p(1-24p)} \log n \le \frac{200}{p} \log n$ (for the choice of $p$ we have).

Similarly, we can argue that $|\texttt{start}(t_j^i) - \texttt{start}(y_j^i)| \le \frac{200}{p} \log n$. This concludes the proof.
\end{proof}

The following is an immediate corollary of the above claim.
\begin{corollary}
\label{cor:non-overlap-match}
With probability at least $1-10n^{-2}$, for each $i\in [r]$ and $j \in \{2,3\}$, $y_j^i$ and $y_j^{i+1}$ do not overlap.
\end{corollary}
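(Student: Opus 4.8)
The plan is to deduce the corollary from Claim~\ref{clm:near-perfect-align}, using the fact that consecutive anchors $y_1^{i},y_1^{i+1}$ are separated inside $s_1$ by a buffer that is far wider than the $\tfrac{200}{p}\log n$ slack appearing in that claim.

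First I would pin down the separation of the \emph{true} matches in $s_j$. By the way $s_1$ is partitioned, between $\texttt{end}(y_1^{i})$ and $\texttt{start}(y_1^{i+1})$ there are exactly $w\eqdef\tfrac{240}{p}\log^{3/2}n$ positions of $s_1$ (the second buffer half of block $s_1^{i}$ together with the first buffer half of $s_1^{i+1}$). The map $A_{s,s_j}^p\circ(A_{s,s_1}^p)^{-1}$ sends these two positions to (essentially) $\texttt{end}(t_j^{i})$ and $\texttt{start}(t_j^{i+1})$, and it can only decrease their distance by the number of its deletions inside that window, which is at most the total number of edit operations of $(A_{s,s_1}^p)^{-1}$ and of $A_{s,s_j}^p$ over the corresponding windows; by Lemma~\ref{lem:number-edit-operation} this is $O(qw)$, with $q=2p-\Theta(p^2)$, except with probability $2e^{-\Omega(\log^{3/2}n)}$. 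A union bound over the at most $2r\le 2n$ relevant windows (two choices of $j$, at most $r$ blocks) keeps the total failure probability below $5n^{-2}$; and since $p\le c_0$ is a small constant, $O(qw)\le w/2$, so $\texttt{start}(t_j^{i+1})-\texttt{end}(t_j^{i})\ge w/2=\tfrac{120}{p}\log^{3/2}n$ for all $i\in[r]$ and $j\in\{2,3\}$.

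Next I would combine this with Claim~\ref{clm:near-perfect-align}, whose conclusion --- $|\texttt{end}(t_j^{i})-\texttt{end}(y_j^{i})|\le\tfrac{200}{p}\log n$ and $|\texttt{start}(t_j^{i})-\texttt{start}(y_j^{i})|\le\tfrac{200}{p}\log n$ for all $i,j$ --- fails with probability at most $5n^{-2}$. On the intersection of the two good events,
\[
  \texttt{start}(y_j^{i+1})-\texttt{end}(y_j^{i})
  \ \ge\ \big(\texttt{start}(t_j^{i+1})-\texttt{end}(t_j^{i})\big)-\tfrac{400}{p}\log n
  \ \ge\ \tfrac{120}{p}\log^{3/2}n-\tfrac{400}{p}\log n\ >\ 0
\]
for all large enough $n$, because $\log^{3/2}n$ dominates $\log n$; hence $y_j^{i}$ and $y_j^{i+1}$ are disjoint for every $i$ and $j$. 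A final union bound over the two events gives probability at least $1-10n^{-2}$, as claimed.

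The only mildly delicate point is the uniform (over all $\le 2n$ buffer windows) control of the distortion of $A_{s,s_j}^p\circ(A_{s,s_1}^p)^{-1}$; but this is precisely the Chernoff-and-union-bound estimate already used inside the proof of Claim~\ref{clm:near-perfect-align} (there one shows this composed alignment has cost $O(\log n)$ on every block of $s_1$ of length $\ge\tfrac{10\log n}{p}$), so it poses no real obstacle. The substantive content is simply the clean anchor separation engineered into the partition of $s_1$, which makes the $\log^{3/2}n$-scale buffer swamp the $\log n$-scale alignment errors.
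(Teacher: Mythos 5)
Your proposal is correct and follows the same strategy as the paper's own proof: bound the separation of the true matches $t_j^i$, $t_j^{i+1}$ using Lemma~\ref{lem:number-edit-operation} on the buffer between consecutive anchors, then apply Claim~\ref{clm:near-perfect-align} to transfer this to the best matches $y_j^i$, $y_j^{i+1}$, noting that the $\Theta(p^{-1}\log^{3/2}n)$-scale buffer dominates the $\Theta(p^{-1}\log n)$-scale slack. (Your arithmetic for the inter-anchor gap, $\tfrac{240}{p}\log^{3/2}n$, is actually the cleaner one; the paper writes $\tfrac{480}{p}\log^{3/2}n$, but the discrepancy is immaterial since both comfortably dominate $\tfrac{400}{p}\log n$.)
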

\begin{proof}
Consider the substring between the blocks $y_1^i$ and $y_1^{i+1}$, which is of length $\frac{480}{p} \log^{3/2} n$. By Lemma~\ref{lem:number-edit-operation}, $A_{s,s_j}^p\circ (A_{s,s_1}^p)^{-1}$ maps that substring into a substring of length at least $240\log^{3/2} n > \frac{400}{p} \log n$ in $s_j$ with probability at least $1-n^{-3}$. Now, it directly follows from Claim~\ref{clm:near-perfect-align} that $y_j^i$ and $y_j^{i+1}$ do not overlap.
\end{proof}

Next, we use the above to establish an upper bound on the edit distance between the unknown string $s$ and the recovered string $z$.

\begin{lemma}
\label{lem:correctness}
With probability at least $1-n^{-1}$, $\ED(s,z) \le 1550 \epsilon p n$.
\end{lemma}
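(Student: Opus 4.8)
The idea is to transfer the single‑string edit bound to the concatenated output by arguing block by block, paying only a lower‑order cost for the ``buffer'' regions and for the few blocks on which the per‑block analysis fails. First I would let the planted alignment $A_{s,s_1}^p$ induce a partition $s = w^1\odot\cdots\odot w^r\odot s''$, where $w^i$ is essentially the preimage $(A_{s,s_1}^p)^{-1}(s_1^i)$ of the $i$‑th block of $s_1$, extended by $O(\log n)$ positions at each boundary so that the $w^i$ genuinely tile a prefix of $s$, and $s''$ is the remaining suffix of $s$ whose image lies past $s_1^r$; since $|s_1|-r\ell<\ell=O(\log^2 n)$ and $A_{s,s_1}^p$ distorts lengths only by a $(1\pm p)$ factor up to fluctuations, $|s''|=O(\log^2 n)$. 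Because $z=z^1\odot\cdots\odot z^r$ by construction, aligning $w^i$ with $z^i$ block by block already gives $\ED(s,z)\le\sum_{i=1}^r\ED(w^i,z^i)+O(\log^2 n)$, so everything reduces to bounding $\sum_i\ED(w^i,z^i)$.

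\textbf{The block triples are near‑traces of $w^i$.} I would condition on a bundle of high‑probability good events: the conclusions of Claim~\ref{clm:near-perfect-align} and Corollary~\ref{cor:non-overlap-match}; and, for every block $i$ and every $j\in\{1,2,3\}$, the Chernoff‑type estimates behind Lemma~\ref{lem:number-edit-operation}, Lemma~\ref{lem:editbound-small-alphabet} and Claim~\ref{clm:opt-value} evaluated at scale $|w^i|=\Theta(\log^2 n)$, together with concentration of the length of $A_{s,s_j}^p$ on each length‑$\Theta(\tfrac1p\log^{3/2}n)$ buffer segment. Since $r<n/\log^2 n$ and at scale $\log^2 n$ each Chernoff‑type event fails with probability $e^{-\Theta(\epsilon^2p^2\log^2 n)}=n^{-\omega(1)}$, while Claim~\ref{clm:near-perfect-align} and Corollary~\ref{cor:non-overlap-match} fail with probability $O(rn^{-2})$, all of these hold simultaneously except with probability $\tfrac12 n^{-1}$. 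Next I would set $\hat s_j^i\eqdef A_{s,s_j}^p(w^i)$ for $j\in\{1,2,3\}$; these are three genuine traces of $w^i$ (images under the three independent channels), with $\hat s_1^i=s_1^i$ up to $O(\log n)$ trimmed boundary insertions. Chaining Claim~\ref{clm:near-perfect-align} with the buffer‑length concentration shows the block endpoints of $s_j^i$ differ from the endpoints of $\hat s_j^i$ by at most $O(\tfrac1p\log^{3/2}n)$ (the $\tfrac{200}{p}\log n$ slack of Claim~\ref{clm:near-perfect-align} plus the $O(\tfrac1p\log^{3/2}n)$ overhang into each adjacent buffer), hence $\ED(s_j^i,\hat s_j^i)=O(\tfrac1p\log^{3/2}n)$. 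Consequently $|\opt(\{s_j^i\})-\opt(\{\hat s_j^i\})|=O(\tfrac1p\log^{3/2}n)$, and since $z^i$ is an exact median of $\{s_1^i,s_2^i,s_3^i\}$ and $\opt(\{\hat s_j^i\})=\Theta(p\log^2 n)$ (Claim~\ref{clm:opt-value} at scale $|w^i|$), the string $z^i$ is a $\bigl(1+O(\tfrac1{p^2\log^{1/2}n})\bigr)$‑approximate median of $\{\hat s_1^i,\hat s_2^i,\hat s_3^i\}$; for $n$ large (depending on $p$) this factor lies below $110p\log(1/p)\le\epsilon$, so $z^i$ is in fact a $(1+\epsilon)$‑approximate median of $\{\hat s_1^i,\hat s_2^i,\hat s_3^i\}$.

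\textbf{Per‑block robustness and union over blocks.} Viewing one of the traces as the uniform base string (as in Claim~\ref{clm:equivalence} and Proposition~\ref{prop:dist-symmetry}), and using that the block grid in $s_1$ is deterministic, the triple $(\hat s_1^i,\hat s_2^i,\hat s_3^i)$ with unknown string $w^i$ is distributed, up to $O(\log n)$ boundary slack, exactly as in Process~2 at length $|w^i|=\Theta(\log^2 n)$, so Theorem~\ref{thm:median-trace-close-alternate} applies block by block. I would let $Z_i$ be the indicator of the event that its conclusion \emph{fails} for block $i$ --- i.e.\ $\ED(w^i,z^i)>195\,\epsilon\,\opt(\{\hat s_j^i\})$ (or the block‑scale version of Claim~\ref{clm:opt-value} fails); then $\Pr[Z_i=1]\le e^{-(\log\log n)^2}$, far too weak for a naive union bound over the $r$ blocks. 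The remedy is that the $Z_i$ are (almost) independent: $Z_i$ depends only on the content of $s$ on $w^i$ and on the channel coins used inside the $O(\tfrac1p\log^{3/2}n)$‑neighbourhood of $w^i$, and neighbourhoods of blocks at index‑distance $\ge 2$ are disjoint since $|w^i|=\Theta(\log^2 n)\gg\tfrac1p\log^{3/2}n$; hence the odd‑indexed $Z_i$ are mutually independent and likewise the even‑indexed ones. A Chernoff bound within each parity class (using $r\,e^{-(\log\log n)^2}\gg\ln n$) gives $\sum_i Z_i\le 2r\,e^{-(\log\log n)^2}$ except with probability $\tfrac12 n^{-1}$. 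Putting it together: each good block ($Z_i=0$) contributes $\ED(w^i,z^i)\le 195\,\epsilon\,\opt(\{\hat s_j^i\})\le 195\,\epsilon\sum_{j=1}^3\cost_{A_{s,s_j}^p}(w^i)$, and each bad block contributes only the trivial $\ED(w^i,z^i)\le|w^i|+|z^i|=O(\log^2 n)$ (the latter bound because $z^i$ is an exact median of three length‑$O(\log^2 n)$ strings). Summing over $i$, using Lemma~\ref{lem:subcost} and Lemma~\ref{lem:number-edit-operation} to bound $\sum_i\cost_{A_{s,s_j}^p}(w^i)\le(1+O(\epsilon))pn$ with high probability, and noting $O(\log^2 n)\cdot 2r\,e^{-(\log\log n)^2}+O(\log^2 n)=o(\epsilon pn)$ since $r=O(n/\log^2 n)$, a direct calculation yields $\ED(s,z)\le 1550\,\epsilon pn$, with total failure probability at most $n^{-1}$.

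\textbf{Main obstacle.} I expect the delicate point to be the (almost‑)independence of the per‑block failure events $Z_i$: pinning down exactly which randomness $Z_i$ depends on once we pass from the true matches (determined by the three planted alignments) to the best matches (which the algorithm actually computes, and whose endpoints have a mild content dependence confined to the buffer regions), and verifying that, conditioned appropriately, the law of $(\hat s_1^i,\hat s_2^i,\hat s_3^i;w^i)$ inside each block is a genuine truncation of Process~2 so that Theorem~\ref{thm:median-trace-close-alternate} may be invoked. A secondary, routine but unavoidable, nuisance is the boundary bookkeeping that turns Claim~\ref{clm:near-perfect-align} and the buffer‑length concentration into the clean bound $\ED(s_j^i,\hat s_j^i)=O(\tfrac1p\log^{3/2}n)$, since this is exactly what licenses discarding the buffer regions at lower‑order cost.
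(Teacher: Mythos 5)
Your proposal is correct and follows essentially the same strategy as the paper's proof: reduce to Theorem~\ref{thm:median-trace-close-alternate} block by block by showing that each $z^i$ is a $(1+O(\epsilon))$-approximate median of the triple of ``true'' traces of a sub-block of $s$, then invoke (near-)independence across blocks to Chernoff away the few bad blocks and sum. The one place you diverge in bookkeeping is the choice of sub-block: the paper takes $y^i=(A_{s,s_1}^p)^{-1}(y_1^i)$ (the preimage of the \emph{anchor} only), lets $T^i=\{y_1^i,t_2^i,t_3^i\}$ be the anchor's true matches, and pays for the discarded buffers additively via $\ED(s,\,y^1\odot\cdots\odot y^r)=O(n/(p\log^{1/2}n))$, whereas you take $w^i=(A_{s,s_1}^p)^{-1}(s_1^i)$ (preimage of the whole block) so that the $w^i$ tile a prefix of $s$ and no discard term appears, at the price of a per-block $\ED(s_j^i,\hat s_j^i)=O(\tfrac1p\log^{3/2}n)$ correction folded into the approximation ratio; the two charges are of the same order, so nothing substantive changes. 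Your observation that per-block failure events need a parity-class split (rather than the paper's blanket assertion of independence) is a defensible tightening of the same argument, since the algorithm's block boundaries in $s_2,s_3$ do leak a small amount of content dependence into neighbouring blocks.
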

\begin{proof}
For any $i \in [r]$, consider the set $S^i \eqdef \{s_1^i,s_2^i,s_3^i\}$. Consider the substring $y^i$ of the string $s$ such that $A_{s,s_1}^p(y^i)=y_1^i$ (i.e., $y^i$ maps to $y_1^i$ by the alignment $A_{s,s_1}^p$). Next, for the analysis purpose, consider the set $T^i = \{y_1^i, t_2^i , t_3^i\}$. Recall, by the definition of $t_j^i = A_{s,s_j}^p(y^i)$, for $j \in \{2,3\}$ are the traces generated by $G_p$ from the block $y^i$.

Thus by Lemma~\ref{lem:number-edit-operation}, with probability at least $1-3n^{-4}$, for each $t \in T^i$, $\ED(y^i,t) \le (1+\epsilon)p |y^i|$. Since $y_1^i$ is a substring of $s_1^i$ (where $|s_1^i| = |y_1^i| + \frac{240}{p} \log^{3/2} n$), by triangular inequality, $\ED(y^i , s_1^i) \le (1+\epsilon)p |y^i| + \frac{240}{p} \log^{3/2} n$. Next observe, for each $j \in \{2,3\}$, by Corollary~\ref{cor:non-overlap-match}, $t_j^i$ is a substring of $s_j^i$. Furthermore, by definition, $|s_j^i| \le  |y_j^i| + \frac{240}{p} \log^{3/2} n$. Thus, again by triangular inequality, $\ED(y^i , s_j^i) \le (1+\epsilon)p |y^i| + \frac{240}{p} \log^{3/2} n$. So we get 
\begin{align}
\label{eq:obj-bound}
\obj(S^i, y^i) \le 3 (1+\epsilon)p |y^i| +  \frac{750}{p} \log^{3/2} n.
\end{align}
Since $z^i$ is an (exact) median of $S^i$, $\obj(S^i , z^i) \le \obj(S^i, y^i)$. Next, it follows from Claim~\ref{clm:near-perfect-align} and the construction of the blocks $s_j^i$, for $j \in \{2,3\}$, that $t_j^i$ is a substring of $s_j^i$ where $|t_j^i| \ge |s_j^i| - \frac{500}{p} \log^{3/2} n$. Hence, we can deduce that
\begin{align}
\label{eq:obj}
\obj(T^i, z^i) &\le \obj(S^i, z^i) + \frac{1500}{p} \log^{3/2} n \nonumber\\
&\le \obj(S^i, y^i) + \frac{1500}{p} \log^{3/2} n \nonumber\\
&\le 3 (1+\epsilon)p |y^i| +  \frac{2500}{p} \log^{3/2} n&\text{by~\eqref{eq:obj-bound}}.
\end{align}

Further observe, it follows from Proposition~\ref{prop:dist-transitivity} and Lemma~\ref{lem:editbound-small-alphabet}, for each $j \in \{2,3\}$,
$$\ED(y_1^i,t_j^i) \ge (1-6\epsilon)q |y_1^i|=(1-6\epsilon)q \log ^2 n.$$
Recall, $q=2p - \Theta(p^2)$. 
Then by an argument similar to that used in the proof of Claim~\ref{clm:opt-value}, 
\begin{align}
\label{eq:opt}
\opt(T^i) \ge 3(1-7\epsilon)p \log^2 n.
\end{align}
From~\eqref{eq:obj} and~\eqref{eq:opt}, we conclude that $z^i$ is an $(1+9\epsilon)$-approximate median of $T^i$. So by Theorem~\ref{thm:median-trace-close-alternate}, $\ED(y^i,z^i) \le 1755 \epsilon \cdot \opt(T^i)$ with probability at least $1-e^{-2 \log^2 (\log n)}$.
 
Now, since all the $y^i$'s are generated by picking each symbol uniformly at random and by our construction for each $j \in \{2,3\}$ $t_j^i$'s are disjoint, the sets $T^i$'s are independent. Hence, by applying standard Chernoff-Hoeffding bound, we get that with probability at least $1-n^{-1}$, all but at most $ e^{-2 \log^2 (\log n)} r + \sqrt{r \log r}$ many blocks satisfy, $\ED(y^i,z^i) \le 1755 \epsilon \cdot \opt(T^i)$. 

Let $s'$ denote the string $y^1\odot \cdots \odot y^r$. 
Note as $s'$ is a subsequence of $s$, 
\begin{equation*}
  \ED(s,s')=|s|-|s'|\le \frac{500 n}p{\log^{1/2} n}  
\end{equation*}

Then,
\begin{align*}
\ED(s,z) & \le \ED(s',z) + \frac{500 n}{p\log^{1/2} n}&\text{by triangular inequality}\\
&\le \sum_{i=1}^r \ED(y^i,z^i) + \frac{500 n}{p\log^{1/2} n}\\
&\le 1755 \epsilon \sum_{i=1}^r \opt(T^i) + (e^{-2 \log^2 (\log n)} r + \sqrt{r \log r}) 2 \log^2 n + \frac{500 n}{p\log^{1/2} n}\\
&\le 5270 \epsilon p n &\text{by~\eqref{eq:obj} and }r=\Theta(n/\log^2 n).
\end{align*}
\end{proof}

\paragraph{Running time analysis. }
Partitioning the string $s_1$ into $r$ blocks clearly takes linear time. The main challenge here is to find the best match $y_j^i$ (for $j \in \{2,3\}$) for each block $y_1^i$. To do this, for each $j \in \{2,3\}$, we start with the first $10\log^2 n$-sized substring of $s_j$ and run the approximate pattern matching algorithm under the edit metric by~\cite{LV89, GP90} to find the best match $y_j^1$ for $y_1^1$ (which takes $O(\log^4 n)$ time). Next, we consider the $10\log^2 n$-sized substring of $s_j$ starting from the end index of $y_j^1$, and in a similar way find the best match $y_j^2$ for $y_1^2$. We continue until we find the best matches for all the blocks $y_1^1,\cdots,y_1^r$. Lemma~\ref{lem:number-edit-operation} ensures that $y_j^1$ indeed lies on the first $10\log^2 n$-sized substring of $s_j$ with probability at least $1-n^{-4}$. Then Corollary~\ref{cor:non-overlap-match} together with Lemma~\ref{lem:number-edit-operation} guarantees that to find the best match for a block $y_1^i$, it suffices to look into the $10\log^2 n$-sized substring of $s_j$ after the best match of the previous block $y_1^{i-1}$. Hence, we can identify the best matches for all the blocks $y_1^1,\cdots,y_1^r$ in time $\tilde{O}(n)$ (since $r=\Theta(\frac{n}{\log^2 n})$). Once we get $s_1^i,s_2^i,s_3^i$ for each $i\in [r]$, we can compute their median using the dynamic programming algorithm~\cite{Sankoff75, kruskal1983} in time $O(\log^6 n)$ time. So, the total running time is $\tilde{O}(n)$.

\section{Conclusion}
\label{sec:conclusion}
Trace reconstruction in the average case is a well-studied problem. The problem is to reconstruct an unknown (random) string by reading a few traces of it generated via some noise (insertion-deletion) channel. The main objective here is to minimize the sample complexity and also the efficiency of the reconstruction algorithm. There is an exponential gap between the current best upper and lower bound in the sample complexity despite several attempts. The best lower bound is $\tilde{\Omega}(\log^{5/2} n)$~\cite{C20b}. A natural question is whether it is possible to beat this lower bound by allowing some error in the reconstructed string. This version is also referred to as the approximate trace reconstruction problem; however, nothing is known except for a few special cases.

Our result not only beats the lower bound of the exact trace reconstruction but uses only three traces. The reconstructed string is $O(\epsilon p n)$ close (in edit distance) to the unknown string with high probability. We establish a connection between the approximate trace reconstruction and the approximate median string problem, another utterly significant problem. We show that both the problems are essentially the same. We leverage this connection to design a near-linear time approximate reconstruction algorithm using three traces.

An exciting future direction is to get a similar result for the worst-case, where the unknown string is arbitrary. It will also be fascinating if we could show some non-trivial sample complexity lower bound for that version.

\ifprocs
\bibliographystyle{plainurl}
\else
\bibliographystyle{alphaurl}
\fi
\bibliography{ref}

\newcommand{\etalchar}[1]{$^{#1}$}
\begin{thebibliography}{CDL{\etalchar{+}}21b}

\bibitem[ACN08]{ACN08}
Nir Ailon, Moses Charikar, and Alantha Newman.
\newblock Aggregating inconsistent information: Ranking and clustering.
\newblock {\em J. {ACM}}, 55(5):23:1--23:27, 2008.
\newblock \href {https://doi.org/10.1145/1411509.1411513}
  {\path{doi:10.1145/1411509.1411513}}.

\bibitem[ARJ14]{abreu2014new}
J.~Abreu and Juan~Ram{\'o}n Rico-Juan.
\newblock A new iterative algorithm for computing a quality approximate median
  of strings based on edit operations.
\newblock {\em Pattern Recognition Letters}, 36:74--80, 2014.

\bibitem[BCF{\etalchar{+}}19]{ban2019beyond}
Frank Ban, Xi~Chen, Adam Freilich, Rocco~A Servedio, and Sandip Sinha.
\newblock Beyond trace reconstruction: Population recovery from the deletion
  channel.
\newblock In {\em 60th Annual Symposium on Foundations of Computer Science
  (FOCS)}, pages 745--768. IEEE, 2019.

\bibitem[BEK{\etalchar{+}}03]{BEKMRRS03}
Tugkan Batu, Funda Erg\"{u}n, Joe Kilian, Avner Magen, Sofya Raskhodnikova,
  Ronitt Rubinfeld, and Rahul Sami.
\newblock A sublinear algorithm for weakly approximating edit distance.
\newblock In {\em Proceedings of the 35th Annual ACM Symposium on Theory of
  Computing}, STOC '03, pages 316--324. ACM, 2003.

\bibitem[BKKM04]{BKKM04}
Tugkan Batu, Sampath Kannan, Sanjeev Khanna, and Andrew McGregor.
\newblock Reconstructing strings from random traces.
\newblock In {\em Proceedings of the Fifteenth Annual {ACM-SIAM} Symposium on
  Discrete Algorithms, {SODA} 2004}, pages 910--918. {SIAM}, 2004.

\bibitem[CA97]{casacuberta1997greedy}
Francisco Casacuberta and M.~D. Antonio.
\newblock A greedy algorithm for computing approximate median strings.
\newblock In {\em Proc. of National Symposium on Pattern Recognition and Image
  Analysis}, pages 193--198, 1997.

\bibitem[CCGB17]{cardot2017online}
Hervé Cardot, Peggy Cénac, and Antoine Godichon-Baggioni.
\newblock Online estimation of the geometric median in {H}ilbert spaces:
  {N}onasymptotic confidence balls.
\newblock {\em Annals of Statistics}, 45(2):591--614, 2017.
\newblock \href {https://doi.org/10.1214/16-AOS1460}
  {\path{doi:10.1214/16-AOS1460}}.

\bibitem[CDK21]{chakraborty2021approximating}
Diptarka Chakraborty, Debarati Das, and Robert Krauthgamer.
\newblock Approximating the median under the ulam metric.
\newblock In {\em Proceedings of the 2021 ACM-SIAM Symposium on Discrete
  Algorithms (SODA)}, pages 761--775. SIAM, 2021.
\newblock \href {https://doi.org/10.1137/1.9781611976465.48}
  {\path{doi:10.1137/1.9781611976465.48}}.

\bibitem[CDL{\etalchar{+}}21a]{CDLSS21b}
Xi~Chen, Anindya De, Chin~Ho Lee, Rocco~A. Servedio, and Sandip Sinha.
\newblock Polynomial-time trace reconstruction in the low deletion rate regime.
\newblock In James~R. Lee, editor, {\em 12th Innovations in Theoretical
  Computer Science Conference, {ITCS} 2021}, volume 185 of {\em LIPIcs}, pages
  20:1--20:20. Schloss Dagstuhl - Leibniz-Zentrum f{\"{u}}r Informatik, 2021.

\bibitem[CDL{\etalchar{+}}21b]{CDLSS21}
Xi~Chen, Anindya De, Chin~Ho Lee, Rocco~A. Servedio, and Sandip Sinha.
\newblock Polynomial-time trace reconstruction in the smoothed complexity
  model.
\newblock In {\em Proceedings of the 2021 {ACM-SIAM} Symposium on Discrete
  Algorithms}, pages 54--73. {SIAM}, 2021.

\bibitem[CGMR20]{cheraghchi2020coded}
Mahdi Cheraghchi, Ryan Gabrys, Olgica Milenkovic, and Joao Ribeiro.
\newblock Coded trace reconstruction.
\newblock {\em IEEE Transactions on Information Theory}, 66(10):6084--6103,
  2020.

\bibitem[Cha21a]{C20b}
Zachary Chase.
\newblock New lower bounds for trace reconstruction.
\newblock In {\em Annales de l'Institut Henri Poincar{\'e}, Probabilit{\'e}s et
  Statistiques}, volume~57, pages 627--643. Institut Henri Poincar{\'e}, 2021.

\bibitem[Cha21b]{C20}
Zachary Chase.
\newblock Separating words and trace reconstruction.
\newblock In {\em Proceedings of the 53rd Annual ACM SIGACT Symposium on Theory
  of Computing}, pages 21--31, 2021.

\bibitem[CKPV10]{CKPV10}
Flavio Chierichetti, Ravi Kumar, Sandeep Pandey, and Sergei Vassilvitskii.
\newblock Finding the {J}accard median.
\newblock In {\em Proceedings of the twenty-first annual ACM-SIAM symposium on
  Discrete Algorithms}, pages 293--311. SIAM, 2010.
\newblock \href {https://doi.org/10.1137/1.9781611973075.25}
  {\path{doi:10.1137/1.9781611973075.25}}.

\bibitem[CLM{\etalchar{+}}16]{cohen2016geometric}
Michael~B. Cohen, Yin~Tat Lee, Gary Miller, Jakub Pachocki, and Aaron Sidford.
\newblock Geometric median in nearly linear time.
\newblock In {\em Proceedings of the forty-eighth annual ACM Symposium on
  Theory of Computing}, pages 9--21, 2016.

\bibitem[DKNS01]{DKNS01}
Cynthia Dwork, Ravi Kumar, Moni Naor, and D.~Sivakumar.
\newblock Rank aggregation methods for the web.
\newblock In {\em Proceedings of the Tenth International World Wide Web
  Conference, {WWW} 10}, pages 613--622, 2001.
\newblock \href {https://doi.org/10.1145/371920.372165}
  {\path{doi:10.1145/371920.372165}}.

\bibitem[dlHC00]{HC00}
Colin de~la Higuera and Francisco Casacuberta.
\newblock Topology of strings: Median string is {NP}-complete.
\newblock {\em Theor. Comput. Sci.}, 230(1-2):39--48, 2000.
\newblock \href {https://doi.org/10.1016/S0304-3975(97)00240-5}
  {\path{doi:10.1016/S0304-3975(97)00240-5}}.

\bibitem[DOS17]{de2017optimal}
Anindya De, Ryan O'Donnell, and Rocco~A Servedio.
\newblock Optimal mean-based algorithms for trace reconstruction.
\newblock In {\em Proceedings of the 49th Annual ACM SIGACT Symposium on Theory
  of Computing}, pages 1047--1056, 2017.

\bibitem[DRRS20]{DRRS20}
Sami Davies, Mikl{\'{o}}s~Z. R{\'{a}}cz, Cyrus Rashtchian, and Benjamin~G.
  Schiffer.
\newblock Approximate trace reconstruction.
\newblock {\em CoRR}, abs/2012.06713, 2020.
\newblock URL: \url{https://arxiv.org/abs/2012.06713}.

\bibitem[FVJ08]{fletcher2008robust}
P.~Thomas Fletcher, Suresh Venkatasubramanian, and Sarang Joshi.
\newblock Robust statistics on riemannian manifolds via the geometric median.
\newblock In {\em 2008 IEEE Conference on Computer Vision and Pattern
  Recognition}, pages 1--8. IEEE, 2008.

\bibitem[FZ00]{fischer2000string}
Igor Fischer and Andreas Zell.
\newblock String averages and self-organizing maps for strings.
\newblock {\em Proceedings of the neural computation}, pages 208--215, 2000.

\bibitem[GBC{\etalchar{+}}13]{GBCDLSB13}
Nick Goldman, Paul Bertone, Siyuan Chen, Christophe Dessimoz, Emily~M.
  LeProust, Botond Sipos, and Ewan Birney.
\newblock Towards practical, high-capacity, low-maintenance information storage
  in synthesized {DNA}.
\newblock {\em Nature}, 494(7435):77--80, 2013.

\bibitem[GP90]{GP90}
Zvi Galil and Kunsoo Park.
\newblock An improved algorithm for approximate string matching.
\newblock {\em {SIAM} Journal on Computing}, 19(6):989--999, 1990.

\bibitem[GSZ20]{GSZ20}
Elena Grigorescu, Madhu Sudan, and Minshen Zhu.
\newblock Limitations of mean-based algorithms for trace reconstruction at
  small distance.
\newblock {\em CoRR}, abs/2011.13737, 2020.

\bibitem[Gus97]{gusfield1997}
Dan Gusfield.
\newblock {\em Algorithms on Strings, Trees, and Sequences - Computer Science
  and Computational Biology}.
\newblock Cambridge University Press, 1997.

\bibitem[HK16]{hayashida2016integer}
Morihiro Hayashida and Hitoshi Koyano.
\newblock Integer linear programming approach to median and center strings for
  a probability distribution on a set of strings.
\newblock In {\em BIOINFORMATICS}, pages 35--41, 2016.

\bibitem[HL20]{HL20}
Nina Holden and Russell Lyons.
\newblock Lower bounds for trace reconstruction.
\newblock {\em The Annals of Applied Probability}, 30(2):503 -- 525, 2020.

\bibitem[HMPW08]{HMPW08}
Thomas Holenstein, Michael Mitzenmacher, Rina Panigrahy, and Udi Wieder.
\newblock Trace reconstruction with constant deletion probability and related
  results.
\newblock In {\em Proceedings of the Nineteenth Annual {ACM-SIAM} Symposium on
  Discrete Algorithms, {SODA} 2008}, pages 389--398. {SIAM}, 2008.

\bibitem[HPPZ20]{HPP20}
Nina Holden, Robin Pemantle, Yuval Peres, and Alex Zhai.
\newblock Subpolynomial trace reconstruction for random strings and arbitrary
  deletion probability.
\newblock {\em Mathematical Statistics and Learning}, 2(3):275--309, 2020.

\bibitem[Kal73]{Kal73}
V.~V. Kalashnik.
\newblock Reconstruction of a word from its fragments.
\newblock {\em Computational Mathematics and Computer Science
  (Vychislitel’naya matematika i vychislitel’naya tekhnika), Kharkov},
  4:56--57, 1973.

\bibitem[KM05]{kannan2005more}
Sampath Kannan and Andrew McGregor.
\newblock More on reconstructing strings from random traces: insertions and
  deletions.
\newblock In {\em Proceedings. International Symposium on Information Theory,
  2005. ISIT 2005.}, pages 297--301. IEEE, 2005.

\bibitem[Koh85]{kohonen1985median}
Teuvo Kohonen.
\newblock Median strings.
\newblock {\em Pattern Recognition Letters}, 3(5):309--313, 1985.
\newblock \href {https://doi.org/10.1016/0167-8655(85)90061-3}
  {\path{doi:10.1016/0167-8655(85)90061-3}}.

\bibitem[Kru83]{kruskal1983}
Joseph~B Kruskal.
\newblock An overview of sequence comparison: Time warps, string edits, and
  macromolecules.
\newblock {\em SIAM review}, 25(2):201--237, 1983.
\newblock \href {https://doi.org/10.1137/1025045} {\path{doi:10.1137/1025045}}.

\bibitem[Kru99]{kruzslicz1999improved}
Ferenc Kruzslicz.
\newblock Improved greedy algorithm for computing approximate median strings.
\newblock {\em Acta Cybernetica}, 14(2):331--339, 1999.

\bibitem[KS07]{MS07}
Claire {Kenyon-Mathieu} and Warren Schudy.
\newblock How to rank with few errors.
\newblock In {\em Proceedings of the Thirty-Ninth Annual ACM Symposium on
  Theory of Computing}, STOC '07, page 95–103. ACM, 2007.
\newblock \href {https://doi.org/10.1145/1250790.1250806}
  {\path{doi:10.1145/1250790.1250806}}.

\bibitem[Lev01a]{levenshtein2001efficient}
Vladimir~I. Levenshtein.
\newblock Efficient reconstruction of sequences.
\newblock {\em IEEE Transactions on Information Theory}, 47(1):2--22, 2001.
\newblock \href {https://doi.org/10.1109/18.904499}
  {\path{doi:10.1109/18.904499}}.

\bibitem[Lev01b]{levenshtein2001efficient2}
Vladimir~I. Levenshtein.
\newblock Efficient reconstruction of sequences from their subsequences or
  supersequences.
\newblock {\em Journal of Combinatorial Theory, Series A}, 93(2):310--332,
  2001.
\newblock \href {https://doi.org/10.1006/jcta.2000.3081}
  {\path{doi:10.1006/jcta.2000.3081}}.

\bibitem[LV89]{LV89}
Gad~M. Landau and Uzi Vishkin.
\newblock Fast parallel and serial approximate string matching.
\newblock {\em Journal of Algorithms}, 10(2):157--169, 1989.

\bibitem[MAS19]{Mirabal19}
P.~Mirabal, J.~Abreu, and D.~Seco.
\newblock Assessing the best edit in perturbation-based iterative refinement
  algorithms to compute the median string.
\newblock {\em Pattern Recognition Letters}, 120:104–111, Apr 2019.

\bibitem[Min15]{minsker2015geometric}
Stanislav Minsker.
\newblock Geometric median and robust estimation in {B}anach spaces.
\newblock {\em Bernoulli}, 21(4):2308--2335, 2015.

\bibitem[Mit09]{mitzenmacher2009survey}
Michael Mitzenmacher.
\newblock A survey of results for deletion channels and related synchronization
  channels.
\newblock {\em Probability Surveys}, 6:1--33, 2009.

\bibitem[MJC00]{martinez2000use}
Carlos~D. {Mart{\'\i}nez-Hinarejos}, Alfons Juan, and Francisco Casacuberta.
\newblock Use of median string for classification.
\newblock In {\em Proceedings 15th International Conference on Pattern
  Recognition. ICPR-2000}, volume~2, pages 903--906. IEEE, 2000.
\newblock \href {https://doi.org/10.1109/ICPR.2000.906220}
  {\path{doi:10.1109/ICPR.2000.906220}}.

\bibitem[MPV14]{MPV14}
Andrew McGregor, Eric Price, and Sofya Vorotnikova.
\newblock Trace reconstruction revisited.
\newblock In {\em Algorithms - {ESA} 2014 - 22th Annual European Symposium,
  Wroclaw, Poland, September 8-10, 2014. Proceedings}, volume 8737 of {\em
  Lecture Notes in Computer Science}, pages 689--700. Springer, 2014.

\bibitem[Nar20]{narayanan2020population}
Shyam Narayanan.
\newblock Population recovery from the deletion channel: Nearly matching trace
  reconstruction bounds.
\newblock {\em arXiv preprint arXiv:2004.06828}, 2020.

\bibitem[NP17]{NP17}
Fedor Nazarov and Yuval Peres.
\newblock Trace reconstruction with exp(o(n\({}^{\mbox{1/3}}\))) samples.
\newblock In {\em Proceedings of the 49th Annual {ACM} {SIGACT} Symposium on
  Theory of Computing, {STOC} 2017}, pages 1042--1046. {ACM}, 2017.

\bibitem[NR03]{NR03}
Fran{\c{c}}ois Nicolas and Eric Rivals.
\newblock Complexities of the centre and median string problems.
\newblock In {\em 14th Annual Symposium on Combinatorial Pattern Matching,
  {CPM} 2003}, pages 315--327, 2003.

\bibitem[OAC{\etalchar{+}}18]{organick2018random}
Lee Organick, Siena~Dumas Ang, Yuan-Jyue Chen, Randolph Lopez, Sergey Yekhanin,
  Konstantin Makarychev, Miklos~Z. Racz, Govinda Kamath, Parikshit Gopalan,
  Bichlien Nguyen, et~al.
\newblock Random access in large-scale dna data storage.
\newblock {\em Nature biotechnology}, 36(3):242, 2018.

\bibitem[PB07]{pedreira2007spatial}
Oscar Pedreira and Nieves~R. Brisaboa.
\newblock Spatial selection of sparse pivots for similarity search in metric
  spaces.
\newblock In {\em International Conference on Current Trends in Theory and
  Practice of Computer Science}, pages 434--445. Springer, 2007.

\bibitem[Pev00]{pevzner2000computational}
Pavel Pevzner.
\newblock {\em Computational molecular biology: an algorithmic approach}.
\newblock MIT press, 2000.

\bibitem[PZ17]{PZ17}
Yuval Peres and Alex Zhai.
\newblock Average-case reconstruction for the deletion channel: Subpolynomially
  many traces suffice.
\newblock In {\em 58th {IEEE} Annual Symposium on Foundations of Computer
  Science, {FOCS} 2017}, pages 228--239. {IEEE} Computer Society, 2017.

\bibitem[RCS13]{RCS13}
Richard~J Roberts, Mauricio~O Carneiro, and Michael~C Schatz.
\newblock The advantages of smrt sequencing.
\newblock {\em Genome Biology}, 14(7):405, 2013.

\bibitem[RMR{\etalchar{+}}17]{RMRAJY17}
Cyrus Rashtchian, Konstantin Makarychev, Mikl{\'{o}}s~Z. R{\'{a}}cz, Siena Ang,
  Djordje Jevdjic, Sergey Yekhanin, Luis Ceze, and Karin Strauss.
\newblock Clustering billions of reads for {DNA} data storage.
\newblock In {\em Advances in Neural Information Processing Systems 30}, pages
  3360--3371. Curran Associates, Inc., 2017.

\bibitem[San75]{Sankoff75}
David Sankoff.
\newblock Minimal mutation trees of sequences.
\newblock {\em SIAM Journal on Applied Mathematics}, 28(1):35--42, 1975.
\newblock \href {https://doi.org/10.1137/0128004} {\path{doi:10.1137/0128004}}.

\bibitem[VS08]{viswanathan2008improved}
Krishnamurthy Viswanathan and Ram Swaminathan.
\newblock Improved string reconstruction over insertion-deletion channels.
\newblock In {\em Proceedings of the nineteenth annual ACM-SIAM symposium on
  Discrete algorithms}, pages 399--408, 2008.

\bibitem[YGM17]{yazdi2017portable}
SM~Hossein~Tabatabaei Yazdi, Ryan Gabrys, and Olgica Milenkovic.
\newblock Portable and error-free dna-based data storage.
\newblock {\em Scientific reports}, 7(1):1--6, 2017.

\end{thebibliography}


\appendix

\section{Proof of Lemma~\ref{lem:unique-alignment}}
\label{app:unique-alignment}
\begin{proof}[Proof of Lemma~\ref{lem:unique-alignment}]
We follow the outline of the proof of Lemma~\ref{lem:editbound-small-alphabet}.
Let $\mathcal{B}$ be the event that there exists an alignment $M$ between $x,y$ with  $\cost(M)\le (1+\delta)pn$ such that $|\mathcal{\tilde{I}}_M|<(1-23\epsilon-\delta)pn$. Moreover let $\mathcal{B}^0$ be the event that there exists an alignment $\bar{M}$ between $x,y$ such that for at least $6\epsilon p n$ indices $i\in \mathcal{\tilde{I}}$, $\cost_{\bar{M}}(x[i-\frac{\epsilon}{p},i+\frac{\epsilon}{p}])=0$.
In the rest of the proof we shall bound the probability of event $\mathcal{B}$. 
%
We assume henceforth that $A^p$ is known (i.e., we condition on $A^p$),
and that $|\mathcal{\tilde{I}}| > (1-5\epsilon)p n$,
which occurs with high probability by Lemma~\ref{lem:tildeI}. 
The probabilistic analysis below will use the randomness of $x$ and of the characters inserted into $y$. 

Our plan is to define some basic events $\mathcal{E}_{S,\bar{S},\Gamma}$ for every two subset $S,\bar{S}\subseteq [n]$ of the same size $\ell=|S|=|\bar{S}|$, representing positions in $x$ and in $y$, respectively, and for every vector $\gamma\in\{-1,1\}^\ell$. 
We then show that the bad events are bounded by these events and event $\mathcal{B}^0$

\begin{align} \label{eq:alignbound1}
  \mathcal{B}
  \subseteq  
  (\bigcup_{S,\bar{S},\gamma \mid \ell= \epsilon pn} \mE_{S,\bar{S},\gamma} )\bigcup \mathcal{B}^0,
\end{align}
and bound the probability of each basic event by
\begin{align} \label{eq:alignbound2}
  \Pr[ \mE_{S,\bar{S},\gamma} ] 
  \leq 
  |\Sigma|^{ -\epsilon\ell/(6p) } .
\end{align} 
The proof will then follow easily using a union bound and a simple calculation.

To define the basic event $\mathcal{E}_{S,\bar{S},\gamma}$, we need some notations. 
Write $S=\{i_1,\dots,i_\ell\}$ in increasing order, 
and similarly $\bar{S}=\{\bar{i}_1,\ldots,\bar{i}_\ell\}$,
and let $\gamma=(\gamma_1,\ldots,\gamma_\ell)$. 
Use these to define $\ell$ blocks in $x$, 
namely, $B_{i_j}=x[i_j,i_j+\frac{2\epsilon}{p}]$,
and $\ell$ blocks in $y$, namely, $\bar{B}_{i_j}=y[\bar{i}_j,\bar{i}_j+\frac{2\epsilon}{p}+\gamma_j]$. 
Note that here $i_j$ and $\bar{i}_j$ are at the beginning of their blocks 
(while in Lemma~\ref{lem:editbound-small-alphabet} they were at the middle),
and that the length of $B_{i_j}$ and of $\bar{B}_{i_j}$ might differ
(they are $1+\frac{2\epsilon}{p}$ and $\frac{2\epsilon}{p}+1+\gamma_j$, respectively).
Now define $\mE_{S,\bar{S},\gamma}$ to be the event that
(i) $\tilde{S}\eqdef \{i_1+\frac{\epsilon}{p},\cdots,i_\ell+\frac{\epsilon}{p}\}\subseteq \tI$;%
\footnote{This implies that the blocks $B_{i_1},\dots,B_{i_\ell}$ in $x$ are disjoint.}
(ii) the blocks $\bar{B}_{\bar{}i_1},\dots,\bar{B}_{\bar{i}_\ell}$ in $y$ are disjoint;
(iii) $A^p(i_j)\neq \bar{i}_j$ 
or $A^p(i_j+\frac{2\epsilon}{p}) \neq \bar{i}_j+\frac{2\epsilon}{p}+\gamma_j$
(possibly both);
and
(iv) $\ED(B_{i_j},\bar{B}_{i_j})=1$. 
Notice that conditions (i), (ii) and (iii) actually depend only on $A^p$,
and thus can be viewed as restrictions on the choice of $S$, $\bar{S}$ and $\gamma$ in~\eqref{eq:alignbound1};
with this viewpoint in mind, we can simply write 
\[
  \mE_{S,\bar{S},\gamma} 
  \eqdef 
  \set{ \ED(B_{i_1},\bar{B}_{i_1})= 1,\ldots, \ED(B_{i_\ell},\bar{B}_{i_\ell})= 1 }.
\]

We proceed to prove~\eqref{eq:alignbound1}. 
Suppose event $\mathcal{B}$ occurs, i.e., there exists an alignment $M$ between $x,y$ with $\cost(M)\le (1+\delta)pn$ such that $|\mathcal{\tilde{I}}_M|<(1-23\epsilon-\delta)pn$.
Next first assume the case where there exists at least $6\epsilon p n$ indices $i\in \mathcal{\tilde{I}}$ such that $\cost_M(x[i-\frac{\epsilon}{p},i+\frac{\epsilon}{p}])=0$. But then trivially event $\mathcal{\bar{B}}$ is satisfied and thus we prove~\eqref{eq:alignbound1}. Hence from now on wards we assume the case where event $\mathcal{\bar{B}}$
is not satisfied.                        
Note for each position $i\in\tI$, the intervals $[i-\frac{\epsilon}{p},i+\frac{\epsilon}{p}]$ in $x$ are disjoint (by definition of $\tI$), 
and thus by Lemma~\ref{lem:subcost}, 
\[
  \sum_{i\in\mathcal{\tilde{I}}}  
    \cost_M([i-\tfrac{\epsilon}{p},i+\tfrac{\epsilon}{p}]) 
  \le \cost(M)
  \le (1+\delta)p n.
\]

The number of summands here is $|\tI| \geq (1-5\epsilon)p n$. 
Moreover, as $\mathcal{\bar{B}}$ is not satisfied, 
the number of indices in $\mathcal{\tilde{I}}$ such that the associated block in $x$ has cost $0$ 
is at most $6\epsilon p n$, 
and thus at least $(1-11\epsilon) p n$ summands contribute cost at least $1$. 
Let $A$ be the set of indices $i$ that contribute cost $1$ to the above summation. 
Then 
\[
  \sum_{i\in\mathcal{\tilde{I}}}  
    \cost_M([i-\tfrac{\epsilon}{p},i+\tfrac{\epsilon}{p}]) 
  \ge |A|\cdot 1 + ((1-11 \epsilon) p n - |A|)\cdot 2
  = (2-22\epsilon) p n - |A|.
\]

Thus $|A|\ge (1-22\epsilon-\delta)p n$. Let $\tilde{S}=A\setminus \mathcal{\tilde{I}}_M$ and by our assumption $|\tilde{S}|\ge \epsilon pn$.
%
%
To get the exact size $|\tilde{S}|=\epsilon pn$, 
we can replace $\tilde{S}$ with an arbitrary subset of it of the exact size. 

Now define $S=\set{i-\frac{\epsilon}{p}\mid i\in \tilde{S}}$, $\bar S = \set{\min_{k\in\{0,1\}} M(i+k) \mid i\in S; M(i+k)\neq \bot}$.
For each $i\in S$ define $N_i=\max_{k\in[i,i+\frac{2\epsilon}{p}]; M(k)\neq \bot}M(k)-\min_{k\in[i,i+\frac{2\epsilon}{p}]; M(k)\neq \bot}M(k)$.
Define
$\gamma=\{N_i-\frac{2\epsilon}{p}\mid i\in S\}$. 
Let us verify that the event $\mE_{S,\bar{S},\gamma}$ holds. 
Indeed, $\tilde{S}\subset \mathcal{\tilde{I}}$. Moreover as for each $i\in S$, 
$\cost_M[i, i+\tfrac{2\epsilon}{p}] = 1$,
which implies $ N_i\in\{\frac{2\epsilon}{p}-1,\frac{2\epsilon}{p}+1\}$. 
Moreover, the block $x[i, i+\tfrac{2\epsilon}{p}]$ in $x$ 
is at distance at most 1 from the corresponding block in $y$, 
and these blocks in $y$ are disjoint. Also for each $i\in S$; $i+\frac{\epsilon}{p}\in \mathcal{\tilde{I}}\setminus \mathcal{\tilde{I}}_M$. Thus either $A^p(i)\neq \min_{k\in[i,i+\frac{2\epsilon}{p}]}\{M(k)\mid M(k)\neq \bot\}$. In this case $\bar{i}\neq A^p(i)$ and otherwise $A^p(i+\frac{2\epsilon}{p})\neq \max_{k\in[i,i+\frac{2\epsilon}{p}]}\{M(k)\mid M(k)\neq \bot\}$ and thus $\bar{i}+\frac{2\epsilon}{p}+\gamma_j\neq A^p(i+\frac{2\epsilon}{p})$ 
This completes the proof of~\eqref{eq:alignbound1}. 

Next, we prove~\eqref{eq:alignbound2}. 
Fix $S,\bar{S}\subset [n]$ of the same size $\ell$ and $\gamma\in\{-1,1\}^\ell$.
Assume requirements (i), (ii) and (iii) hold (otherwise, the probability is 0). 
Consider now a given $j\in[\ell]$. 
Let $B_{i_j}=x[i_j,i_j+\frac{2\epsilon}{p}]$ and $\bar{B}_{i_j}=y[\bar{i}_j,\bar{i}_j+\frac{2\epsilon}{p}+\gamma_j]$ be the corresponding blocks in $x$ and $y$ 
and let $\gamma_j$ be the corresponding value from $\gamma$.

\textbf{Case $\gamma_j=-1$:} 
Notice in this case the requirement $\ED(B_{i_j},\bar{B}_{i_j})= 1$ implies 
that block $\bar{B}_{i_j}$ is obtained by one deletion from $B_{i_j}$. 
We further divide this into two subcases.
The first subcase is when that deletion occurs in the "middle" interval $[i_j+\frac{\epsilon}{2p}, i_j+\frac{3\epsilon}{2p}]$ (including the boundary points). 
This implies $\forall t\in \{0,\dots,\frac{2\epsilon}{p}\}\cup\{\frac{3\epsilon}{2p}+1,\dots,\frac{2\epsilon}{p}\}$, (a) the prefix of $B_{i_j}$ (i.e., $x[i_j,i_j+\frac{\epsilon}{2p}-1]$) satisfies $x[i_j+t]=y[\bar{i}_j+t]$ as the deletion occurs after this prefix and (b) the suffix of $B_{i_j}$ (i.e., $x[i_j+\frac{3\epsilon}{2p}+1,i_j+\frac{2\epsilon}{p}]$) satisfies $x[i_j+t]=y[\bar{i}_j+t-1]$ as the deletion occurs before this suffix.
Now again $x$ and $y$ are random but correlated through $A^p$; 
thus $x[i_j+t]$ and $y[\bar{i}_j+t]$ (or $y[\bar{i}_j+t-1]$) are chosen independently at random unless $A^p$ aligns their positions, i.e., $A^p(i_j+t)=\bar{i}_j+t$ (or $=\bar{i}_j+t-1$). 
This cannot happen for both $t=0$ and $t=\frac{2\epsilon}{p}$ 
as we assumed that condition (iii) is satisfied. 
If $A^p(i_j+t)\neq \bar{i}_j+t$ for $t=0$,
then the same holds for all $t=1,\cdots, \frac{\epsilon}{2p}-1$,
because $A^p$ has no edit operation in the interval $[i_j,i_j+\frac{\epsilon}{2p}-1]$,
and thus
$$A^p(i_j+t) = A^p(i_j)+t \neq \bar{i}_j+t.$$
If $A^p(i_j+t)\neq \bar{i}_j+t-1$ for $t=\frac{2\epsilon}{p}$,
then by a similar argument, the same inequality holds for all  $t=\frac{3\epsilon}{p}+1,\dots,\frac{2\epsilon}{p}-1$.
Either way, we conclude that for the event $\ED(B_{i_j},\bar{B}_{i_j})= 1$ to occur, 
at least $\frac{\epsilon}{2p}$ constraints of the form $x[i_j+t]=y[\bar{i}_j+t]$ must be satisfied, 
where these two positions are not aligned by $A^p$, 
and thus these two symbols are chosen independently at random. 

Next, consider the subcase of a deletion outside the "middle" interval of $B_{i_j}$,
i.e., in $[i_j,i_j+\frac{\epsilon}{2p}-1]\cup [i_j+\frac{3\epsilon}{2p}+1,i_j+\frac{2\epsilon}{p}]$. 
Let us first assume that the deletion occurs from interval $[i_j,i_j+\frac{\epsilon}{2p}-1]$.
This implies for all $t\in[\frac{\epsilon}{2p},\frac{3\epsilon}{2p}]$, 
$x[i_j+t]=y[\bar{i}_j+t-1]$. 
Again we can claim the two symbols $x[i_j+t]$ and $y[\bar{i}_j+t-1]$ are chosen independently at random unless $A^p$ aligns their positions, $A^p(i_j+t)=\bar{i}_j+t-1$. Now observe that this cannot happen for both $t=\frac{\epsilon}{p}-1$ and $t=\frac{\epsilon}{p}+1$, because in that case, $A^p(i_j+\frac{\epsilon}{p}+1)-A^p(i_j+\frac{\epsilon}{p}-1)=\frac{\epsilon}{p}-(\frac{\epsilon}{p}-2)=2$; however $i_j+\frac{\epsilon}{p}\in \mathcal{\tilde{I}}$ implies that $A^p$ has exactly one edit operation (insertion or deletion) in the interval $[i_j+\frac{\epsilon}{p}-1,i_j+\frac{\epsilon}{p}+1]$ (and not in the boundary points), thus $A^p(i_j+\frac{\epsilon}{p}+1)-A^p(i_j+\frac{\epsilon}{p}-1)\in \{1,3\}$. Assume first $A^p(i_j+t)\neq \bar{i}_j+t-1$ for $t=\frac{\epsilon}{p}+1$. 
Then the same must hold also for all $t=\frac{\epsilon}{p}+2,\cdots,\frac{3\epsilon}{2p}$, 
because $i_j+\frac{\epsilon}{p}\in \mathcal{\tilde{I}}$ 
and this implies that $A^p$ has no edit operation in the interval $[i_j+\frac{\epsilon}{p}+1,i_j+\frac{3\epsilon}{2p}]$, 
thus 
$$A^p(i_j+t)=A^p(i_j+\tfrac{\epsilon}{p}+1)+(t-\tfrac{\epsilon}{p}-1)\neq (\bar{i}_j+\tfrac{\epsilon}{p})+(t-\tfrac{\epsilon}{p}-1) = \bar{i}_j+t-1. $$
We can apply a similar argument for $t=\frac{\epsilon}{p}-1$, 
and again we conclude that, 
for the event $\ED(B_{i_j},\bar{B}_{i_j})= 1$ to occur,
at least $\frac{\epsilon}{2p}$ constraints of the form $x[i_j+t]=y[\bar{i}_j+t-1]$ must be satisfied where these two positions are not aligned by $A^p$, 
and thus these two symbols are chosen independently at random. For the case where the deletion occurs in the interval $[i_j+\frac{3\epsilon}{2p}+1,i_j+\frac{2\epsilon}{p}]$, following a similar argument we can show if $\ED(B_{i_j},\bar{B}_{i_j})= 1$ then for each $t\in[\frac{\epsilon}{2p},\frac{3\epsilon}{2p}]$, 
$x[i_j+t]=y[\bar{i}_j+t]$ where again these two positions are not aligned by $A^p$ and hence the two symbols are chosen independently at random.

\textbf{Case $\gamma_j=1$:}
We can show that if $\ED(B_{i_j},\bar{B}_{i_j})= 1$ then at least $\frac{\epsilon}{2p}$ requirements of the form $x[i_j+t]=y[\bar{i}_j+t]$ (or $=y[\bar{i}_j+t+1]$) must be satisfied where these two positions are not aligned by $A^p$, and thus these two symbols are chosen independently at random. 

The above argument applies to every $j\in[\ell]$,
yielding overall at least $\ell\cdot \tfrac{\epsilon}{2p}$ requirements of the form  
$x[i_j + t] = y[\bar{i}_j+t']$ (where $t'\in\{t-1,t,t+1\}$),
where these two symbols are chosen independently at random. 
Observe that each $y[\bar{i}_j+t']$ is 
either a character $x[k]$ (for $k$ arising from $A^p$) or completely independent. 
Since each character of $x$ appears in at most $2$ requirements (once on each side),
we can extract a subset of at one-third of the requirements
such that the positions in $x$ appearing there are all distinct,
and thus the events are independent.%
We overall obtain at least $\tfrac13 \ell\cdot \tfrac{\epsilon}{2p}$ requirements,
each occurring independently with probability $1/|\Sigma|$,
and thus
\[
  \Pr[ \mE_{S,\bar{S},\gamma} ] 
  \leq 
  |\Sigma|^{ -\epsilon\ell/(6p) } .
\]

Finally, we are in position to prove the claim. 
By Lemma~\ref{lem:zerocostbound}, 
$\Pr[\mathcal{B}^0]\le 2e^{-\epsilon^2 p^2 n/2}$.
Combining \eqref{eq:editbound1} and \eqref{eq:editbound2} and a union bound
\begin{align*}
  \Pr[ \mathcal{B} ]
  &\leq \binom{n}{\ell}^2 \cdot 2^\ell \cdot |\Sigma|^{ -\epsilon\ell/(6p) } +2e^{-\epsilon^2 p^2 n/2}
  \\
  &\leq \Big( \frac{n e}{\ell} \Big)^{2\ell}\cdot 2^\ell \cdot 2^{ -\epsilon\ell/(6p) } +2e^{-\epsilon^2 p^2 n/2}
  \\
  &\leq \Big( \frac{e}{\epsilon p } \Big)^{2\epsilon pn} \cdot 2^{\epsilon pn}\cdot 2^{-\epsilon^2 n/6 } +2e^{-\epsilon^2 p^2 n/2}\\
  &\leq (p^2)^{-2\epsilon pn} \cdot 2^{2\epsilon pn}\cdot 2^{ -\epsilon^2n/6}+2e^{-\epsilon^2 p^2 n/2}
  \\
  &\leq (p)^{-6\epsilon pn} \cdot 2^{ -\epsilon (42p\log(1/p)) n/6 } +2e^{-\epsilon^2 p^2 n/2}
  \\
  &\leq p^{-6\epsilon pn +7\epsilon pn}+2e^{-\epsilon^2 p^2 n/2}
  \\
  &\leq p^{ \epsilon pn}+2e^{-\epsilon^2 p^2 n/2}.
\end{align*}
Recall that this was all conditioned on $A^p$,
which had error probability at most $e^{-{\epsilon^2 p^2 n}/{2}}$,
and now Lemma~\ref{lem:unique-alignment} follows by a union bound. 
\end{proof}

\end{document}